\g@addto@macro{\UrlBreaks}{\UrlOrds}
\newcommand{\eqnum}{\leavevmode\hfill\refstepcounter{equation}\textup{(\theequation)}}
\def\R{\mathbb{R}}
\newcommand{\iid}[0]{i.i.d.\xspace}
\newcommand{\norm}[1]{\lVert{#1}\rVert}
\newcommand{\PP}[1]{\mathbb{P}\left\{{#1}\right\}} 
\newcommand{\EE}[1]{\mathbb{E}\left[{#1}\right]} 
\def\R{\mathbb{R}}
\def\Z{\mathbf{Z}}
\def\W{\mathbf{W}}
\newcommand\eqd{\stackrel{\mathclap{\normalfont\mbox{d}}}{=}}
\newcommand{\X}{\mathbf{X}}
\newcommand{\Y}{\mathbf{Y}}
\newcommand{\bma}{\mathbf{a}}
\newcommand{\bmb}{\mathbf{b}}
\newcommand{\bms}{\mathbf{s}}
\newcommand{\bmbeta}{\bm{\beta}}
\newcommand{\bmmu}{\bm{\mu}}
\newcommand{\bmeta}{\bm{\eta}}
\newcommand{\bmSigma}{\bm{\Sigma}}
\newcommand{\bmepsilon}{\bm{\epsilon}}
\newcommand{\bmTheta}{\bm{\Theta}}
\newcommand{\bmvarepsilon}{\bm{\varepsilon}}
\DeclareMathOperator{\diag}{diag}
\newcommand{\Xp}{\widetilde{\X}}
\newcommand{\indep}{\perp \!\!\! \perp}
\newcommand{\iidsim}{\stackrel{\mathrm{iid}}{\sim}}
\newtheorem{theorem}{Theorem}
\newtheorem{lemma}{Lemma}
\newtheorem{definition}{Definition}
\title{FDR Controlled Multiple Testing for Union Null Hypotheses: A Knockoff-based Approach}
\author{Ran Dai$^{*}$\email{ran.dai@unmc.edu} and
Cheng Zheng$^{**}$\email{cheng.zheng@unmc.edu} \\
Department of Biostatistics, University of Nebraska Medical Center, Omaha, Nebraska, U.S.A.}
\author{Ran Dai\thanks{Department of Biostatistics, University of Nebraska Medical Center, Omaha, Nebraska, U.S.A., ran.dai@unmc.edu},  Cheng Zheng\thanks{Department of Biostatistics, University of Nebraska Medical Center, Omaha, Nebraska, U.S.A., cheng.zheng@unmc.edu}}
\begin{document}

\maketitle

\begin{abstract}
False discovery rate (FDR) controlling procedures provide important statistical guarantees for the replicability in signal identification based on multiple hypotheses testing. In many fields of study, FDR controlling procedures are used in high-dimensional (HD) analyses to discover features that are truly associated with the outcome. In some recent applications, data on the same set of candidate features are independently collected in multiple different studies. For example, gene expression data are collected at different facilities and with different cohorts, to identify the genetic biomarkers of multiple types of cancers. These studies provide us opportunities to identify signals by considering information from different sources (with potential heterogeneity) jointly. This paper is about how to provide FDR control guarantees for the tests of union null hypotheses of conditional independence. We present a knockoff-based variable selection method (\textit{Simultaneous knockoffs}) to identify mutual signals from multiple independent data sets, providing exact FDR control guarantees under finite sample settings. This method can work with very general model settings and test statistics. We demonstrate the performance of this method with extensive numerical studies and two real data examples.

\end{abstract}

\section{Introduction}
\label{sec:intro}

There is a pressing need in making discoveries by analyzing information from multiple sources jointly. With recent advances in scientific research, data on the same set of candidate features are often collected independently from multiple sources. For example, social scientists collect data on the economic and socioeconomic status of people from different community groups. In genome-wide association studies (GWAS), associations of genome features with multiple different outcomes of interest are studied in multiple experiments \citep{uffelmann2021}. These data motivate us to identify mutual signals from multiple experiments for purposes like reproducibility research or mediator identification. This paper focuses on how to identify mutual signals from multiple independent studies and provide variable selection accuracy guarantees with mild design and model assumptions.

Now we formulate the mutual signal identification problem in statistical terms. Assume we have data from $K$ independent experiments and denote $[K] = \{1,\cdots,K\}$. Within the $k$-th experiment, $(Y^k_{i},X^k_{i1},\cdots,X^k_{ip}) \iidsim \mathcal{D}_k$, $i=1,\cdots,n_k$. In our setting, the outcome variables $Y^1,\cdots, Y^K$ can be of different data types and $(X_1^k,\cdots, X_p^k)$ can have different distributions among the different experiments. Here we denote $Y^k$s and $X^k$s as continuous variables throughout the paper for the simplicity of notation. In practice, they can be of other data types (continuous/count/nominal/ordinal/mixed). For example, $Y^k$s can be different disease outcomes and $\X^k$s can be gene expression data measured on different scales. Define $H_{0j}^k$ as the null hypothesis indicating the $j$-th feature not being a signal in the $k$-th experiment (i.e. $X_j^k \indep Y^k | X^k_{-j}$ where $X^k_{-j} := \{X^k_1,\cdots,X^k_{p}\} \setminus X^k_j$), and denote $\mathcal{H}^k = \{j\in [p]: H_{0j}^k ~\text{is true}\}$, where $[p]:=\{1,\cdots,p\}$. 
Instead of testing the $H_{0j}^k$s, we are interested in testing the union null hypotheses 
\begin{equation}\label{eqn:H0j}
    H_{0j}=\cup_{k=1}^K H_{0j}^k, ~~\text{for}~ j \in [p].
\end{equation} 
\begin{equation}\label{eqn:H}
\text{We define ~}    \mathcal{S} = \{j\in [p]: H_{0j} ~\text{is false}\}~ \text{and  $\mathcal{H} = \mathcal{S}^c = \cup_{k=1}^K \mathcal{H}^k = \{j\in [p]: H_{0j} ~\text{is true}\}$}.
\end{equation} 
We aim at developing a selection procedure returning a selection set $\widehat{\mathcal{S}} \subseteq [p]$ with a controlled false discovery rate (FDR), which is the expected false discovery proportion (FDP):  \begin{equation}\label{eqn:fdr}
\text{FDR}(\widehat{\mathcal{S}}) = \EE{\text{FDP}(\widehat{\mathcal{S}})}= \EE{\frac{|\widehat{\mathcal{S}} \cap \mathcal{H}|}{|\widehat{\mathcal{S}}|\vee 1}}.
\end{equation}
To begin, we give some examples to motivate our method.

\subsection{Examples}\label{sec:example}

The problem of testing multiple union null hypotheses is related to many important scientific areas, for example, the reproducibility analysis in GWAS \citep{bogomolov2013,heller2014,heller2014b}, comparative research in genomics studies \citep{rittschof2014}, and mediation analysis \citep{sampson2018}. Below, we give several motivating examples that can be considered as problems of identifying the mutual signal set $\widehat{\mathcal{S}}$. 

\subsubsection{Repeatability research}
In some fields of biology, experimental results are required to agree with each other under conditions that include the same measurement procedure, same operators, same measuring system, same operating conditions, same location, and replicate measurements on the same or similar objects \citep{plant2020reproducibility, ioannidis2009repeatability}. 
It aims at identifying signals in repeated experiments.
 Mathematically, $K$ independent data sets $(\Y^1,\X^1),\cdots, (\Y^K,\X^K)$ are collected, where $\Y^{k}\in \R^{n_k}$ and $\X^k\in \R^{n_k\times p}$ for $k \in [K]$, and $(Y^k_{i},X^k_{i1},\cdots,X^k_{ip})\iidsim \mathcal{D}$ for all $k \in [K]$ and $i \in [n_k]$. To identify the $j$-th feature as a signal, we test the union null hypothesis $H_{0j}$ as defined in equation \eqref{eqn:H0j}. 

As a remark, for repeatability research, when we assume $(Y^k_{i},X^k_{i1},\cdots,X^k_{ip})$ are $\iid$ for all $k \in [K]$ and $i \in n_k$, the union null hypotheses set $\mathcal{H}$ is identical to the null hypotheses sets from the individual experiments for all $k \in [K]$. In this case, we can alternatively pool the data and do analyses on the pooled data set to improve power \citep{nci2007replicating}. However, in many cases, this is not practical because of privacy and data ownership issues.  

\subsubsection{Mutual features across populations with heterogeneity}

When identifying risk factors, designing new treatments, or making policies, there is a need to guarantee consistent results across heterogeneous sub-populations. For example, reproducibility analysis aims at identifying findings independently discovered across multiple experiments. These experiments could be slightly different because they are conducted in different institutions, by different experimenters, or at different times. For example, in genetic studies, the association between single nucleotide polymorphisms (SNPs) and phenotype is recognized as a scientific finding only if it has been discovered from different independent studies with the same features and different cohorts \citep{heller2014}.

To form the conditional independence tests for this case, we have $K$ independent studies, where 
$(Y^k_{i},X^k_{i1},\cdots,X^k_{ip}) \iidsim \mathcal{D}_k, ~\text{for}~ i \in [n_k]$, for each $k \in [K]$, and we test the conditional independence $Y^k \indep X^k_{j} | X^k_{-j}, ~\text{for}~ k \in [K] ~\text{and}~ j \in [p]$. The $j$-th feature is a mutual signal if and only if the union null hypothesis $H_{0j}$ does not hold. 

As a remark, \cite{heller2014} proposed a \textit{repFDR} method, which also provides the FDR control guarantees on testing multiple union null hypotheses. This method is based on the Benjamini–Hochberg (BHq) procedure \citep{benjamini1995}; and it assumes that the vector of test statistics for hypotheses in each study are jointly independent or are positive regression dependent (PRDS) on the subset of true null hypotheses. This assumption does not hold in general in our settings. There is a modification of this method that allows for an arbitrary dependence, however, it is known to be very conservative \citep{benjamini2001}.

\subsubsection{High-dimensional mediator selection}
In many scientific fields, it is important to identify features that are associated with multiple responses. In particular, mediators can be discovered from simultaneous feature-treatment and feature-outcome associations. For example, suppose we aim at identifying gene expression mediators that are both associated with the treatment and the risk of a certain disease. To do this, we jointly use information from two independent studies, one on the associations between the gene expressions and the treatment, the other on the association between the gene expressions and the risk of the disease with the treatment being fixed. The selection of mediators from high-dimensional gene expression features can be framed as a problem of testing the union null hypotheses with $K=2$. In particular, 
$(Y^k_{i},X^k_{i1},\cdots,X^k_{ip}) \iidsim \mathcal{D}_k, ~\text{for}~ k=1,2$,
where $Y^1$ and $Y^2$ are the treatment and the outcome respectively. Notice that in this example the true signal sets for $Y^1$ and $Y^2$ are not necessarily identical. We test the conditional independence
$Y^k \indep X^k_{j} | X^k_{-j}, ~\text{for}~ k=1,2, ~\text{and}~ j\in[p]$.
The $j$-th feature is a mediator if and only if the union null hypothesis $H_{0j}$ does not hold.

\subsection{Prior work}

\paragraph{Current advance in FDR control for identifying simultaneous signals}

For reproducibility research, Bogomolov et al. proposed methods based on the BH procedure by selecting features that are commonly selected among all the experiments \citep{heller2014, bogomolov2013, bogomolov2018}. There are multiple works based on computing the local FDR as the optimal scalar summary of the multivariate test statistics \citep{chi2008, heller2014b}. Recently, \citet{xiang2019} presented the signal classification problem for multiple sequences of multiple tests, where the identification of the simultaneous signal is a special case, and \citet{zhao2020} proposed a nonparametric method for asymptotic FDR control in identifying simultaneous signals. However, all methods above assume not only the independence of the experiments but also the independence (or PRDS) of the p-values for the features within each experiment, which is not realistic in many complex high-dimensional data applications, such as the GWAS and other omics data.

\paragraph{Knockoff-based methods}
For multiple testing problems within a single experiment, there are recent advances in relaxing the assumption of independence among the features. Powerful knockoff-based methods have been developed for exact FDR control in selecting features with conditional associations with the response \citep{barber2015, candes2018}. The original knockoff filter proposed by \citet{barber2015} works on linear models assuming no knowledge of the design of covariates, the signal amplitude, or the noise level. It achieves exact FDR control under finite sample settings. It is also extended to work with high-dimensional settings \citep{barber2019}. Later \citet{candes2018} proposed the Model-X knockoff method, extending the knockoff filter to achieve exact FDR control for nonlinear models. This method allows the conditional distribution of the response to be arbitrary and completely unknown but requires the distribution of $\X$ to be known. \citet{barber2020} further showed that the Model-X knockoff method is robust against errors in the estimation of the distribution of $\X$. In addition, \citet{huang2020} relaxed the assumptions of the Model-X knockoff method so that the FDR can be controlled as long as the distribution of $\X$ is known up to a parametric model. There are also abundant publications on the construction of knockoffs with an approximated distribution of $\X$. \citet{romano2019} developed a Deep knockoff machine using deep generative models. \citet{liu2019} developed a Model-X generating method using deep latent variable models. More recently, \citet{bates2020} proposed an efficient general metropolized knockoff sampler. \citet{spector2020} proposed to construct knockoffs by minimizing the reconstructability of the features. Knockoff-based methods have also been extended to test the intersection of null hypotheses. In this direction, group and multitask knockoff methods \citep{dai2016}, and prototype group knockoff methods \citep{chen2020} have been proposed. Variants of knockoff methods have become useful tools in scientific research. For example, to identify the variations across the whole genome associated with a disease, \citet{sesia2019} developed a hidden Markov model knockoff method for FDR control in GWAS.

\subsection{Our contributions}
{\color{black} In this paper, we propose a knockoff-based procedure to establish exact FDR control in selecting mutual signals from multiple conditional independence tests, assuming very general conditional models. The main contributions of this paper are summarized below: 
\begin{enumerate}
    \item We construct a knockoff-based procedure for testing the union null hypotheses for feature selection, namely the \textit{Simultaneous knockoffs}. This procedure can work on general conditional dependence models $Y|\X$ and data structures in $\X$.
    \item We prove that the \textit{Simultaneous knockoff} method can lead to exact FDR control in testing multiple union null hypotheses for feature selection under finite sample settings.
    \item We show that a broad class of filter statistics can be used for this method, and give general recipes for generating different powerful statistics.
    \item We demonstrate the FDR control property and the power of our method with extensive simulation settings. We also illustrate the application with two real data examples.
\end{enumerate}
} 

The rest of the paper is organized as follows. In Section \ref{sec:meth}, we present the \textit{Simultaneous knockoff} framework. In Section \ref{sec:theo}, we give the theoretical guarantees for exact FDR control of the \textit{Simultaneous knockoff} method in finite sample settings and the robustness result for the potential misspecification of the distribution of $\X$. In Section \ref{sec:simulation}, we show the empirical performance of the \textit{Simultaneous knockoff} method under different model assumptions and data structures. Finally, in Section \ref{sec:application}, we apply the \textit{Simultaneous knockoff} procedure to two real data examples.


\section{Methods}
\label{sec:meth}
In this section, we present the \textit{Simultaneous knockoff} procedure. This procedure can be paired with both the Fixed-X knockoffs \citep{barber2015} and the Model-X knockoffs \citep{candes2018} to allow for very general model settings and various data structures in real data applications. Before presenting the \textit{Simultaneous knockoff} method, we briefly review the Fixed-X and the Model-X knockoff methods.

\subsection{The Fixed-X and the Model-X knockoff procedures}

The high-level idea behind the knockoff methods is to construct a ``knockoff" copy of the covariates, retaining their inner structures. Unlike the ``true" covariates, the knockoff copies are created independent of the response. These knockoff variables are then mixed into the model to monitor the FDP during the selection. Heuristically speaking, if one variable is a true signal, it is more likely to be selected than its knockoff copy, otherwise, it is equally likely to be selected as its knockoff copy. Therefore, by counting the number of knockoff variables entering the selected set, the FDP can be (over) estimated. 

\paragraph{Model settings} For the Fixed-X knockoff method, the setup is a decentralized linear model, $\Y =\X\bmbeta + \bmvarepsilon, ~\text{where $\Y \in \R^{n}$, $\X \in \R^{n\times p}$, $\bmbeta \in \R^p$ and $\bmvarepsilon \in \R^{n} \sim \mathcal{N}(0,\sigma^2 I_n).$}$
This method has weak assumptions on the covariates $\X$,  the amplitudes of the
unknown regression coefficients $\beta$, and does not require the noise level ($\sigma^2$) to be known. The Model-X knockoff method works on more general conditional model settings. It does not require the dependence of $Y|\X$ to be known by assuming the knowledge of the distribution of $\X$ (or if the distribution of $\X$ can be well approximated). Therefore, it can work with many more models such as generalized linear models (GLMs) or nonlinear models.

\paragraph{Algorithm} There are four main steps in the knockoff procedure listed as below.

\begin{itemize}
    \item \textit{Knockoff construction.} A set of knockoff features $\Xp = [\widetilde{\X}_1 \ \dots \ \widetilde{\X}_p]$ are constructed in this step. For the Fixed-X knockoff construction, $\Xp$ needs to satisfy that, for some vector $\bms\geq 0$,
\begin{equation}\label{eqn:knockoff_condition}
\Xp^\top \Xp = \X^\top \X, \quad \Xp^\top \X = \X^\top \X - \diag\{\bms\}.\end{equation} 
For the Model-X knockoff construction, $\Xp$ needs to satisfy the pairwise exchangeability condition:
\begin{equation}\label{eqn:modelX_knockoff_condition}
    [ \X \ \Xp ]_{\text{Swap(j)}} \eqd [ \X \ \Xp ]~\text{and}~ \Xp \perp \!\!\! \perp \Y | \X \quad \text{for all}~ j \in [p], 
\end{equation}  where $\text{Swap(j)}$ stands for exchanging the $j$-th column and the $(j+p)$-th column of $[ \X \ \Xp ]$, and $A \eqd B$ indicates $A$ and $B$ are identical in distribution. $\Xp$ can be generated using various algorithms \citep{barber2015, romano2019,liu2019,bates2020,spector2020}. More knockoff construction details are reviewed in Web Appendix A.1.

\item \textit{Test statistics calculation.} Appropriate test statistics need to be calculated for the features $\X$ and their knockoff copies $\Xp$. For the Fixed-X knockoffs, the statistics $[\Z,\widetilde{\Z}] \in \R^{2p}$ needs to be a function of $([\X \widetilde{\X}]^\top[\X \widetilde{\X}], [\X \widetilde{\X}]^\top \Y)$. For the Model-X knockoffs, $[\Z,\widetilde{\Z}]$ needs to be a function of $([ \X \ \Xp ], \Y)$, such that if we swap features $\X_j$ with its corresponding knockoffs $\widetilde{\X}_j$, then the statistics $Z_j$ and $\widetilde{Z}_j$ get swapped. Examples of the $[\Z,\widetilde{\Z}]$ statistics are provided in Web Appendix A.2. 

\item \textit{Filter statistics calculation.} We construct the filter statistics $\W \in \R^p$ such that $W_j = f(Z_j, \widetilde{Z}_j)$ where $f$ is an antisymmetric function, i.e. $f(x,y) = - f(y,x)$. Without loss of generality, we further let $f(x,y)>0$ when $x>y$. If $X_j$ is a signal, we would expect $\PP{Z_j > \widetilde{Z}_j} > 0.5$, while if $X_j$ is not a signal, we would expect $Z_j$ and $\widetilde{Z}_j$ to have the same distribution. Thus, we expect $W_j$ to have a positive sign with $>0.5$ probability if $X_j$ is a signal and with $0.5$ probability if $X_j$ is not a signal. This allows us to estimate the FDP in $\widehat{S}(t):= \{j: W_j \geq t\}$ as 
\begin{equation}\label{eqn:fdp} \widehat{\text{FDP}}(t) = \frac{\# W_j\leq -t}{(\# W_j \geq t) \vee 1}.\end{equation}

\item \textit{Threshold calculation and feature selection.} With the knockoff filter, we select
\begin{equation}\label{eqn:knockoff}
   \widehat{S} = \{j: W_j \geq \tau\},~\text{where}~ \tau = \min\left\{t \in \mathcal{W}_+: \frac{\#\{j:W_j \leq -t\}}{\#\{j:W_j\geq t\}\vee 1}\leq q\right\} .
\end{equation}
With a more conservative knockoff+ filter, we select
\begin{equation}\label{eqn:knockoff+}
   \widehat{S}_+ = \{j: W_j \geq \tau_+\}, ~\text{where}~ \tau_+ = \min\left\{t \in \mathcal{W}_+: \frac{1+\#\{j:W_j \leq -t\}}{\#\{j:W_j\geq t\}\vee 1}\leq q\right\} .
\end{equation}
Here $q$ is the target FDR level and $\mathcal{W}_+ = \{|W_j|:|W_j|>0\}$. 
\end{itemize}

\subsection{Simultaneous knockoff framework}

In this section, we propose the general \textit{Simultaneous knockoff} framework, which enables us to use the knockoff approach for FDR control in testing the union null hypotheses of conditional independence. This approach enjoys very general model assumptions and exact FDR control guarantees in finite sample settings. 

\subsubsection{Preliminaries}
One na\"{i}ve idea to identify mutual signals in $K$ experiments is to select the intersection set of the variables selected from the individual experiments. However, this method cannot control the FDR (see more details in Section \ref{sec:simulation}). Therefore, we alternatively aim at constructing valid filter statistics $\W$ to allow the estimation of the FDP in our multiple testing of the union null hypotheses using equation \eqref{eqn:fdp}. We establish a general recipe to construct such $\W$s with only the summary statistics that can be calculated using the Fixed-X and the Model-X knockoff methods from single experiments. To begin, we give several definitions.

\begin{definition}
(Swapping) For a set $S \subseteq [p]$, and for a vector $\mathbf{V} = (V_1, \cdots, V_{2p}) \in \R^{2p}$, $\mathbf{V}_{\textnormal{Swap}(S)}$ indicates the swapping of $V_j$ with $V_{j+p}$ for all $j \in S$. 
\end{definition}

\begin{definition}\label{def:flip-sign}
(Flip sign function) A function $f: \R^{2p}\rightarrow \R^p$ is called a flip sign function if it satisfies that for all $S\subset [p]$, $f([\Z,\widetilde{\Z}]_{\textnormal{Swap}(S)})=f([\Z,\widetilde{\Z}])\odot \bmepsilon(S)$ where $\Z,\widetilde{\Z},\bmepsilon(S) \in \R^p$, and $\epsilon(S)_j=-1$ for all $j\in S$ and $\epsilon(S)_j=1$ otherwise. Here $\odot$ denotes the Hadamard product.
\end{definition}

An example of a flip sign function is $f([\Z,\widetilde{\Z}])=\Z-\widetilde{\Z}$. \eqnum\label{eqn:one-flip} 
More examples of flip sign functions and their relationships to antisymmetric functions are discussed in Web Appendix A.4.

\begin{definition}\label{def:one-swap}
(One swap combining function (OSCF)) A function $f: \R^{2pK}\rightarrow \R^{2p}$ is called a one swap combining function (OSCF) if it satisfies that for all $k\in [K]$ and all $S \subset [p]$, \[\left[f([\Z^1,\widetilde{\Z}^1],\cdots,[\Z^K,\widetilde{\Z}^K])\right]_{\textnormal{Swap}(S)}=f([\Z^1,\widetilde{\Z}^1],\cdots,[\Z^k,\widetilde{\Z}^k]_{\textnormal{Swap}(S)},\cdots,[\Z^K,\widetilde{\Z}^K]),\]
where $\Z^k,\widetilde{\Z}^k \in \R^p$ for all $k \in [K]$.
\end{definition}

As a remark, the definition of an OSCF implicitly requires 
that for any set $S \subset [p]$,
\begin{align*}
   &f([\Z^1,\widetilde{\Z}^1]_{\textnormal{Swap}(S)},\cdots,[\Z^K,\widetilde{\Z}^K])\\
    =&f([\Z^1,\widetilde{\Z}^1], [\Z^2,\widetilde{\Z}^2]_{\textnormal{Swap}(S)},\cdots,[\Z^K,\widetilde{\Z}^K])= \cdots =f([\Z^1,\widetilde{\Z}^1],\cdots,[\Z^K,\widetilde{\Z}^K]_{\textnormal{Swap}(S)}).
\end{align*}

An example of the OSCF can be defined as below: 
Let $\bma=(a_1,\cdots,a_K)\in A$ where $A=\{0,1\}^K$. We separate $A$ to two sets: the even set $A_e=\{\bma:\textnormal{mod}(||\bma||_1,2)=\textnormal{mod}(K,2)\}$; and the odd set $A_o=\{\bma:\textnormal{mod}(||\bma||_1,2)=\textnormal{mod}(K+1,2)\}$. Then we obtain an OSCF function $[\Z,\widetilde{\Z}]=f([\Z^1,\widetilde{\Z}^1],\cdots,[\Z^K,\widetilde{\Z}^K])$ as below:

\begin{align}\label{eqn:one-swap}
 Z_j=\sum_{\bma\in A_e}\prod_{k=1}^K Z_{jk}^{a_k}\widetilde{Z}_{jk}^{1-a_k} \quad \textnormal{and}\quad
\widetilde{Z}_j=\sum_{\bma\in A_o}\prod_{k=1}^K Z_{jk}^{a_k}\widetilde{Z}_{jk}^{1-a_k},    
\end{align}
where $Z_{jk}$ and $\widetilde{Z}_{jk}$ are the $j$-th entry of $\Z^k$ and $\widetilde{\Z}^k$ respectively.

In particular, when $K=2$, this construction can be written as:
\begin{align*}
Z_j=Z_{j1}Z_{j2}+\widetilde{Z}_{j1}\widetilde{Z}_{j2} \quad \textnormal{and}\quad \widetilde{Z}_j=Z_{j1}\widetilde{Z}_{j2}+\widetilde{Z}_{j1}Z_{j2} \quad \text{for}~j=1,\cdots,p.
\end{align*}

More OSCF examples are given in Web Appendix A.3.

\begin{definition}\label{def:one-swap-flip}
(One swap flip sign function (OSFF)) A function $f: \R^{2pK}\rightarrow \R^{p}$ is called a one swap flip sign function (OSFF) if it satisfies that for all $k\in [K]$ and all $S \subset [p]$, \[f([\Z^1,\widetilde{\Z}^1],\cdots,[\Z^k,\widetilde{\Z}^k]_{\textnormal{Swap}(S)},\cdots,[\Z^K,\widetilde{\Z}^K])=f([\Z^1,\widetilde{\Z}^1],\cdots,[\Z^k,\widetilde{\Z}^k],\cdots,[\Z^K,\widetilde{\Z}^K])\odot \bmepsilon(S),\]
where $\Z^k,\widetilde{\Z}^k \in \R^p$ for $k \in [K]$.
\end{definition}

 There are multiple ways to construct OSFFs. As shown in Lemma A1 in Web Appendix A.5, if $f_1:\R^{2pK}\rightarrow \R^{2p}$ is an OSCF and $f_2:\R^{2p}\rightarrow \R^p$ is a flip-sign function, then $f=f_2\circ f_1$ is an OSFF, where $\circ$ denotes the composition of functions. When using the OSCF as defined in \eqref{eqn:one-swap} and using the flip-sign function as defined in \eqref{eqn:one-flip}, we obtain an OSFF $f([\Z^1,\widetilde{\Z}^1],\cdots,[\Z^K,\widetilde{\Z}^K])=\odot_{k=1}^K (\Z^k-\widetilde{\Z}^k)$. Alternative ways to construct OSFFs and more examples are provided in Web Appendix A.5.

\subsubsection{Algorithm}
The \textit{Simultaneous knockoff} procedure is described below:
\begin{itemize}
    
    \item \textit{Step 1: Knockoff construction for the individual experiments.} Denote the knockoff matrices for $\X^1,\cdots,\X^K$ as $\widetilde{\X}^1, \cdots,\widetilde{\X}^K$. For each $k \in [K]$, select a knockoff construction method  (either Fixed or Model-X) as described in Web Appendix A.1 that is compatible with the model setting for the experiment $k$ to generate $\widetilde{\X}^k$.

    \item \textit{Step 2: Test statistics calculation for the individual experiments.} For each experiment $k \in [K]$, choose and calculate statistics $[\Z^k, \widetilde{\Z}^k] \in \R^{2p}$ that are compatible with the knockoff construction method for experiment $k$. Details on the choices of $[\Z^k, \widetilde{\Z}^k]$ can be found in Web Appendix A.2.
 
    \item \textit{Step 3: Calculation of the filter statistics $\W$.} Choose an arbitrary OSFF $f$ as defined in Definition \ref{def:one-swap-flip} and calculate $\W=f([\Z^1,\widetilde{\Z}^1],\cdots,[\Z^K,\widetilde{\Z}^K])$. Examples of OSFFs can be found in Web Appendix A.5.
  
    \item \textit{Step 4: Threshold calculation and feature selection.} Using the filter statistics $\W$ from Step 3, we apply the knockoff+ filter \eqref{eqn:knockoff+} to obtain the selection set $\widehat{S}_+$ under the \textit{Simultaneous knockoff}+ procedure; or apply the knockoff filter \eqref{eqn:knockoff} to obtain $\widehat{S}$ under the \textit{Simultaneous knockoff} procedure.
\end{itemize}

\section{Theoretical Results}
\label{sec:theo}

The main result for this paper is the theoretical guarantee that \textit{Simultaneous knockoff} and \textit{Simultaneous knockoff}+ procedures can control the modified FDR (as defined in \eqref{eqn:mfdr} in Theorem \ref{thm:1}) and FDR respectively. 

\begin{theorem}\label{thm:1}
With the individual experiments satisfying the Fixed-X or the Model-X knockoff model settings, the \textit{Simultaneous knockoff} procedure \eqref{eqn:knockoff} 
controls the modified FDR defined as
\begin{equation}\label{eqn:mfdr}
\textnormal{mFDR}=\EE{\frac{|\widehat{S}\cap \mathcal{H}|}{|\widehat{S}|+1/q}}\leq q, \end{equation}
and the \textit{Simultaneous knockoff}+ procedure \eqref{eqn:knockoff+} controls the usual FDR
\[\EE{\frac{|\widehat{S}_+\cap \mathcal{H}|}{|\widehat{S}_+|\vee 1}}\leq q, ~\text{where $\mathcal{H}$ is the union null set as defined in \eqref{eqn:H}}.\]
\end{theorem}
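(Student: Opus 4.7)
\textbf{Proof plan for Theorem \ref{thm:1}.}
The plan is to reduce the theorem to the standard ``i.i.d.\ sign flip'' lemma used in the single-experiment knockoff proofs (Lemma~1 of \citet{barber2015} and its Model-X analogue in \citet{candes2018}), and then invoke verbatim the martingale/supermartingale argument that turns that lemma into mFDR control for the knockoff filter \eqref{eqn:knockoff} and FDR control for the knockoff+ filter \eqref{eqn:knockoff+}. The only new work is therefore to establish the following coin-flip property for the combined filter statistic $\W$: conditionally on $|\W|$ and on $\{W_j\}_{j \notin \mathcal{H}}$, the signs $\{\mathrm{sign}(W_j)\}_{j \in \mathcal{H}}$ are i.i.d.\ Rademacher. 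Equivalently, for every $T \subseteq \mathcal{H}$, $\W \odot \bmepsilon(T) \eqd \W$.

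To get this, for each $j \in \mathcal{H}$ fix one index $k(j) \in [K]$ with $H_{0j}^{k(j)}$ true; this is possible because $\mathcal{H}=\bigcup_k \mathcal{H}^k$. Given $T \subseteq \mathcal{H}$, partition it as $T = T_1 \sqcup \cdots \sqcup T_K$ with $T_k := \{j \in T : k(j) = k\}$, so $T_k \subseteq \mathcal{H}^k$. First I would apply the single-experiment exchangeability property to each experiment separately: for Model-X, $([\X^k\ \widetilde{\X}^k]_{\text{Swap}(T_k)}, \Y^k) \eqd ([\X^k\ \widetilde{\X}^k], \Y^k)$ because $T_k \subseteq \mathcal{H}^k$ (this is the standard consequence of \eqref{eqn:modelX_knockoff_condition} restricted to null coordinates); for Fixed-X, the corresponding invariance holds for the sufficient statistics $([\X^k\ \widetilde{\X}^k]^\top [\X^k\ \widetilde{\X}^k], [\X^k\ \widetilde{\X}^k]^\top \Y^k)$, up to a sign flip on $\bmvarepsilon$ that does not affect the distribution. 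Since the test statistics $[\Z^k, \widetilde{\Z}^k]$ are, by construction, functions of exactly these objects and they transform equivariantly under swaps (as spelled out in the Fixed-X/Model-X reviews), we get $[\Z^k, \widetilde{\Z}^k]_{\text{Swap}(T_k)} \eqd [\Z^k, \widetilde{\Z}^k]$. Using independence of the $K$ experiments, this lifts to joint equality in distribution of the tuple $\bigl([\Z^1,\widetilde{\Z}^1]_{\text{Swap}(T_1)}, \dots, [\Z^K,\widetilde{\Z}^K]_{\text{Swap}(T_K)}\bigr)$ with the un-swapped tuple.

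Next I would exploit the OSFF property. Applying Definition \ref{def:one-swap-flip} iteratively, one swap at a time across the $K$ experiments,
\[
\W_{\text{after all swaps}} \;=\; f\!\left(\dots,[\Z^k,\widetilde{\Z}^k]_{\text{Swap}(T_k)},\dots\right) \;=\; \W \odot \bmepsilon(T_1) \odot \cdots \odot \bmepsilon(T_K) \;=\; \W \odot \bmepsilon(T),
\]
where the last equality uses that the $T_k$ are disjoint. Combined with the distributional equality established in the previous paragraph, this yields $\W \odot \bmepsilon(T) \eqd \W$ for every $T \subseteq \mathcal{H}$, which is exactly the required joint sign-flip symmetry (and in particular implies that for any fixed collection of $|W_j|$-values, the null signs are conditionally i.i.d.\ $\pm 1$ with probability $1/2$).

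Finally, with the coin-flip property in hand, the two conclusions of the theorem follow from the proof of Theorem~1 in \citet{barber2015} / Theorem~3.4 in \citet{candes2018} without modification: one writes $|\{j \in \mathcal{H}: W_j \ge t\}|$ as a sum of conditionally independent Bernoulli$(1/2)$ variables given $|\W|$ and the non-null $W_j$'s, bounds $\mathrm{FDP}(t)$ by $q \cdot (1+V^-(t))/(V^+(t) \vee 1)$ where $V^\pm(t)$ count positive/negative null $W_j$'s with $|W_j|\ge t$, and applies the supermartingale stopping-time argument (with or without the ``$+1$'' in the numerator corresponding to knockoff vs.\ knockoff+). The main obstacle is really just step two: ensuring the bookkeeping of joint exchangeability across experiments is tight enough that one can simultaneously swap at possibly different null experiments for different $j$, and this is exactly what the ``one swap'' structure of the OSFF is designed to accommodate.
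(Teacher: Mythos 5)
Your proposal is correct and follows essentially the same route as the paper: the paper's Lemma 1 is exactly your sign-flip symmetry $\W\odot\bmepsilon(T)\eqd\W$ for $T\subseteq\mathcal{H}$, proved by the same device of partitioning $T$ into disjoint pieces $T_k\subseteq\mathcal{H}^k$ (the paper takes $S_k=S\cap\mathcal{H}_k\cap(\cup_{j<k}\mathcal{H}_j)^c$, i.e.\ your $k(j)$ chosen minimal), applying per-experiment null exchangeability plus independence across experiments, and then unwinding the OSFF one swap at a time; the conclusion is then obtained via the same backward supermartingale/stopping-time argument of Barber--Cand\`es, which the paper reproduces in Web Appendix B as Lemma B1.
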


The Fixed-X and the Model-X knockoff model settings can be found in Web Appendix A.1. The definition of mFDR is close to the FDR, especially when the selection set is relatively large. Although the more conservative \textit{Simultaneous knockoff}+ procedure can achieve exact FDR control, in real data applications, the knockoff filter is more widely used \citep{barber2015,dai2016, candes2018,sesia2019,romano2019}. 

The key step for the proof of Theorem \ref{thm:1} is to show that the signs of the $W_j$s for the union nulls are i.i.d. following a Bernoulli$(\frac{1}{2})$ distribution, and independent of $|W_j|$ for all $j \in \mathcal{H}$. As for the knockoff-based methods, this property effectively guarantees that for all $j \in \mathcal{H}$, there are equal probabilities of selecting the feature and its knockoff copy, which allows the knockoff procedure to (over) estimate the FDP. We show this in the Lemma \ref{lem:1}. The details of the proof can be found in Web Appendix B.

\begin{lemma}\label{lem:1}
Let $\W = f([\Z^1,\widetilde{\Z}^1],\cdots,[\Z^K,\widetilde{\Z}^K])$ where $f$ is an OSFF. Let $\bmepsilon \in \{\pm 1\}^p$ be a sign sequence independent of $\W$, with $\epsilon_j = +1$ for all $j \in \mathcal{S}$ and $\epsilon_j \sim \{\pm 1\}$ for all $j \in \mathcal{H}$. Then
$(W_1, \cdots, W_p) \eqd (W_1\cdot \epsilon_1, \cdots, W_p\cdot\epsilon_p)$. 
\end{lemma}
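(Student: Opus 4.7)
The plan is to reduce the claim to a deterministic equality in distribution for each fixed sign pattern. Since $\bmepsilon$ is independent of $\W$ and $\epsilon_j = +1$ whenever $j \in \mathcal{S}$, it suffices to prove that for every deterministic $S \subseteq \mathcal{H}$ one has $\W \eqd \W \odot \bmepsilon(S)$; integrating against the law of $\bmepsilon$ then yields the unconditional statement.

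To push the coordinate flips on $\W$ back into swaps on the underlying test statistics, I would exploit the decomposition $\mathcal{H} = \bigcup_{k=1}^K \mathcal{H}^k$. For each $j \in S$ pick some $k_j \in [K]$ with $j \in \mathcal{H}^{k_j}$, and set $S_k := \{j \in S : k_j = k\}$. Then $S = \bigsqcup_{k=1}^K S_k$ is a disjoint union with $S_k \subseteq \mathcal{H}^k$ for every $k$. Applying the OSFF identity (Definition~\ref{def:one-swap-flip}) once per experiment, each iteration contributing a factor of $\bmepsilon(S_k)$ without disturbing the slots already swapped, yields the deterministic identity
\[
\W \odot \bmepsilon(S) \;=\; f\!\left([\Z^1,\widetilde{\Z}^1]_{\mathrm{Swap}(S_1)},\,\ldots,\,[\Z^K,\widetilde{\Z}^K]_{\mathrm{Swap}(S_K)}\right),
\]
where the disjointness of the $S_k$ collapses $\bmepsilon(S_1) \odot \cdots \odot \bmepsilon(S_K)$ to $\bmepsilon(S)$.

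The remaining step is distributional. For each $k$, since $S_k \subseteq \mathcal{H}^k$, the null-coordinate exchangeability built into the knockoff construction (in the Fixed-X case via the Gram-matrix constraints \eqref{eqn:knockoff_condition} together with a sufficiency argument, and in the Model-X case via the pairwise exchangeability \eqref{eqn:modelX_knockoff_condition} together with $\widetilde{\X}^k \indep \Y^k \mid \X^k$) gives $[\Z^k,\widetilde{\Z}^k]_{\mathrm{Swap}(S_k)} \eqd [\Z^k,\widetilde{\Z}^k]$. Because the $K$ experiments are mutually independent, these marginal equalities lift to a joint equality in distribution of the $K$-tuples, and applying the deterministic function $f$ on both sides yields $\W \odot \bmepsilon(S) \eqd \W$, completing the reduction.

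The main obstacle I anticipate is the clean execution of the iterated-swap step: the OSFF property is stated for a single swap in a single slot, so a short induction on $k$ is needed to justify that chaining disjoint swaps across different slots produces exactly the product of sign vectors. Everything else, namely invoking the individual-experiment knockoff exchangeability at the null columns and passing to joint distributions via independence across experiments, is standard once this algebraic identity is in place.
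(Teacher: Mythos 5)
Your proposal is correct and follows essentially the same route as the paper's own proof: decompose $S\subseteq\mathcal{H}$ into disjoint pieces $S_k\subseteq\mathcal{H}^k$, apply the single-experiment knockoff exchangeability to each slot, combine via independence of the experiments, and iterate the OSFF identity to collect the sign factor $\bmepsilon(S)$. The only cosmetic difference is that the paper fixes the canonical choice $S_k=S\cap\mathcal{H}^k\cap(\cup_{j<k}\mathcal{H}^j)^c$ while you allow an arbitrary assignment $j\mapsto k_j$; both yield the same disjoint decomposition and the same conclusion.
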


For models beyond the linear models, we need to use the Model-X knockoffs in the individual experiments. In real applications, the distribution of the candidate features might not be known exactly. In \cite{candes2018}, the robustness against the misspecification of the $\X$ distribution is shown empirically. \cite{barber2020} and \cite{huang2020} further addressed this question theoretically. For the \textit{Simultaneous knockoff} procedure, it is also very important to establish the robustness results against the misspecification of the distribution of $\X$. The following theorem shows the result.

\begin{theorem}\label{thm:2}
Under the definitions in Section \ref{sec:meth}, for any $\epsilon \geq 0$, consider the null variables for which $\min_{k:j\in \mathcal{H}^{k}}\widehat{\textnormal{KL}}_{j}^k\leq \epsilon$, where $\widehat{\textnormal{KL}}_{j}^k = \sum_{i=1}^{n_k} \log\left( \frac{P_{kj}(X^k_{ij}|X^k_{i,-j})Q_{kj}(\widetilde{X}^k_{ij}|X^k_{i,-j})}{Q_{kj}(X^k_{ij}|X^k_{i,-j})P_{kj}(\widetilde{X}^k_{ij}|X^k_{i,-j})}\right)$ where $P$ denotes the true distribution and $Q$ denotes the misspecified distribution. If we use the knockoff+ filter, then the fraction of the rejections corresponding to such nulls obeys
\begin{equation}
    \EE{\frac{|\{j:j\in \widehat{S}\cap \mathcal{H}~\textnormal{and}~\min_{k:j\in \mathcal{H}^{k}}\widehat{\textnormal{KL}}_{j}^k\leq \epsilon\}|}{|\widehat{S}|\vee 1}} \leq q\cdot e^{\epsilon}.
\end{equation}
In particular, this implies that the FDR is bounded as
\begin{equation}
    \textnormal{FDR}\leq \min_{\epsilon \geq 0} \left\{q\cdot e^{\epsilon} + \PP{\max_{j\in H_0}\min_{k:j\in \mathcal{H}_{k}}\widehat{\textnormal{KL}}_{j}^k > \epsilon}\right\}.
\end{equation}
Similarly, if we use the knockoff filter, for any $\epsilon \geq 0$, a slightly modified fraction of the rejections corresponding to nulls with $\min_{k:j\in \mathcal{H}^{k}}\widehat{\textnormal{KL}}_{j}^k\leq \epsilon$ obeys
\begin{equation}
    \EE{\frac{|\{j:j\in \widehat{S}\cap \mathcal{H}~\textnormal{and}~\min_{k:j\in \mathcal{H}^{k}}\widehat{\textnormal{KL}}_{j}^k\leq \epsilon\}|}{|\widehat{S}|+q^{-1}}} \leq q\cdot e^{\epsilon}
\end{equation}
and from this, we obtain a bound on the modified FDR:
\begin{equation}
\EE{\frac{|\widehat{S}\cap \mathcal{H}|}{|\widehat{S}|+q^{-1}}} \leq \min_{\epsilon\geq 0} \left\{q\cdot e^{\epsilon} + \PP{\max_{j\in H}\min_{k:j\in \mathcal{H}^{k}}\widehat{\textnormal{KL}}_{j}^k > \epsilon}\right\}.
\end{equation}
\end{theorem}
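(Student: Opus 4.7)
The plan is to adapt the robustness analysis of Barber--Cand\`es--Samworth (2020) to the simultaneous-knockoff setting. Under exact model specification, Lemma~\ref{lem:1} produced the sign-flip symmetry of the null $W_j$'s that made Theorem~\ref{thm:1} work; here I need to track how much this symmetry is broken when the $\X^k$-distributions are misspecified, and to verify that the breakage is measured exactly by the per-experiment KL estimators $\widehat{\textnormal{KL}}_j^k$ (with the freedom to pick the best $k$ for each union-null $j$).

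First, at the level of a single experiment $k$ and a single true null $j \in \mathcal{H}^k$: because $X_j^k \indep Y^k \mid X^k_{-j}$, the joint density of $(\X^k, \widetilde{\X}^k, \Y^k)$ factors in a way that isolates the $Q_{kj}(\widetilde{X}^k_{\cdot j}\mid X^k_{\cdot,-j})$ kernel, and a direct density calculation shows that applying $\textnormal{Swap}(\{j\})$ to $[\X^k,\widetilde{\X}^k]$ multiplies the density by exactly $\exp(\widehat{\textnormal{KL}}_j^k)$. This is the per-experiment version of the Barber--Cand\`es--Samworth likelihood-ratio identity, and it is the only place where the true null status $j\in\mathcal{H}^k$ enters.

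Second, I would lift this to the simultaneous setting through the OSFF. By Definition~\ref{def:one-swap-flip}, performing $[\X^k,\widetilde{\X}^k]_{\textnormal{Swap}(\{j\})}$ in just one experiment flips the sign of $W_j$ and leaves $|W_j|$ and every $W_\ell$ with $\ell\neq j$ unchanged. For each union null $j\in\mathcal{H}$ I choose the anchor experiment $k^*(j)\in\arg\min_{k:j\in\mathcal{H}^k}\widehat{\textnormal{KL}}_j^k$, so that flipping the sign of $W_j$ through $k^*(j)$ costs a Radon--Nikodym factor of at most $\exp(\widehat{\textnormal{KL}}_j^{k^*(j)})$. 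Because the individual swaps act on disjoint coordinates of $[\X^1,\widetilde{\X}^1,\ldots,\X^K,\widetilde{\X}^K]$ --- two nulls sharing an anchor $k$ flip different columns of $\X^k$, and nulls with different anchors live in experiments that are independent by assumption --- these factors multiply, and the joint density ratio between any subset $T\subseteq\mathcal{H}$ of flipped nulls and the original configuration is bounded by $\prod_{j\in T}\exp(\widehat{\textnormal{KL}}_j^{k^*(j)})$.

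Third, I would run the leave-one-out / super-martingale argument of Barber--Cand\`es--Samworth, restricted to $\mathcal{H}(\epsilon)=\{j\in\mathcal{H}:\min_k\widehat{\textnormal{KL}}_j^k\le\epsilon\}$. Every sign-flip used in that argument now picks up at most an $e^\epsilon$ factor rather than being exactly a fair coin, and the usual bound on the ratio $\#\{j:W_j\le -t\}/(\#\{j:W_j\ge t\}\vee 1)$ gains the prefactor $e^\epsilon$, yielding the partial FDR bound with right-hand side $q\cdot e^\epsilon$. The full FDR inequality follows by writing $\widehat{S}\cap\mathcal{H}$ as the disjoint union of its intersection with $\mathcal{H}(\epsilon)$ and its complement, bounding the complement's contribution by $\PP{\max_{j\in\mathcal{H}}\min_k\widehat{\textnormal{KL}}_j^k>\epsilon}$, and minimising over $\epsilon$. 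The mFDR variant for the plain knockoff filter is obtained identically with the $+q^{-1}$ stabiliser in place of $\vee 1$. The main obstacle I anticipate is the bookkeeping in the second step: one must verify that flipping different nulls through different anchor experiments produces a single coherent ``flipped'' joint law whose Radon--Nikodym factor against the original factorises as the product of per-null KL exponentials. The cleanest route is to realise the collection of swaps as a product-measure transformation on the product space of the $K$ independent experiments and reduce to the one-experiment identity of the first step; once that is in place, the OSFF property delivers the coordinated sign-flip of $W_j$'s and the rest of the argument is cosmetically identical to the single-experiment robustness proof.
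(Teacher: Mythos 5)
Your proposal follows essentially the same route as the paper's proof: choose the anchor experiment $k^*(j)=\arg\min_{k:j\in\mathcal{H}^k}\widehat{\textnormal{KL}}_j^k$, establish the per-null likelihood-ratio identity $P(W_j>0,E_j\le\epsilon\mid |W_j|,W_{-j})\le e^{\epsilon}P(W_j<0\mid |W_j|,W_{-j})$ by conditioning on the unordered pair $\{X^k_j,\widetilde{X}^k_j\}$ and using the independence of the $K$ experiments together with $j\in\mathcal{H}^{k}$, and then conclude via the Barber--Cand\`es--Samworth leave-one-out summation bound and the $\epsilon$-decomposition of the FDR. The multi-coordinate product factorisation you worry about in your second step is not actually needed for this leave-one-out route (only the single-coordinate conditional inequality is used), so the anticipated obstacle dissolves and the argument matches the paper's.
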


In real applications, when we have additional samples of $\X$ (for estimating the distribution of $\X$), we will be able to achieve a small enough $\epsilon$. Otherwise, it has been proposed to evaluate the potential inflation of FDR using simulation \citep{romano2019}. In Theorem \ref{thm:2}, we show an FDR upper bound result for \textit{Simultaneous knockoffs} which is similar to the result in \citet{barber2020} for Model-X knockoffs, to build some statistical foundations for such simulation approach. Below we give an example to demonstrate its application. 

{\color{black} Consider the example of the Gaussian knockoffs in \citet{barber2020}, i.e., $\X^k$ is normally distributed with the mean zero and the variance-covariance matrix $(\bmTheta^k)^{-1}$ and we use the Gaussian knockoff construction method, i.e., sample $\widetilde{\X}^k|\X^k\sim \mathcal{N}(\mathbf{I}-\mathbf{D}^k\widetilde{\bmTheta}^k\X^k,2\mathbf{D}^k-\mathbf{D}^k\widetilde{\bmTheta}^k\mathbf{D}^k)$ where $\widetilde{\bmTheta}^k$ is an estimated version of $\bmTheta^k$ and $\mathbf{D}^k$ is a nonnegative diagonal matrix such that $2\mathbf{D}^k-\mathbf{D}^k\widetilde{\bmTheta}^k\mathbf{D}^k$ is positive definite, then as shown in \citet{barber2020}, we have with probability at least $1-p^{-1}$,
$
max_{j=1,\cdots,p}\widehat{\text{KL}}^k_j\leq 4\delta_{\bmTheta^k}\sqrt{n_k \log(p)}(1+\text{Rem}),
$
where $\delta_{\bmTheta}=\max_{1,\cdots,p}(\bmTheta_{jj})^{-1/2}||\bmTheta^{-1/2}(\widetilde{\bmTheta}-\bmTheta)||_2$, and $\text{Rem}$ is a vanishing term when $n_k^{-1}\log(p)\rightarrow 0$. The graphical Lasso estimator of $\bmTheta^k$ with (unlabeled) sample size $N_k$ satisfy that $||\widehat{\bmTheta}^k-\bmTheta^k||_{\infty}\lesssim \sqrt{\frac{\log(p)}{N_k}}$ and thus $\delta_{\bmTheta^k} \asymp O(\sqrt{\frac{s_{\bmTheta^k}\log (p)}{N_k}})$. So $4\delta_{\bmTheta^k}\sqrt{n_k \log(p)}$ will be small if the unlabeled sample size $N_k$ for each subsample is large enough in the sense that $N_k>>n_ks_{\bmTheta^k}[\log(p)]^2$. Under a special setting where there exists a subset $\Omega\subset [K]$ such that $\mathcal{H}=\cup_{k\in \Omega}\mathcal{H}_k$, then we will just need enough unlabeled sample size within those subsample with index from $\Omega$.  
}

Our theoretical guarantees focus on the control of FDR.  The power is a monotonically decreasing function of $K$ and a monotonically increasing function of $n$. Asymptotically, as $K$ is fixed, and $\frac{\log p}{n} \rightarrow 0$, the power converges to 1 as $n \rightarrow \infty$ (See details in Web Appendix C.3.5).

Since there are no theoretical results on the choice of $\W$ for the most powerful test, we compare the power with several choices of $\W$s numerically. To understand the power of the proposed statistics, we plot the empirical distributions of the filter statistics ($W_j$) for $j \in \mathcal{H}$ and $j \in \mathcal{S}$ assuming $Z_j$s for $j \in \mathcal{H}$ are \iid normally distributed (Figure \ref{fig:W}). We can see that for $j \in \mathcal{H}$, the filter statistics $W_j$ is symmetric around 0, whereas for $j \in \mathcal{S}$, $\PP{W_j>0}>1/2$.

\begin{figure}
    \centering
    \includegraphics[scale=0.9]{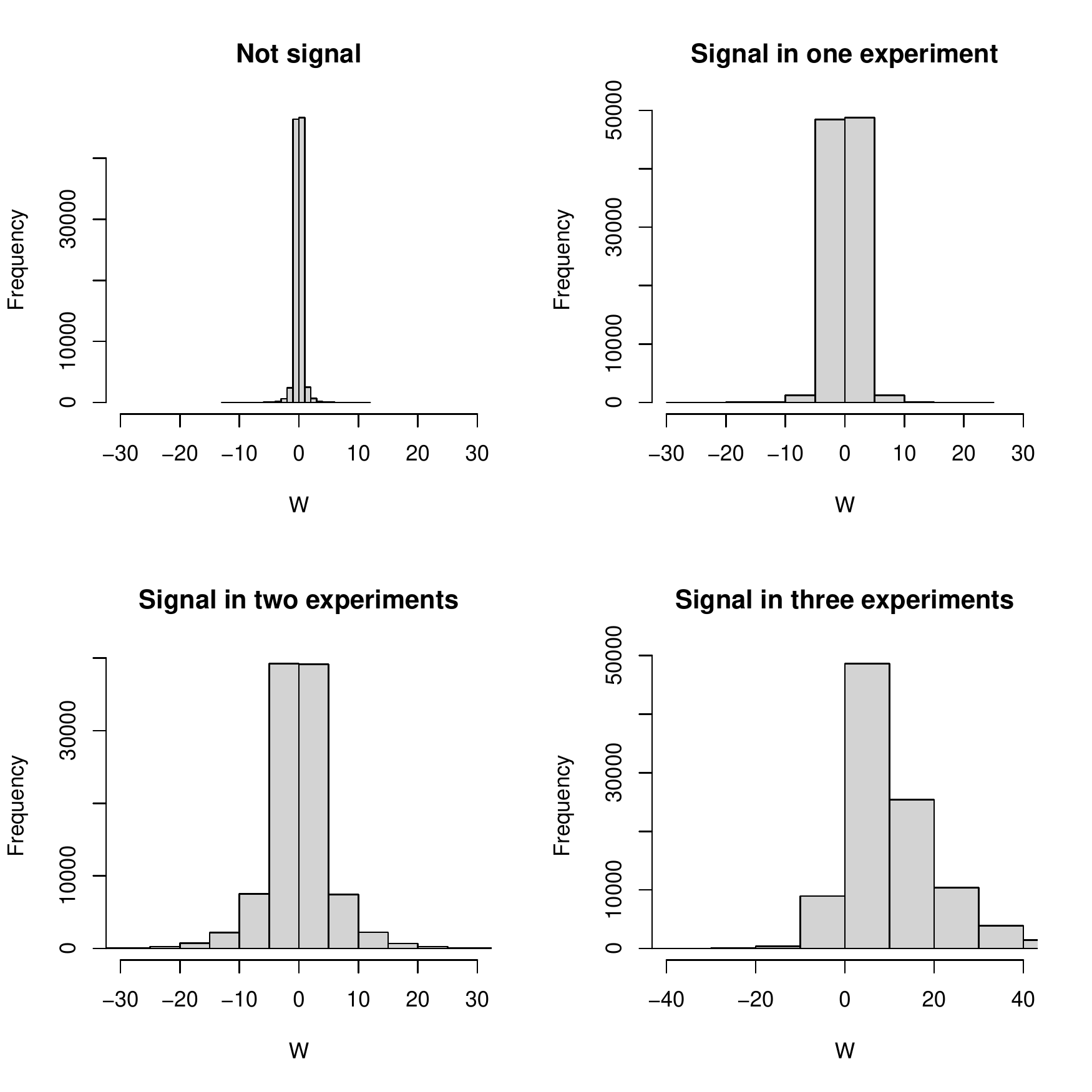}
    \caption{Distributions of the filter statistics $W_j = Z_j - \widetilde{Z}_j$, where $Z_j$ and $\widetilde{Z}_j$ are as defined in \eqref{eqn:one-swap} with $K=3$ for the cases where feature $j$ is not a signal in any of the experiments (null), $j$ is a signal in one experiment (null), two experiments (null) and three experiments (alternative).}
    \label{fig:W}
\end{figure}

\section{Simulation} 
\label{sec:simulation}
\paragraph{Simulation settings} We first consider the $K=2$ case with three data settings: 
\begin{enumerate}
    \item \textit{Continuous.} For both experiments, $Y_i^k$s are continuous and $Y_i^k|\X_i^k$s follow linear models.
    \item \textit{Binary.} For both experiments, $Y_i^k$s are binary and $Y_i^k|\X_i^k$s follow logistic models.
    \item \textit{Mixed.} $Y_i^1$ is continuous and $Y_i^1|\X_i^1$ follows a linear model; $Y_i^2$ is binary and $Y_i^2|\X_i^2$ follows a probit model.
\end{enumerate}
 We compare our proposed method (\textit{simultaneous}) with the two alternative methods below:
\begin{itemize}
    \item \textit{Pooling.} Data are pooled together and tests  of the conditional associations are performed using the knockoff methods for a single experiment.
    \item \textit{Intersection.} Knockoff methods for single experiments are used for the individual experiments and the intersection of the selected sets is used to construct the selection set of the mutual signals.
\end{itemize}

We first study the effect of the signal sparsity level of the mutual signals and the non-mutual signals. We use $s_0$ to denote the number of simultaneous signals among the $K$ experiments, and $s_k$ to denote the number of the signals that are only present in the $k$-th experiment. We study the three cases: 1. $s_1=s_2=0$; 2. $s_1=0, s_2\neq 0$; 3. $s_1=s_2\neq 0$. Next, we study the effect of the correlations among the covariates. Third, we study the effect of the difference in signal strengths between the two experiments. We consider two scenarios for the signal strengths: Scenario 1. the directions and the strengths of the mutual signals are identical among the $K$ experiments; Scenario 2. only the directions of the mutual signals are the same but the signal strengths are independent among the $K$ experiments. Data generation and algorithm implementation details can be found in Web Appendix C. Additional simulations for the $K=3$ case, the power comparison among different choices of the filter statistics $\W$, and the empirical distributions of $\W$ to show why the method has power are also provided in Web Appendix C. 

\paragraph{Results}

 Figure \ref{fig:s1s2} shows the power and the FDR for the three methods (\textit{simultaneous, pooling, and intersection}) on the three data settings (continuous, binary, and mixed) when we vary $s_1=s_2$. As $s_1=s_2$ increases, only the \textit{simultaneous} method controls the FDR. The \textit{simultaneous} method has slightly lower power than the \textit{pooling} method, and the power gap is still moderate when the signals in the two experiments have different strengths (Scenario 2, right panel). The simulation results are in agreement with our theoretical expectations. First, in terms of FDR control, the \textit{simultaneous} method we proposed always controls the FDR across all our designed settings. The \textit{pooling} method only controls the FDR when all samples from the two experiments are i.i.d. The \textit{intersection} method controls the FDR when $s_2=0$ but it fails when $s_1=s_2\neq 0$. In terms of power, there is some gap between the \textit{simultaneous} and the \textit{pooling} methods, because the tests of union null hypotheses are more stringent. However, the gap is moderate. More detailed simulation results can be found in the Web Appendix C. The simulation results for the $K=3$ case are similar to the $K=2$ case and are consistent with our theoretical expectations (see Figure C6 in Web Appendix C). The \textit{simultaneous} method controls the FDR and has good power. The \textit{pooling} method has high power but also has a very high FDR rate when there are signals that are shown in either one or two of the samples only. The \textit{intersection} method has similar power to the \textit{simultaneous} method, but it cannot control the FDR when a large number of features are signals in only two of the three samples.

The comparison among different $\W$ statistics suggests that the \textit{Max} and \textit{Diff} (see definitions in Web Appendix C.1.3) $\W$s have the best performance among the $\W$s we have explored. More simulation results can be found in Web Appendix C.

\begin{figure}
    \centering
    \includegraphics[scale=0.35]{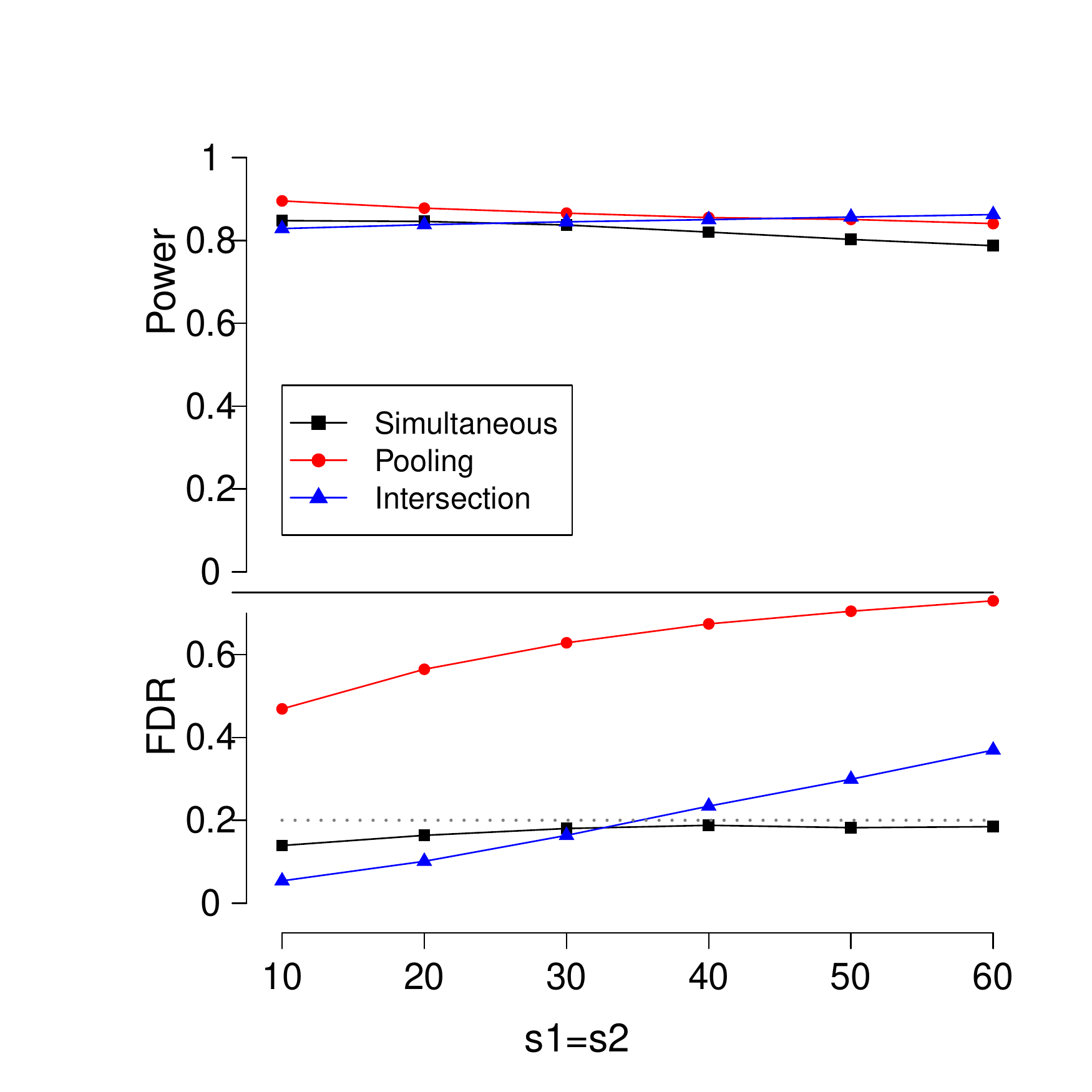}
    \includegraphics[scale=0.35]{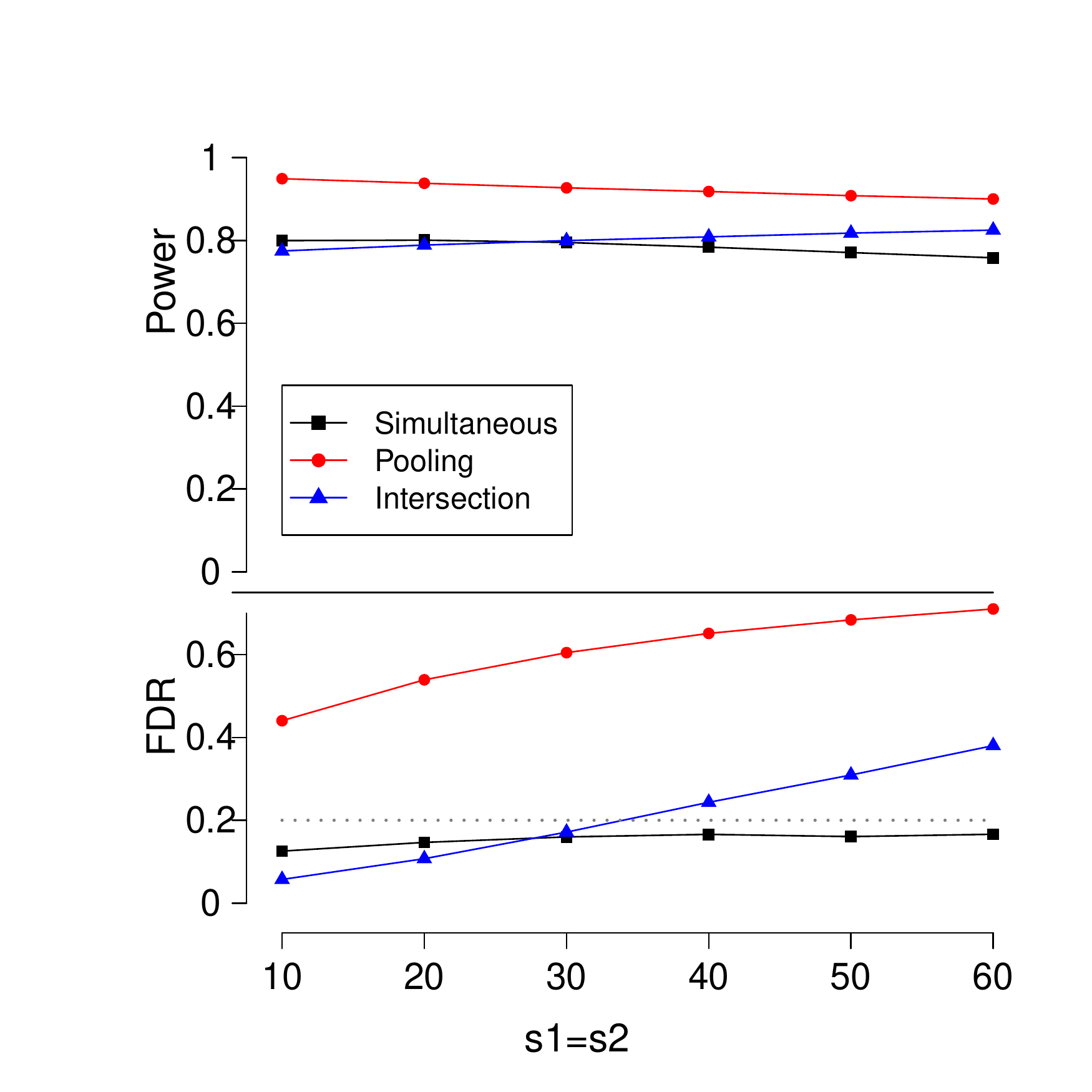}\\
    \includegraphics[scale=0.35]{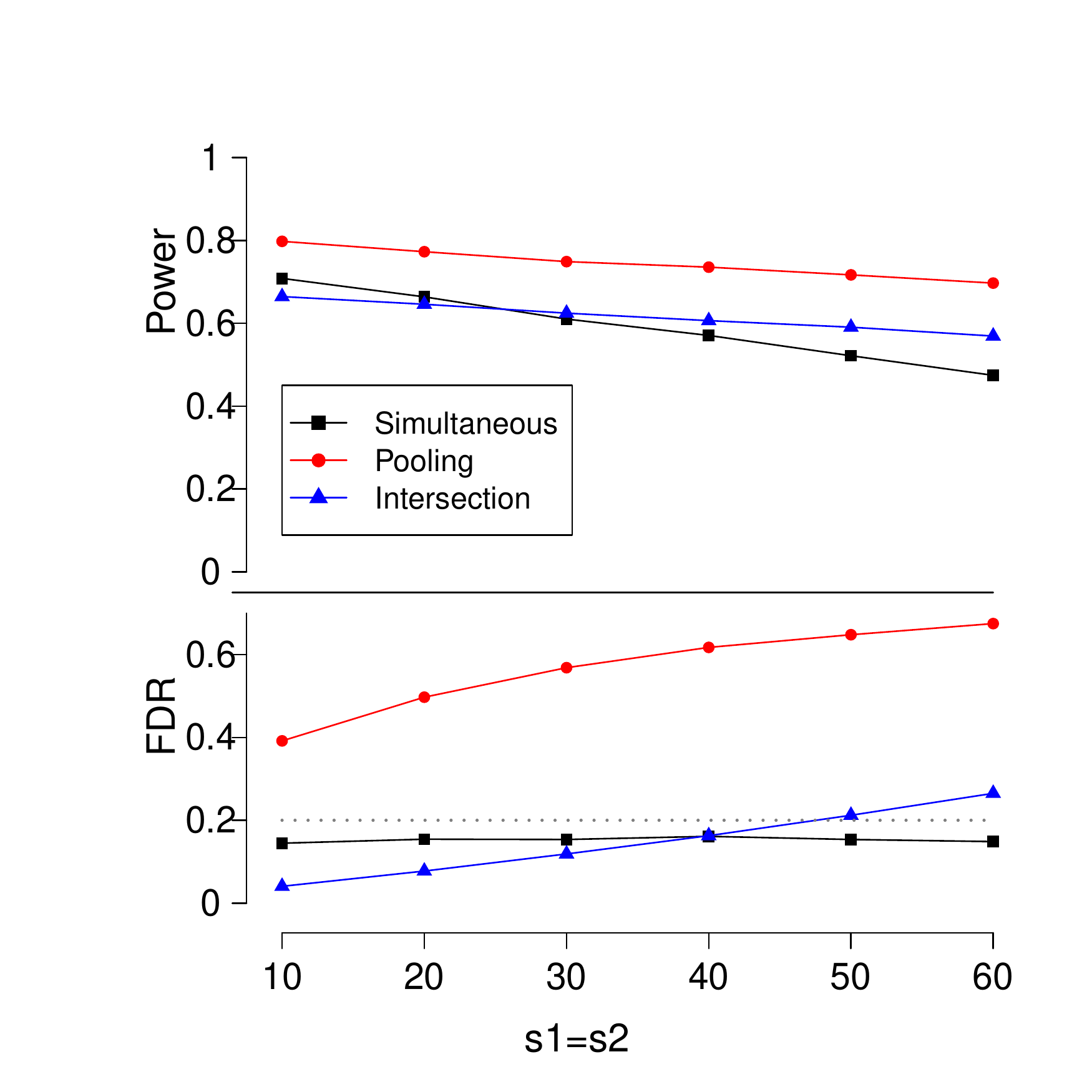}
    \includegraphics[scale=0.35]{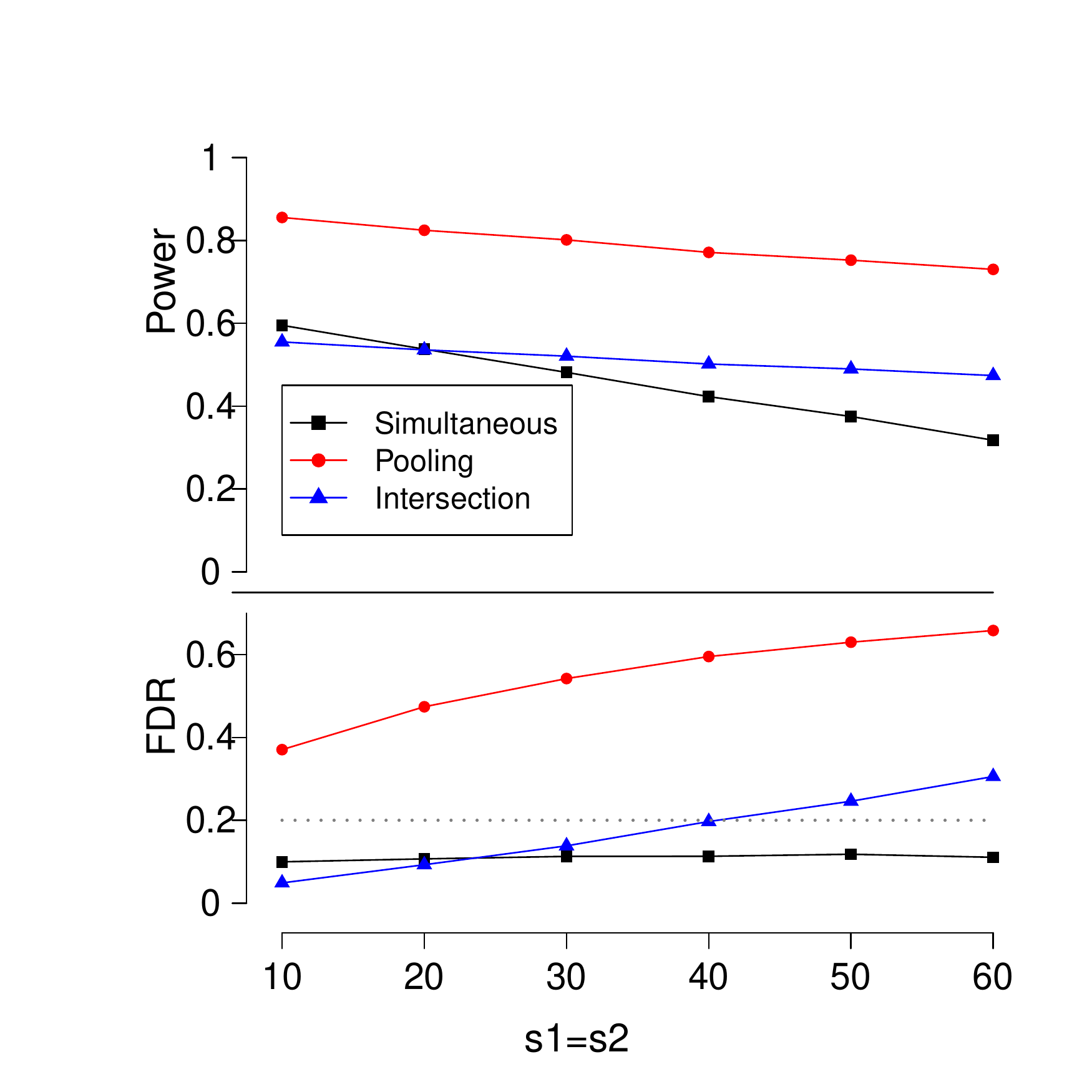}\\
    \includegraphics[scale=0.35]{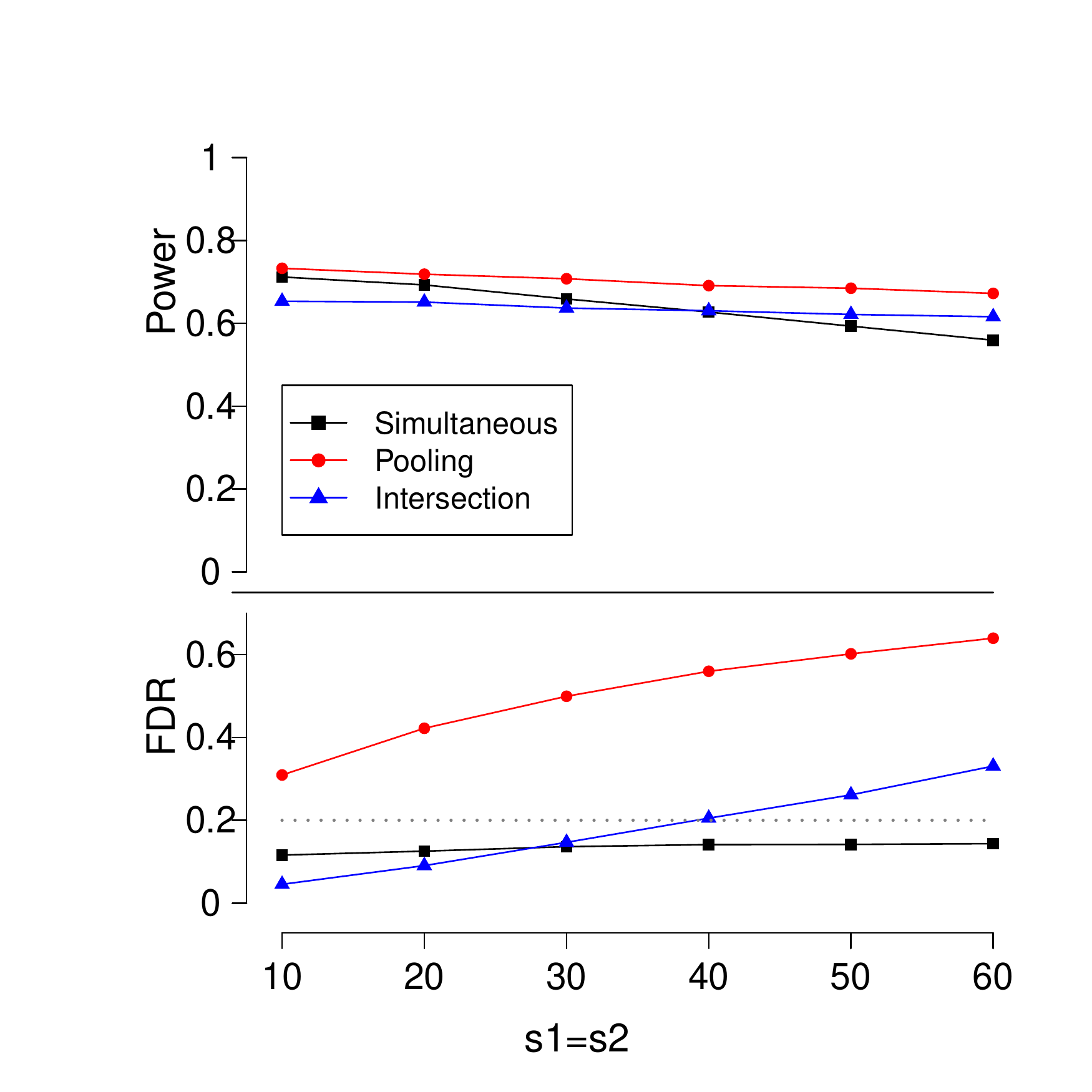}
    \includegraphics[scale=0.35]{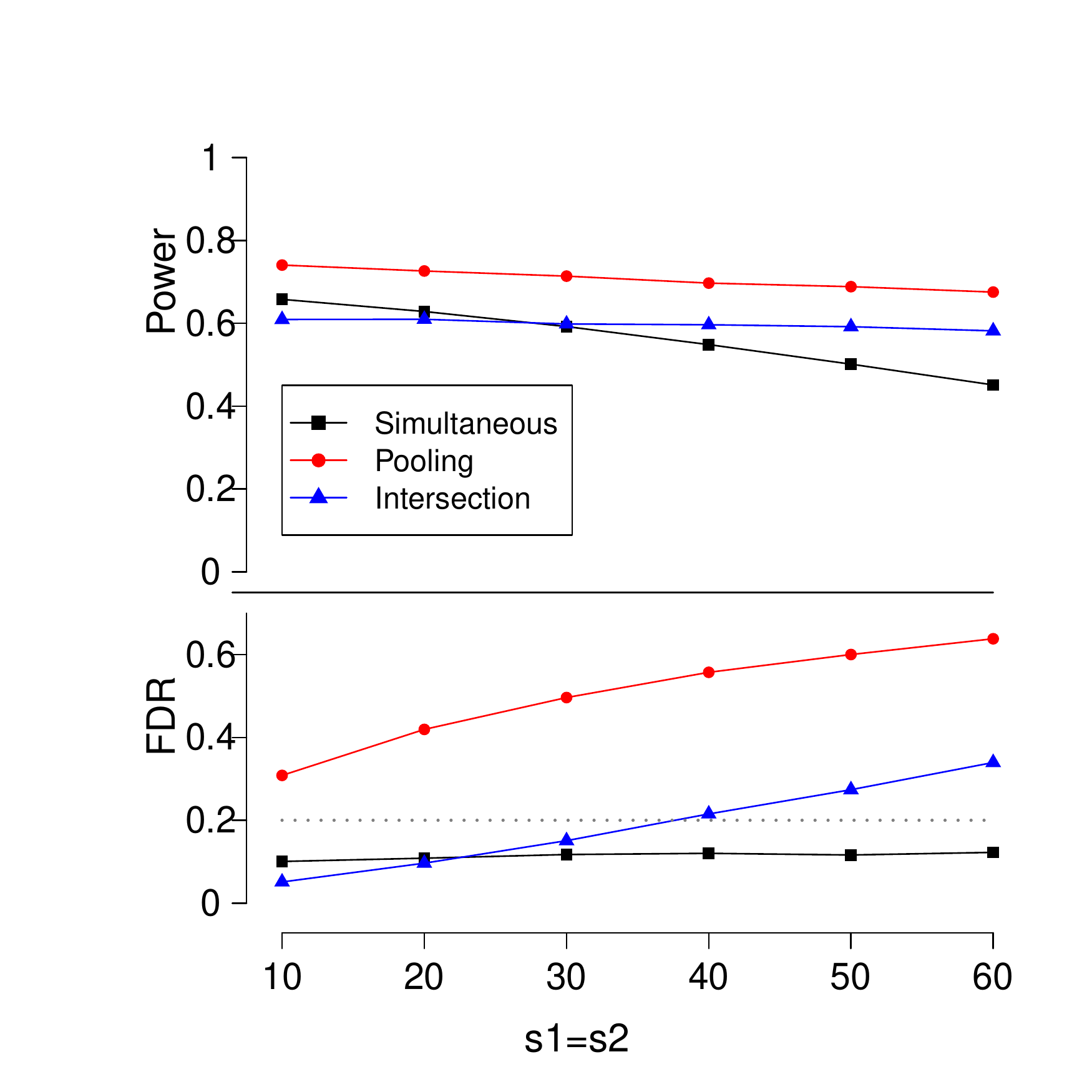}
    \caption{The power and the FDR for the continuous (upper), binary (middle) and mixed (lower) settings when varying $s_1=s_2$. Results for Scenario 1 are on the left and for Scenario 2 are on the right. A colored version of the figures can be found in the electronic version of the article.}
    \label{fig:s1s2}
\end{figure}

\section{Real data analysis}\label{sec:application}
In this section, we demonstrate the application of the \textit{Simultaneous knockoff} method on two real data examples. For the first data example, we use the Fixed-X knockoffs with linear models for the individual experiments; and for the second data example, we use the Model-X knockoffs with a penalized Cox regression model for each gene expression experiment.

\subsection{Application to the Communities and Crime data}

In a crime rate study, we aim to identify features that are universally associated with the community crime rate, regardless of race distribution in the community. This is potentially useful in guiding unbiased policy-making based on race-blinded findings. To achieve that, we select features that are simultaneously associated with the crime rate in different race distribution groups. 

We use the publicly available Communities and Crime data set from the University of California Irvine (UCI) machine learning repository. The data set contains crime information on $n=1994$ communities with different race distributions in the US. For the individual communities, it has information on the crime rate, as well as $122$ other variables that are potentially related to the crime rate. All continuous variables are normalized to the 0-1 range. Our primary outcome of interest is the normalized crime rate, our feature candidates are the $p=95$ features with no missing values that are not directly defined by race. We split the data into two subsets with approximately equal sample sizes based on the proportion of the Caucasian population (high/low) within the community. We fit a linear regression model to each subset of the data, aiming to identify mutual signals from both models with FDR control. We compare the three variable selection procedures (\textit{simultaneous}, \textit{pooling}, and \textit{intersection}) using the knockoff filter \eqref{eqn:knockoff}. We also compare our method with the \textit{repFDR} \citep{heller2014}. The \textit{repFDR} is developed for replicability studies, which requires that under the null the Z-scores are normally distributed. More details can be found in Web Appendix D.1.

Table \ref{tab:res_con} shows the results of identified features from different algorithms with a targeted FDR level of $q=0.1$. Our proposed \textit{simultaneous} method selected the following variables: ``the percentage of households with public assistance income in 1989", ``the percentage of kids born to never married", ``the percent of persons in dense housing". The \textit{pooling} method selected ``the percentage of kids born to never married", ``the percent of persons in dense housing", and ``the number of vacant households". The \textit{intersection} method selected ``the percentage of kids born to never married", and ``the percent of persons in dense housing". The \textit{repFDR} method selected ``the percentage of males who have never married".

\begin{table} 
\begin{center}
\begin{tabular}{c|ccc}
\hline
Analysis & Method & \# of features selected & \# of fake features selected\\
\hline
\multirow{4}{*}{Main}&Simultaneous& 3 & NA\\
         &pooling&3& NA\\
         &intersection&2& NA\\
         &repFDR & 1& NA\\
\hline
\multirow{4}{*}{Sensitivity}&Simultaneous& 3 & 0 \\
         &pooling&3&0 \\
         &intersection&1&0\\
         &repFDR &0&0\\
\hline
\end{tabular}
\end{center}
\vspace{0.2in}
\caption{Feature selection results for the primary analysis of the community crime data and the sensitivity analysis of the community crime data with added fake features from permutations. NA indicates not applicable.}
\label{tab:res_con}
\end{table}

To verify the robustness of our proposed method, we also added a set of $95$ fake features by permutation. The feature selection results are shown in Table \ref{tab:res_con} (Sensitivity). The variable selections with the \textit{simultaneous} method is relatively stable.

\subsection{Application to the TCGA data}

In this section, we demonstrate the usage of the \textit{Simultaneous knockoffs} to identify gene expressions that are associated with glioblastoma multiforme (GBM) for both male and female sub-populations. GBM is known as a hallmark of the malignant process, however, the molecular mechanisms that dictate the locally invasive progression remain an active research area. In this example, we use male and female sub-populations to demonstrate variable selection using our proposed \textit{simultaneous} method to identify mutual signals from heterogeneous data sets. In real applications, the sub-populations can be much more complicated (i.e. from different sources, collected at different places and times, and with different technologies). Therefore the fact that the \textit{simultaneous} method does not require the data to be pooled makes it useful. 

Our GBM gene expression data are from The Cancer Genome Atlas (TCGA). The data contains 501 subjects with the overall survival outcome (in days) and 17813 level-3 gene-level expression data. There are $71$ censored and $430$ death cases. We use the sure independence screening (SIS, see \cite{fan2008}) for a marginal screening, leaving $d=\lfloor n/log(n) \rfloor=79$ genes with the smallest p-values. The SIS method allows dimension reduction from exponentially growing $p$ to a relatively large scale $d <n$, while the reduced model still contains all the true signals with high probability. It has been widely used in other studies \citep{zhang2021, luo2020}. We apply the \textit{Simultaneous knockoffs} to identify genes associated with the survival time within both the male and female GBM patient cohorts. We also compare our method with the methods \textit{pooling}, \textit{intersection} and \textit{repFDR}. We perform sensitivity analysis by relaxing the screening procedure to include all genes with p-values smaller than 0.0002, which leads to $111$ candidate genes after the pre-screening step. For missing data, a complete case analysis was performed for the main analysis, while a single imputation was performed in the sensitivity analysis. 

With the \textit{Simultaneous knockoff} method, three genes (EID3, RNPS1, and VPS72) are selected. All these genes have been frequently studied for their relationships with cancer, including GBM \citep{kunadis2021,goyal2021diagnostic,heiland2016}. The \textit{pooling} method selected two genes (CROCC and FMR1NB), and the \textit{intersection} method selected none. The \textit{repFDR} method selected 2 genes (MAP2K4 and ZNF239). All the three genes selected by \textit{Simultaneous knockoff}s are also selected when using the threshold $p<0.0002$ for pre-screening, although one additional gene FMR1NB is also selected under the relaxed screening scenario. Sensitivity analysis also shows that EID3 and  VPS72 are selected when we use single imputation to treat the missing data.

\section{Discussion}
\label{sec:discussion}

The \textit{Simultaneous knockoff} method is a general framework for testing the union null hypotheses on conditional associations between candidate features and outcomes. It can work with very general conditional models and covariate structures, assuming the $K$ experiments are independent. This method provides opportunities to combine information from  the experiments with heterogeneous $\X$ structures, different dependencies of $Y|\X$, and different outcomes $Y$. The FDR control guarantee is exact for finite sample settings. 

This method has even broader applications beyond our motivating examples. For example, when working with the electronic health record (EHR) data from multiple data centers, some outcome variables and covariates are recorded differently among the centers (for example, for obesity, some centers record the body mass index (BMI) of patients, but others use yes/no); and the demographic distributions are different. The \textit{Simultaneous knockoff} method can be used to identify mutual signals to confirm the associations. This method also only requires very limited information (only the test statistics) to be shared among the data centers, which benefits data collaboration under privacy protections.

 One big limitation of the current method is that in practice it is hard to work with ultra high-dimensional data due to the limits of the computer memory for the knockoff construction. We use the SIS pre-screening step in our real data example to circumvent this problem. Although theoretically, the \textit{Simultaneous knockoff} method does not require the number of variables to be smaller than the number of observations when using the Model-X knockoff construction, the efficient construction of the knockoffs for ultra-high-dimensional features is still challenging and worth further research. Another limitation of the work is the lack of a theoretical analysis of power. This problem is difficult in general and the power of the Model-X knockoff method has just been studied \citep{wang2021power} recently. We expect the power of the \textit{Simultaneous knockoff} method to decrease monotonically as $K$ and $p$ increase. The exact power change with the growth of $n$, $p$, and $K$ is still a challenging open question.

There are some extensions of the knockoff methods to work with group-wise variable selection \citep{chen2020,dai2016} where we are interested in testing whether each specific group of variables is associated with the outcome conditioning on other groups of variables. The current version of \textit{Simultaneous knockoff} method focuses on the selection of individual features. The extension to work on the selection of groups of features is worth future explorations 

There are many open questions in multiple testing that are related to the hypothesis testing for the union null hypotheses. Although our \textit{Simultaneous knockoff} method provides solutions to the reproducibility of studies, feature selections across heterogeneous populations, and mediation analysis, there are still more challenges from the real applications. For example, we can further explore methods that will allow combining the information from different data sets with unidentified overlapping samples (like case-cohort study).

\section*{Acknowledgements}
This research is partly supported by the National Cancer Institute under grant R01 CA119171 and by the National Institute of General Medical Sciences under grant U54 GM115458.

\section*{Data Availability Statement}
The data that support the findings in this paper are available at \url{https://archive.ics.uci.edu/ml/datasets/communities+and+crime} (The Community and Crime data) and at \url{http://www.liuzlab.org/TCGA2STAT/#package-archive} (TCGA data) \cite{wan2015}.

  \bibliographystyle{apa}
 \bibliography{bib}

\appendix

\section*{Web Appendix A: Details for the choices of the functions in Algorithm 2.2.2}
\subsection*{Appendix A.1: Knockoff assumptions and construction methods}
\paragraph{The Fixed-X Knockoff assumptions and construction methods}
The Fixed-X knockoff method only works for the continuous outcomes that are normally distributed as it requires the decentralized linear model, $\Y =\X\bmbeta + \bmvarepsilon, ~\text{where $\Y \in \R^{n}$, $\X \in \R^{n\times p}$, $\bmvarepsilon \in \R^{n} \sim \mathcal{N}(0,\sigma^2 I_n)$}$, and $\bmbeta \in \R^{p}$. In addition, the Fixed-X knockoff method requires $n\geq 2p$. 

The constructed Fixed-X knockoff features $\Xp$ need to satisfy that, for some vector $\bms\geq 0$,
\begin{equation}
\Xp^\top \Xp = \X^\top \X, \quad \Xp^\top \X = \X^\top \X - \diag\{\bms\}.
\end{equation} 
The knockoffs $\Xp$ can be computed using an efficient semidefinite programming (SDP) algorithm with or without randomization \citep{barber2015}. Since $\X$ is treated as a fixed design, so this approach allows any type of $\X$, either continuous, categorical, count, or mixed (i.e., different types for different columns of $\X$). For categorical variables, dummy variables will be created and used in $\X$. One thing we would like to point out is that based on our current notation, it will perform a test for each dummy variable separately and compute the FDR by treating different dummy variables as multiple features. However, it will be scientifically more interesting to consider controlling the group FDR as in \citet{dai2016} so that the test will not depend on which reference level we choose when creating dummy variables. We use the R function \textit{create.fixed} within the R package \textit{knockoff} to implement this construction method. 

\paragraph{The Model-X Knockoff assumptions and construction methods}
The general sampling method as described in \citet{candes2018} can be applied to all kinds of data types (continuous, categorical, count, or mixed with or without missing data) as long as the joint distribution of $\X$ is known or can be estimated. However, for the implementation, their current R package only allows for continuous $\X$, so in simulation, we only consider the Model-X Knockoff construction methods for continuous $\X$. Specifically, two Model-X Knockoff construction methods are reviewed below:
\begin{itemize}
\item Gaussian: When the distribution of $\X$ is assumed to be Gaussian with the variance-covariance matrix $\bmSigma$, we can sample $\widetilde{\X}$ from $\widetilde{\X}|\X\sim N(\bmmu, \mathbf{V})$, where $\bmmu$ and $\mathbf{V}$ are given by,
\begin{eqnarray*}
\bmmu&=&\X-\X\bmSigma^{-1}\diag\{\bms\},\\
\mathbf{V}&=&2\diag\{\bms\}-\diag\{\bms\}\bmSigma^{-1}\diag\{\bms\}.
\end{eqnarray*}
We use the R function \textit{create.gaussian} within the R package \textit{knockoff} to implement this construction method.
\item Second order: The second-order Model-X knockoff construction method tries to sample $\widetilde{\X}$ such that 
\begin{equation*}
Cov([\X,\widetilde{\X}]) = \left(\begin{array}{cc}\bmSigma&\bmSigma-\diag\{\bms\}\\\bmSigma-\diag\{\bms\}&\bmSigma\end{array}\right).\end{equation*} 
where the requirements of $\bms$ are the same as for the Fixed-X knockoff and can be solved using the approximate semidefinite program (ASDP) algorithm as given in \citep{candes2018}. We use the R function \textit{create.second} within the R package \textit{knockoff} to implement this construction method.
\end{itemize}
The knockoffs $\Xp$ can also be generated using various advanced algorithms \citep{romano2019,liu2019,bates2020,spector2020}. However, we can only find the available implementations of these methods for continuous variables.

\subsection*{Appendix A.2: Statistics compatible with each knockoff construction method}
The statistics described in this section are for individual hypothesis testing for simplicity of notation. It can be extended to statistics for the group hypothesis testing similar to those used in \citet{dai2016}.
\subsection*{Appendix A.2.1: Statistics compatible with the Fixed-X knockoff construction method}
\begin{itemize}
\item Order of selection (Lasso): We can choose the lasso variable selection procedure and construct the statistics as $\widehat{\bmbeta}(\lambda)$, where 
\[\widehat{\bmbeta}(\lambda) = \arg\min_{\bmb\in\R^{2p}} \left\{\norm{\Y - [ \X \ \Xp ] \bmb}_2^2 +\lambda \norm{\bmb}_1\right\}.\]
We can run over a range of $\lambda$ values decreasing from $+\infty$ (a fully sparse model) to $0$ (a fully dense model) and define $Z_j$ as the maximum $\lambda$ such that $\widehat{\beta}_j(\lambda)\neq 0$. If there is no $\lambda$ such that $\widehat{\bmbeta}_j(\lambda)\neq 0$, then we will simply define $Z_j$ as 0.
\item Absolute coefficient (Lasso): We can use $|\widehat{\beta}_j(\lambda)|$ as defined above with a specific $\lambda$ value. 
\end{itemize}

\subsection*{Appendix A.2.2: Statistics compatible with the Model-X knockoff construction method}
For the Model-X knockoffs, very general conditional models (such as generalized linear models or nonlinear models) can be used. In addition to the statistics listed in Appendix A.2.1, we can also use the following statistics.

\begin{itemize}
\item Absolute coefficient (glmnet): We can use $|\widehat{\beta}_j(\lambda)|$ from the penalized generalized linear regression or the penalized Cox regression model of $Y$ on $[\X\  \widetilde{\X}]$ with either a specific $\lambda$ value or a $\widehat{\lambda}$ estimated from cross-validation. 
\item Standardized coefficient (glmnet): We can also use the standardized regression coefficients $|\widehat{\beta}_j(\lambda)|/\widehat{\text{SE}}(\widehat{\beta}_j(\lambda))$ from the penalized generalized linear regression or the penalized Cox regression model of $Y$ on $[\X\   \widetilde{\X}]$ with either a specific $\lambda$ value or a $\widehat{\lambda}$ estimated from cross-validation.
\item Order of selection (glmnet): We can use the minimum $\lambda$ such that the regression coefficient becomes 0 from the penalized generalized linear regression or the penalized Cox regression model of $Y$ on $[\X\ \widetilde{\X}]$, or the reciprocal of the order of each variable to be included in the model when increasing the number of variables to be selected.
\item Variable importance factor: We can use the variable importance factors from the random forest fitting of $Y$ on $[\X\ \widetilde{\X}]$ with either fixed tuning parameters or tuning parameters selected from cross-validation. 
\end{itemize}

\subsection*{Appendix A.3: Examples of OSCFs}
It is obvious that the example given in section 2.2.1 can be generalized to a broader class of OSCFs as below:
\begin{align} \label{eqn:OSCF}
 & Z_j=h_j\left(\sum_{\bma\in A_e}g_j(\prod_{k=1}^K f_{jk}(Z_{jk})^{a_k}f_{jk}(\widetilde{Z}_{jk})^{1-a_k}), \psi_j([\Z,\widetilde{\Z}])\right) \quad \textnormal{and}\quad \nonumber \\
& \widetilde{Z}_j=h_j\left(\sum_{\bma\in A_o}g_j(\prod_{k=1}^K f_{jk}(Z_{jk})^{a_k}f_{jk}(\widetilde{Z}_{jk})^{1-a_k}),\psi_j([\Z^1,\widetilde{\Z}^1],\cdots,[\Z^K,\widetilde{\Z}^K])\right),    
\end{align}
with arbitrary functions $f_{jk}(\cdot),g_{j}(\cdot), h_{j}(\cdot,\cdot)$, and a symmetric function $\psi_j(\cdot)$ in the sense that
\begin{multline*}
    \psi_j([\Z^1,\widetilde{\Z}^1],\cdots,[\Z^K,\widetilde{\Z}^K])=\psi_j([\Z^1,\widetilde{\Z}^1]_{\textnormal{Swap}(S)},\cdots,[\Z^K,\widetilde{\Z}^K])\\=\cdots=\psi_j([\Z^1,\widetilde{\Z}^1],\cdots,[\Z^K,\widetilde{\Z}^K]_{\textnormal{Swap}(S)}), ~\text{for $j\in [p]$ and $k \in [K]$.}
\end{multline*}  Here $f_{jk}(\cdot)$s can be viewed as the transformations of test statistics for features within individual experiments, allowing us to add weights to the features based on some prior knowledge. $g_j(\cdot)$s depict how information from the different experiments are pooled together, which can be different for different features. $h_j(\cdot,\cdot)$s are some final transformations of the statistics, which can also be different for different features. Examples of $\psi_j(\cdot)$s include maximum functions such as \begin{align*}
   & \psi_j([\Z^1,\widetilde{\Z}^1],\cdots,[\Z^K,\widetilde{\Z}^K])=\max_{k=1}^K\max(Z_{jk},\widetilde{Z}_{jk}),~\text{and}\\ & \psi_j([\Z^1,\widetilde{\Z}^1],\cdots,[\Z^K,\widetilde{\Z}^K])=\max_{k=1}^K\max_{j=1}^p\max(Z_{jk},\widetilde{Z}_{jk});
\end{align*} and sum functions such as \begin{multline*}
    \psi_j([\Z^1,\widetilde{\Z}^1],\cdots,[\Z^K,\widetilde{\Z}^K])=\sum_{j=1}^p\sum_{k=1}^Kw_{jk}(Z_{jk}+\widetilde{Z}_{jk})~ \text{with arbitrary weights $w_{jk}$s.}
\end{multline*}

Notice that this is a non-exhaustive list and there are still OSCFs beyond the class \eqref{eqn:OSCF}.

\subsection*{Appendix A.4: Examples of flip sign functions}
Example of antisymmetric functions include difference function $f(x,y)=x-y$ and signed max function $f(x,y)=max(x,y)\cdot(-1)^{\mathbbm{1}\{x>y\}}$. Actually, it can be the product of any symmetric function $g(x,y)=g(y,x)$ and the sign function, i.e., $f(x,y)=g(x,y)\cdot(-1)^{\mathbbm{1}\{x>y\}}$.
The flip sign function can be constructed entry-wise using the antisymmetric function as introduced in \cite{candes2018}. Examples of flip sign functions constructed this way include the difference function \begin{equation}\label{eqn:diff_fun}
f([\Z,\widetilde{\Z}])=\W \in \R^p, ~\text{where}~ W_j = Z_j - \widetilde{Z}_j,\end{equation} and the signed max function
\[f([\Z,\widetilde{\Z}])=\W \in \R^p, ~\text{where}~W_j= (Z_j\vee \widetilde{Z}_j) \cdot (-1)^{\mathbbm{1}\{Z_j<\widetilde{Z}_j\}} .\] 
However, there can be other function forms allowing $W_j$ to depend on not only $Z_j$ and $\widetilde{Z}_j$ but also $Z_{-j}$ and $\widetilde{Z}_{-j}$, for example
$W_j=\sum_{i=1}^p(Z_i+\widetilde{Z}_i)(Z_j-\widetilde{Z}_j)$. A broader class can be constructed by noticing that the Hadamard product of a flip sign function and a (pairwise) symmetric function $g(x_1,\cdots,x_p,y_1,\cdots,y_p)=g(x_1,\cdots,y_k,x_{k+1},\cdots,x_p,y_1,\cdots,x_{k},y_{k+1},\cdots,y_p)$ for any $k\in [p]$ is still a flip sign function.

\subsection*{Appendix A.5: Other examples for $W$ construction}
We first show that the composite function of a flip sign function and an OSCF can lead to an OSFF.
\begin{lemma}\label{lem:0}
Let $[\Z,\widetilde{\Z}] = f_1([\Z^1,\widetilde{\Z}^1],\cdots,[\Z^K,\widetilde{\Z}^K])$ be an OSCF, and $\W = f_2([\Z,\widetilde{\Z}])$ be a flip sign function. Then we have
$\W = f([\Z^1,\widetilde{\Z}^1],\cdots,[\Z^K,\widetilde{\Z}^K]):=f_2\circ f_1 ([\Z^1,\widetilde{\Z}^1],\cdots,[\Z^K,\widetilde{\Z}^K])$ is an OSFF.
\end{lemma}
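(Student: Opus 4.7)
The plan is to prove Lemma \ref{lem:0} by direct unpacking of the three definitions, with no auxiliary machinery required. Fix an arbitrary index $k \in [K]$ and an arbitrary subset $S \subset [p]$. I want to verify the defining equation of an OSFF for $f := f_2 \circ f_1$, namely
\[
f([\Z^1,\widetilde{\Z}^1],\cdots,[\Z^k,\widetilde{\Z}^k]_{\textnormal{Swap}(S)},\cdots,[\Z^K,\widetilde{\Z}^K]) \;=\; f([\Z^1,\widetilde{\Z}^1],\cdots,[\Z^K,\widetilde{\Z}^K]) \odot \bmepsilon(S).
\]

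First, I would rewrite the left-hand side via the composition as $f_2\bigl(f_1([\Z^1,\widetilde{\Z}^1],\cdots,[\Z^k,\widetilde{\Z}^k]_{\textnormal{Swap}(S)},\cdots,[\Z^K,\widetilde{\Z}^K])\bigr)$. Then I apply the OSCF property of $f_1$ (Definition \ref{def:one-swap}) to move the swap outside of $f_1$: the argument of $f_2$ becomes $\bigl[f_1([\Z^1,\widetilde{\Z}^1],\cdots,[\Z^K,\widetilde{\Z}^K])\bigr]_{\textnormal{Swap}(S)}$. Introducing the shorthand $[\Z,\widetilde{\Z}] := f_1([\Z^1,\widetilde{\Z}^1],\cdots,[\Z^K,\widetilde{\Z}^K]) \in \R^{2p}$, the left-hand side is now exactly $f_2([\Z,\widetilde{\Z}]_{\textnormal{Swap}(S)})$.

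Next, I invoke the flip-sign property of $f_2$ (Definition \ref{def:flip-sign}), which gives $f_2([\Z,\widetilde{\Z}]_{\textnormal{Swap}(S)}) = f_2([\Z,\widetilde{\Z}]) \odot \bmepsilon(S)$. Unwinding the shorthand back, the right-hand factor is $f_2(f_1([\Z^1,\widetilde{\Z}^1],\cdots,[\Z^K,\widetilde{\Z}^K])) = f([\Z^1,\widetilde{\Z}^1],\cdots,[\Z^K,\widetilde{\Z}^K])$, yielding exactly the required identity. Since $k$ and $S$ were arbitrary, $f$ satisfies the OSFF definition.

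There really is no main obstacle here; the argument is a one-line chain of equalities that simply composes the two equivariance properties. The only thing to double check is the dimensional compatibility of the composition (namely that $f_1$ maps into $\R^{2p}$, which is the domain of $f_2$, and that the Swap operator acts on $\R^{2p}$-vectors in both definitions in exactly the same way), which is immediate from the stated signatures.
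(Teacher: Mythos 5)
Your proof is correct and follows exactly the same chain of equalities as the paper's own argument: rewrite the left-hand side via the composition, use the OSCF property of $f_1$ to pull the swap outside, then apply the flip-sign property of $f_2$. No differences worth noting.
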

\begin{proof}
We verify this by checking the definition:
\begin{eqnarray*}
&&f([\Z^1,\widetilde{\Z}^1],\cdots,[\Z^k,\widetilde{\Z}^k]_{\textnormal{Swap}(S)},\cdots,[\Z^K,\widetilde{\Z}^K])\\
&=&f_2\circ f_1([\Z^1,\widetilde{\Z}^1],\cdots,[\Z^k,\widetilde{\Z}^k]_{\textnormal{Swap}(S)},\cdots,[\Z^K,\widetilde{\Z}^K])\\
&=&f_2 ( [f_1([\Z^1,\widetilde{\Z}^1],\cdots,[\Z^k,\widetilde{\Z}^k],\cdots,[\Z^K,\widetilde{\Z}^K])]_{\textnormal{Swap}(S)})\\
&=&f_2 ( f_1([\Z^1,\widetilde{\Z}^1],\cdots,[\Z^k,\widetilde{\Z}^k],\cdots,[\Z^K,\widetilde{\Z}^K]))\odot \bmepsilon(S)\\
&=&f([\Z^1,\widetilde{\Z}^1],\cdots,[\Z^K,\widetilde{\Z}^K])\odot \bmepsilon(S).
\end{eqnarray*}
\end{proof}

Besides using the composite function of a flip sign function and an OSCF, we can also create $W$ by combining the flip sign functions within each dataset. 
This basically allows us to first decide which flip sign function shall be used within each dataset before combining information together. The results are summarized in the following lemma.

\begin{lemma}\label{lem:01}
Let $\W^k=h_k([\Z^k,\widetilde{\Z}^k]), k\in [K]$ are $K$ flip sign functions and $g_j: \R^K\rightarrow \R, j\in [K]$ are $K$ combining function such that $g_j(x_1,\cdots,x_K)=-g_j(x_1,\cdots,-x_k,\cdots,x_K)$ for all $k\in [K]$. If we define 
$f: R^{2Kp}\rightarrow R^p$ constructed elementwise as
\begin{eqnarray*}
f([\Z^1,\widetilde{\Z}^1],\cdots,[\Z^K,\widetilde{\Z}^K])_j=g_j(h_1([\Z^1,\widetilde{\Z}^1])_j,\cdots, h_K([\Z^K,\widetilde{\Z}^K])_j),
\end{eqnarray*}
then $f$ is an OSFF.
\end{lemma}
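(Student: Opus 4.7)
The plan is to verify the OSFF defining equation directly by unpacking the elementwise definition of $f$ and chaining two elementary properties: the flip-sign property of each $h_k$ applied coordinatewise, and the assumed per-coordinate antisymmetry of each $g_j$ in its $k$-th argument. So I would fix an arbitrary $k \in [K]$ and an arbitrary $S \subset [p]$, and aim to show that
$$f([\Z^1,\widetilde{\Z}^1],\cdots,[\Z^k,\widetilde{\Z}^k]_{\textnormal{Swap}(S)},\cdots,[\Z^K,\widetilde{\Z}^K]) = f([\Z^1,\widetilde{\Z}^1],\cdots,[\Z^K,\widetilde{\Z}^K])\odot \bmepsilon(S)$$
coordinate by coordinate in $j \in [p]$.

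First I would exploit the fact that the swap on the $k$-th input only affects $h_k$: for $k' \neq k$, the value $h_{k'}([\Z^{k'},\widetilde{\Z}^{k'}])_j$ appearing in the $j$-th coordinate of $f$ is unchanged. For the $k$-th slot, I would apply the flip-sign property of $h_k$, which gives $h_k([\Z^k,\widetilde{\Z}^k]_{\textnormal{Swap}(S)}) = h_k([\Z^k,\widetilde{\Z}^k]) \odot \bmepsilon(S)$. Reading off the $j$-th entry, this equals $h_k([\Z^k,\widetilde{\Z}^k])_j$ when $j \notin S$ and $-h_k([\Z^k,\widetilde{\Z}^k])_j$ when $j \in S$.

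Next I would substitute these into the elementwise formula defining $f$ and invoke the assumed sign-antisymmetry of $g_j$ in its $k$-th argument, namely $g_j(x_1,\cdots,-x_k,\cdots,x_K)=-g_j(x_1,\cdots,x_k,\cdots,x_K)$. For $j \in S$, the sign of the $k$-th argument flips, so $g_j$ evaluates to minus its original value; for $j \notin S$, nothing changes. Either way, the resulting $j$-th coordinate equals $\epsilon(S)_j \cdot f([\Z^1,\widetilde{\Z}^1],\cdots,[\Z^K,\widetilde{\Z}^K])_j$, which is precisely the $j$-th coordinate of $f([\Z^1,\widetilde{\Z}^1],\cdots,[\Z^K,\widetilde{\Z}^K]) \odot \bmepsilon(S)$. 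Since $j$, $k$, and $S$ were arbitrary, $f$ satisfies the OSFF property.

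There is no real obstacle here: the argument is a one-line chase once the definitions are expanded. The only points to be careful with are (i) confirming that the swap on $[\Z^k,\widetilde{\Z}^k]$ does not affect the inputs to any $h_{k'}$ with $k' \neq k$, which is immediate from the elementwise construction, and (ii) invoking the antisymmetry of $g_j$ in the \emph{same} index $k$ as the one being swapped, which matches the hypothesis stated for all $k \in [K]$.
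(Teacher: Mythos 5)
Your proposal is correct and follows essentially the same route as the paper's proof: expand $f$ elementwise, use the flip-sign property of $h_k$ to turn the swap into multiplication of the $k$-th argument by $\epsilon(S)_j$, and then pull the sign out of $g_j$ via its assumed antisymmetry in the $k$-th coordinate. The explicit case split on $j\in S$ versus $j\notin S$ is just a verbal rendering of the same one-line chain of equalities the paper writes.
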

\begin{proof}
To verify this, notice that for each element, we have
\begin{eqnarray*}
&&f([\Z^1,\widetilde{\Z}^1],\cdots,[\Z^k,\widetilde{\Z}^k]_{\textnormal{Swap}(S)},\cdots,[\Z^K,\widetilde{\Z}^K])_j\\
&=&g_j(h_1([\Z^1,\widetilde{\Z}^1])_j,\cdots,h_k([\Z^k,\widetilde{\Z}^k]_{\textnormal{Swap}(S)})_j,\cdots, h_K([\Z^K,\widetilde{\Z}^K])_j)\\
&=&g_j(h_1([\Z^1,\widetilde{\Z}^1])_j,\cdots,h_k([\Z^k,\widetilde{\Z}^k])_j\epsilon(S)_j,\cdots, h_K([\Z^K,\widetilde{\Z}^K])_j)\\
&=&\epsilon(S)_jg_j(h_1([\Z^1,\widetilde{\Z}^1])_j,\cdots,h_k([\Z^k,\widetilde{\Z}^k])_j,\cdots, h_K([\Z^K,\widetilde{\Z}^K])_j)\\
&=&\epsilon(S)_jf([\Z^1,\widetilde{\Z}^1],\cdots,[\Z^K,\widetilde{\Z}^K])_j.
\end{eqnarray*}
So we have 
\[f([\Z^1,\widetilde{\Z}^1],\cdots,[\Z^k,\widetilde{\Z}^k]_{\textnormal{Swap}(S)},\cdots,[\Z^K,\widetilde{\Z}^K])=f([\Z^1,\widetilde{\Z}^1],\cdots,[\Z^K,\widetilde{\Z}^K])\odot\bmepsilon(S),\]
which finishes the proof.
\end{proof}

Still, there are other ways to construct the OSFF function for calculating the filter statistics $\W$ beyond the two classes we discussed above and a few examples are given below:
\begin{itemize}
\item Direct Max Sum: $\W=\sum_{k=1}^K(\Z^k\vee \widetilde{\Z}^k)\cdot(-1)^{\mathbbm{1}\{\Z<\widetilde{\Z}\}}$.
\item Direct Max Max: $\W=\max_{k=1}^K(\Z^k\vee \widetilde{\Z}^k)\cdot(-1)^{\mathbbm{1}\{\Z<\widetilde{\Z}\}}$.
\item Direct Diff Sum: $\W=\sum_{k=1}^K|\Z^k-\widetilde{\Z}^k|\cdot(-1)^{\mathbbm{1}\{\Z<\widetilde{\Z}\}}$.
\item Direct Diff Max: $\W=\max_{k=1}^K|\Z^k-\widetilde{\Z}^k|\cdot(-1)^{\mathbbm{1}\{\Z<\widetilde{\Z}\}}$.
\item Direct Sum Sum: $\W=\sum_{k=1}^K(\Z^k+\widetilde{\Z}^k)\cdot(-1)^{\mathbbm{1}\{\Z<\widetilde{\Z}\}}$.
\item Direct Sum Max: $\W=\max_{k=1}^K|\Z^k+\widetilde{\Z}^k|\cdot(-1)^{\mathbbm{1}\{\Z<\widetilde{\Z}\}}$.
\end{itemize}

\renewcommand{\thelemma}{B\arabic{lemma}}
\section*{Web Appendix B: Additional proof}
\subsection*{B.1: Proof of Theorem 1}\label{pf:thm1}

The proof of Theorem 1 follows the proof idea in \cite{barber2015}.
Let $m=\#\{j:W_j\neq 0\}$ and assume without loss of generality that $|W_1|\geq |W_2|\geq\cdots\geq |W_m|>0$. Define p-values $p_j=1$ if $W_j<0$ and $p_j=1/2$ if $W_j>0$, then Lemma 1 implies that null $p$-values are $i.i.d.$ with $p_j\geq \text{Unif}[0,1]$ and are independent from nonnulls. 

We first show the result for the knockoff+ threshold. Define $V=\#\{j\leq \widehat{k}_{+}: p_j\leq 1/2, j\in \mathcal{H}\}$ and $R=\#\{j\leq \widehat{k}_{+}:p_j\leq 1/2\}$ where $\widehat{k}_{+}$ satisfy that $|W_{\widehat{k}_{+}}|=\tau_{+}$ where $\tau_{+}$ is defined in theorem 1, we have
\begin{eqnarray*}
&&\EE{\frac{V}{R\vee 1}}=\EE{\frac{V}{R\vee 1}\mathbbm{1}_{\widehat{k}_{+}>0}}\\
&=&\EE{\frac{\#\{j\leq \widehat{k}_{+}: p_j\leq 1/2, j\in \mathcal{H}\}}{1+\#\{j\leq \widehat{k}_{+}: p_j> 1/2, j\in \mathcal{H}\}}\left(\frac{1+\#\{j\leq \widehat{k}_{+}: p_j> 1/2, j\in \mathcal{H}\}}{\#\{j\leq \widehat{k}_{+}:p_j\leq 1/2\}\vee 1}\right)\mathbbm{1}_{\widehat{k}_{+}>0}}\\
&\leq&\EE{\frac{\#\{j\leq \widehat{k}_{+}: p_j\leq 1/2, j\in \mathcal{H}\}}{1+\#\{ j\leq \widehat{k}_{+}: p_j> 1/2, j\in \mathcal{H}\}}}q\leq q,
\end{eqnarray*}
where the first inequality holds by the definition of $\widehat{k}_{+}$ and the second inequality holds by Lemma \ref{lem:mar}.

Similarly, for the knockoff threshold, we have $V=\#\{j\leq \widehat{k}_0: p_j\leq 1/2, j\in \mathcal{H}\}$ and $R=\#\{j\leq \widehat{k}_0:p_j\leq 1/2\}$ where $\widehat{k}_0$ satisfies that $|W_{\widehat{k}_0}|=\tau$ where $\tau$ is defined as in theorem 1, then
\begin{eqnarray*}
&& \EE{\frac{V}{R+q^{-1}}}\\
&=&\EE{\frac{\#\{ j\leq \widehat{k}_0: p_j\leq 1/2, j\in \mathcal{H}\}}{1+\#\{j\leq \widehat{k}_0: p_j> 1/2, j\in \mathcal{H}\}}\left(\frac{1+\#\{ j\leq \widehat{k}_0: p_j> 1/2, j\in \mathcal{H}\}}{\#\{j\leq \widehat{k}_0:p_j\leq 1/2\}+q^{-1}}\right)\mathbbm{1}_{\widehat{k}_0>0}}\\
&\leq&\EE{\frac{\#\{j\leq \widehat{k}_0: p_j\leq 1/2, j\in \mathcal{H}\}}{1+\#\{ j\leq \widehat{k}_0: p_j> 1/2, j\in \mathcal{H}\}}}q\leq q,
\end{eqnarray*}
where the first inequality holds by the definition of $\widehat{k}_0$ and the second inequality holds by Lemma \ref{lem:mar}.

\subsection*{B.2: Additional proofs for Lemmas related to Theorem 1} 
Here we first give the proof of Lemma 1 below:
\begin{proof}

For any $S\subseteq \mathcal{H}$, we can write it as the union of $K$ subsets $S=\cup_{k=1}^K S_k$, where $S_k\subseteq \mathcal{H}_k$ for $k = 1,\cdots,K$, and $S_{k1}\cap S_{k2} = \emptyset$ for all $k_1 \neq k_2$. In particular, we can let $S_k=S\cap \mathcal{H}_k\cap (\cup_{j=1}^{k-1}\mathcal{H}_j)^c$. 
Since $S_k\subseteq \mathcal{H}_k$, for $k\in [K]$, any statistics $[\Z^k, \widetilde{\Z}^k]=w([\X^k,\widetilde{\X}^k],\Y^k)$, by the construction of knockoffs, satisfy 
$[\Z^k, \widetilde{\Z}^k] \eqd [\Z^k, \widetilde{\Z}^k]_{\textnormal{Swap}(S_k)} $. By the mutually independence between $[Z^1,\widetilde{Z}^1],\cdots, [Z^K,\widetilde{Z}^K]$, we have 

\[f\left([\Z^1,\widetilde{\Z}^1]_{\textnormal{Swap}(S_1)},\cdots,[\Z^K,\widetilde{\Z}^K]_{\textnormal{Swap}(S_K)}\right)\eqd f\left([\Z^1,\widetilde{\Z}^1],\cdots,[\Z^K,\widetilde{\Z}^K]\right).\] 

Using the definition of the OSFF, we have
\begin{eqnarray*}
&&f\left([\Z^1,\widetilde{\Z}^1]_{\textnormal{Swap}(S_1)},[\Z^2,\widetilde{\Z}^2]_{\textnormal{Swap}(S_2)},\cdots,[\Z^K,\widetilde{\Z}^K]_{\textnormal{Swap}(S_K)}\right)\\
&=&f\left([\Z^1,\widetilde{\Z}^1],[\Z^2,\widetilde{\Z}^2]_{\textnormal{Swap}(S_2)},\cdots,[\Z^K,\widetilde{\Z}^K]_{\textnormal{Swap}(S_K)}\right)\odot \bmepsilon(S_1)\\
&=&\cdots\\
&=&f\left([\Z^1,\widetilde{\Z}^1],[\Z^2,\widetilde{\Z}^2],\cdots,[\Z^K,\widetilde{\Z}^K]\right)\odot_{k=1}^K \bmepsilon(S_k)\\
&=&f\left([\Z^1,\widetilde{\Z}^1],[\Z^2,\widetilde{\Z}^2],\cdots,[\Z^K,\widetilde{\Z}^K]\right)\odot\bmepsilon(S).
\end{eqnarray*}

So we obtain \[\W=f\left([\Z^1,\widetilde{\Z}^1],\cdots,[\Z^K,\widetilde{\Z}^K]\right)\eqd f\left([\Z^1,\widetilde{\Z}^1],\cdots,[\Z^K,\widetilde{\Z}^K]\right)\odot\bmepsilon(S)=\W\odot\bmepsilon(S).\] for any $S\subseteq \mathcal{H}$. Therefore, by choosing $S$ as the set $\{j:\epsilon_j=-1\}$, we have \[(W_1, \cdots, W_p) \eqd (W_1\cdot \epsilon_1, \cdots, W_p\cdot\epsilon_p).\] and thus we finish the proof of the lemma.
\end{proof}

\begin{lemma}\label{lem:mar}
For $k=m,m-1,\cdots,1,0$, put $V^{+}(k)=\#\{j:1\leq j\leq k, p_j\leq 1/2, j\in \mathcal{H}\}$ and $V^{-}(k)=\#\{ j:1\leq j\leq k, p_j> 1/2, j\in \mathcal{H}\}$ with the convention that $V^{\pm}(0)=0$. Let $\mathcal{F}_k$ be the filtration defined by knowing all the non-null $p$-values, as well as $V^{\pm}(k')$ for all $k'\geq k$. Then the process $M(k)=\frac{V^{+}(k)}{1+V^{-}(k)}$ is a super-martingale running backward in time with respect to $\mathcal{F}_k$. For any fixed $q$, $\widehat{k}=\widehat{k}_{+}$ or $\widehat{k}=\widehat{k}_{0}$ as defined in the proof of theorem 1 are stopping times, and as consequences
\begin{eqnarray*}
\EE{\frac{\#\{j\leq \widehat{k}:p_j\leq 1/2, j\in \mathcal{H}\}}{1+\#\{j\leq \widehat{k}:p_j>1/2, j\in \mathcal{H}\}}}\leq 1
\end{eqnarray*}
\end{lemma}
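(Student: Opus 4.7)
The plan is to use Lemma 1's exchangeability to show that $M(k)$ is a backward supermartingale, verify that $\widehat{k}_+$ and $\widehat{k}_0$ are stopping times for this reversed filtration, and then close with optional stopping plus an explicit binomial identity that bounds $E[M(m)]$ by $1$.

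For the (super)martingale step, I would condition on whether index $k$ is null; since $\mathcal{F}_k$ records the non-null $p$-values, $\mathbf{1}\{k\in\mathcal{H}\}$ is $\mathcal{F}_k$-measurable. If $k\notin\mathcal{H}$, then $V^\pm(k-1)=V^\pm(k)$ and $M(k-1)=M(k)$ trivially. If $k\in\mathcal{H}$, Lemma 1 makes the null signs i.i.d.\ Bernoulli$(1/2)$ and independent of all $|W_j|$'s and of the non-null $p$-values; since $\mathcal{F}_k$ stores only the aggregate counts $V^\pm(k)$ of positive/negative null signs among $\{1,\dots,k\}$ (and not which index got which sign), position $k$ is conditionally uniform over the $v=V^+(k)+V^-(k)$ nulls in $\{1,\dots,k\}$, giving $P(p_k=1/2\mid\mathcal{F}_k,k\in\mathcal{H})=v^+/v$. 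Writing $v^\pm=V^\pm(k)$, this yields
\[
E[M(k-1)\mid\mathcal{F}_k,k\in\mathcal{H}]
= \frac{v^+}{v}\cdot\frac{v^+-1}{1+v^-}+\frac{v^-}{v}\cdot\frac{v^+}{v^-}
= \frac{v^+(v^+-1)+v^+(1+v^-)}{v(1+v^-)}
= \frac{v^+}{1+v^-}=M(k),
\]
with the convention that the second term is zero when $v^-=0$ (in which case one still obtains $v^+-1\leq v^+=M(k)$). Hence $M$ is a backward supermartingale.

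For the stopping-time claim, unwrapping the definitions shows $\widehat{k}_+$ and $\widehat{k}_0$ are both of the form ``largest $k\in\{1,\dots,m\}$ at which a ratio of prefix counts of $p_j\leq 1/2$ and $p_j>1/2$ meets the $q$-threshold.'' For each $k'\geq k$, each of those prefix counts decomposes into a null part (given by $V^\pm(k')$, which lies in $\mathcal{F}_k$ because $k'\geq k$) and a non-null part (determined by the non-null $p$-values in $\mathcal{F}_k$), so the entire ratio at level $k'$ is $\mathcal{F}_k$-measurable and $\{\widehat{k}\geq k\}\in\mathcal{F}_k$. Setting $\mathcal{G}_s=\mathcal{F}_{m-s}$ turns $M$ into a forward supermartingale on an increasing filtration, with $\widehat{s}=m-\widehat{k}$ a bounded stopping time; optional stopping gives $E[M(\widehat{k})]\leq E[M(m)]$.

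To finish, note that by Lemma 1 the conditional law of $V^+(m)$ given $V(m):=V^+(m)+V^-(m)$ is Binomial$(V(m),1/2)$. Using the identity $\binom{H}{y}(H-y)/(1+y)=\binom{H}{y+1}$ and reindexing gives $E[M(m)\mid V(m)=H]=1-2^{-H}\leq 1$, whence $E[M(\widehat{k})]\leq 1$, as required. The main subtlety is the supermartingale step: one has to recognize that $\mathcal{F}_k$ keeps only aggregate sign counts rather than sign assignments to specific indices, so Lemma 1's exchangeability delivers the correct hypergeometric-style conditional distribution for $p_k$; once that is set up, the telescoping algebra collapses cleanly.
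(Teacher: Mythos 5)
Your proof is correct and follows essentially the same route as the paper's: the same case split on whether $k$ is null, the same conditional probability $v^+/v$ for the sign at position $k$, the same telescoping algebra with the $v^-=0$ edge case handled identically, optional stopping, and the same binomial identity to bound $\EE{M(m)}$. The only (cosmetic) differences are that you justify the $v^+/v$ step more explicitly via the aggregate-counts structure of $\mathcal{F}_k$ and compute $\EE{M(m)\mid V(m)=H}=1-2^{-H}$ exactly, where the paper instead invokes stochastic dominance by a Binomial$(N,1/2)$ before applying the same identity.
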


\begin{proof}
The filtration $\mathcal{F}_k$ contains the information of whether $k$ is null and non-null process is known exactly. If $k$ is non-null, then $M(k-1)=M(k)$ and if $k$ is null, we have
\begin{eqnarray*}
M(k-1)=\frac{V^{+}(k)-\mathbbm{1}_{p_k\leq 1/2}}{1+V^{-}(k)-(1-\mathbbm{1}_{p_k\leq 1/2})}=\frac{V^{+}(k)-\mathbbm{1}_{p_k\leq 1/2}}{\left(V^{-}(k)+\mathbbm{1}_{p_k\leq 1/2}\right)\vee 1}
\end{eqnarray*}
Given that nulls are \iid, we have
\begin{eqnarray*}
\PP{\mathbbm{1}_{p_k\leq 1/2}|\mathcal{F}_k}=\frac{V^{+}(k)}{\left(V^{+}(k)+V^{-}(k)\right)}.
\end{eqnarray*}
So when $k$ is null, we have
\begin{eqnarray*}
\EE{M(k-1)|\mathcal{F}_k}&=&\frac{1}{V^{+}(k)+V^{-}(k)}\left[V^{+}(k)\frac{V^{+}(k)-1}{V^{-}(k)+1}+V^{-}(k)\frac{V^{+}(k)}{V^{-}(k)\vee 1}\right]\\
&=&\frac{V^{+}(k)}{1+V^{-}(k)}\mathbbm{1}_{V^{-}(k)>0}+(V^{+}(k)-1)\mathbbm{1}_{V^{-}(k)=0}\\
&\leq &M(k)
\end{eqnarray*}
This finish the proof of super-martingale property. $\widehat{k}$ is a stopping time with respect to $\{\mathcal{F}_k\}$ since $\{\widehat{k}\geq k\}\in \mathcal{F}_k$. So we have $\EE{M(\widehat{k})}\leq \EE{M(m)}=\EE{\frac{\#\{j:p_j\leq 1/2, j\in \mathcal{H}\}}{1+\#\{j:p_j>1/2, j\in \mathcal{H}\}}}$.

Let $X=\#\{j:p_j\leq 1/2, j\in \mathcal{H}\}$, given that $p_j\geq Unif[0,1]$ independently for all nulls, we have $X\leq_d Binomial(N,1/2)$. Let $Y\sim Binomial(N,1/2)$ where $N$ is the total number of nulls. Given that $f(x)=\frac{x}{1+N-x}$ is non-decreasing, we have
\begin{eqnarray*}
\EE{\frac{X}{1+N-X}}&\leq &\EE{\frac{Y}{1+N-Y}}\\
&=&\sum_{i=1}^N (1/2)^{N}\frac{N!}{i!(N-i)!}\frac{i}{1+N-i}\\
&=&\sum_{i=1}^N \PP{Y=i-1}\\
&\leq &1.
\end{eqnarray*}
\end{proof}

\subsection*{B.3: Proof of Theorem 2}\label{pf:thm2}
The proof of Theorem 2 follows the proof idea in \cite{barber2020}. Define 
\begin{equation}
R_{\epsilon}(t)=\frac{\sum_{j\in \mathcal{H}} \mathbbm{1}\{W_j\geq t, \min_{k: j\in \mathcal{H}_{k}}\widehat{\text{KL}}_{kj}\leq \epsilon\}}{1+\sum_{j\in \mathcal{H}} \mathbbm{1}\{W_j\leq -t\}}.
\end{equation}
Then for the knockoff+ filter with threshold $\tau_{+}$, we have
\begin{eqnarray*}
&&\frac{|\{j:j\in \widehat{S}\cap \mathcal{H}~\textnormal{and}~\min_{k: j\in \mathcal{H}_{k}}\widehat{\textnormal{KL}}_{kj}\leq \epsilon\}|}{|\widehat{S}|\vee 1} = \frac{\sum_{j\in \mathcal{H}} I\{W_j\geq \tau_{+}, \min_{k: j\in \mathcal{H}_{k}}\widehat{\text{KL}}_{kj}\leq \epsilon\}}{\sum_{j} \mathbbm{1}\{W_j\geq \tau_{+}\}\vee 1}\\
&=&\frac{{1+\sum_{j} \mathbbm{1}\{W_j\leq -\tau_{+}\}}}{\sum_{j} \mathbbm{1}\{W_j\geq \tau_{+}\}\vee 1}\cdot \frac{\sum_{j\in \mathcal{H}} \mathbbm{1}\{W_j\geq \tau_{+}, \min_{k: j\in \mathcal{H}_{k}}\widehat{\text{KL}}_{kj}\leq \epsilon\}}{1+\sum_{j} \mathbbm{1}\{W_j\leq -\tau_{+}\}}\\
&\leq & q \cdot R_{\epsilon}(\tau_{+})
\end{eqnarray*}
The inequality holds by the construction of knockoff+ under the FDR level of $q$ and the fact $\sum_{j} \mathbbm{1}\{W_j\leq -\tau_{+}\}\geq \sum_{j\in \mathcal{H}} \mathbbm{1}\{W_j\leq -\tau_{+}\}$.

For the knockoff filter at threshold $\tau$, we have similar results for the modified FDR as below:
\begin{eqnarray*}
&&\frac{|\{j:j\in \widehat{S}\cap \mathcal{H}~\textnormal{and}~\min_{k: j\in H_{k}}\widehat{\textnormal{KL}}_{kj}\leq \epsilon\}|}{|\widehat{S}|+q^{-1}} = \frac{\sum_{j\in \mathcal{H}} \mathbbm{1}\{W_j\geq \tau, \min_{k: j\in \mathcal{H}_{k}}\widehat{\text{KL}}_{kj}\leq \epsilon\}}{\sum_{j} \mathbbm{1}\{W_j\geq \tau\}+q^{-1}}\\
&=&\frac{{1+\sum_{j} \mathbbm{1}\{W_j\leq -\tau\}}}{q^{-1}+\sum_{j} \mathbbm{1}\{W_j\geq \tau\}}\cdot \frac{\sum_{j\in \mathcal{H}_0} \mathbbm{1}\{W_j\geq \tau, \min_{k: j\in \mathcal{H}_{k}}\widehat{\text{KL}}_{kj}\leq \epsilon\}}{1+\sum_{j} \mathbbm{1}\{W_j\leq -\tau\}}\\
&\leq & q \cdot R_{\epsilon}(\tau)
\end{eqnarray*}

Next we will show that $\EE{R_{\epsilon}(T)}\leq e^{\epsilon}$ for both $T=\tau$ and $T=\tau_{+}$ to complete the proof.

For events $E_j=\min_{k: j\in \mathcal{H}_{k}}\widehat{\text{KL}}_{kj}$, we have by Lemma \ref{lem:KL},
\begin{equation}
P(W_j>0,E_j\leq \epsilon | |W_j|,W_{-j})\leq e^{\epsilon} P(W_j<0||W_j|,W_{-j}), \forall \epsilon\geq 0, j\in \mathcal{H}_0.
\end{equation}

So we have
\begin{eqnarray*}
\EE{R_{\epsilon}(T)}&=&\EE{\frac{\sum_{j\in \mathcal{H}} \mathbbm{1}\{W_j\geq T, \min_{k: j\in \mathcal{H}_{k}}\widehat{KL}_{kj}\leq \epsilon\}}{1+\sum_{j\in \mathcal{H}} \mathbbm{1}\{W_j\leq -T\}}}\\
&=&\sum_{j\in \mathcal{H}_0}\EE{\frac{\mathbbm{1}\{W_j\geq T_j, E_j\leq \epsilon\}}{1+\sum_{k\in\mathcal{H},k\neq j} \mathbbm{1}\{W_k\leq -T_j\}}}\\
&=&\sum_{j\in \mathcal{H}}\EE{\frac{\mathbbm{1}\{W_j>0,E_j\leq \epsilon\}\mathbbm{1}\{|W_j|\geq T_j\}}{1+\sum_{k\in\mathcal{H},k\neq j} \mathbbm{1}\{W_k\leq -T_j\}}}\\
&=&\sum_{j\in \mathcal{H}}\EE{\frac{P(W_j>0,E_j\leq \epsilon||W_j|,W_{-j})\mathbbm{1}\{|W_j|\geq T_j\}}{1+\sum_{k\in\mathcal{H},k\neq j} \mathbbm{1}\{W_k\leq -T_j\}}}\\
&\leq &e^{\epsilon}\sum_{j\in \mathcal{H}}\EE{\frac{P(W_j<0||W_j|,W_{-j})\mathbbm{1}\{|W_j|\geq T_j\}}{1+\sum_{k\in\mathcal{H},k\neq j} \mathbbm{1}\{W_k\leq -T_j\}}}\\
&=&e^{\epsilon}\sum_{j\in \mathcal{H}}\EE{\frac{\mathbbm{1}\{W_j<0\}\mathbbm{1}\{|W_j|\geq T_j\}}{1+\sum_{k\in\mathcal{H},k\neq j} \mathbbm{1}\{W_k\leq -T_j\}}}\\
&=&e^{\epsilon}\EE{\sum_{j\in \mathcal{H}}\frac{\mathbbm{1}\{W_j\leq -T_j\}}{1+\sum_{k\in\mathcal{H},k\neq j} \mathbbm{1}\{W_k\leq -T_j\}}}
\end{eqnarray*}
The last step is because if $W_j\geq -T_j$ for all $j$, then we have the summation within the expectation is 0, otherwise, the summation is 1 using Lemma 6 of \citet{barber2020}.
\begin{multline}
\sum_{j\in \mathcal{H}}\frac{\mathbbm{1}\{W_j\leq -T_j\}}{1+\sum_{k\in\mathcal{H},k\neq j} \mathbbm{1}\{W_k\leq -T_j\}}=\sum_{j\in \mathcal{H}}\frac{\mathbbm{1}\{W_j\leq -T_j\}}{1+\sum_{k\in\mathcal{H},k\neq j} \mathbbm{1}\{W_k\leq -T_k\}} \\ =\sum_{j\in \mathcal{H}}\frac{\mathbbm{1}\{W_j\leq -T_j\}}{\sum_{k\in\mathcal{H}} \mathbbm{1}\{W_k\leq -T_k\}}=1.
\end{multline}
This completes the proof of Theorem 2.

\subsection*{B.4: Additional proofs for lemmas related to Theorem 2}
\begin{lemma}\label{lem:KL}
For $E_j:=\min_{k:j\in \mathcal{H}_{k}}\widehat{\text{KL}}_{kj}$, we have 
\begin{equation}
P(W_j>0,E_j\leq \epsilon | |W_j|,W_{-j})\leq e^{\epsilon} P(W_j<0||W_j|,W_{-j}), \forall \epsilon\geq 0, j\in \mathcal{H}_0.
\end{equation}
\end{lemma}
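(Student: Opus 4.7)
The approach is to extend the single-experiment Model-X robustness argument of Barber et al.\ (2020) to the multi-experiment union-null setting. The key structural observation is that for $j\in\mathcal{H}$ the set $\mathcal{K}_j:=\{k:j\in\mathcal{H}^k\}$ is nonempty, so a Barber et al.-style swap can be run within \emph{any} chosen experiment $k_0\in\mathcal{K}_j$; the role of the minimum $E_j$ is to let us pick the most favorable one.

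First I would fix some $k_0\in\mathcal{K}_j$ and consider the involution $T_{k_0}$ on the full data that exchanges the $j$-th columns of $\X^{k_0}$ and $\widetilde{\X}^{k_0}$ and leaves every other experiment fixed. By the OSFF property of $f$ applied with $\mathrm{Swap}(\{j\})$ in the $k_0$-th slot, $T_{k_0}$ flips the sign of $W_j$ while preserving $|W_j|$ and $W_{-j}$; a direct inspection of the definition of $\widehat{\mathrm{KL}}_j^k$ also shows that $T_{k_0}$ negates $\widehat{\mathrm{KL}}_{k_0 j}$ and leaves every $\widehat{\mathrm{KL}}_{kj}$ with $k\neq k_0$ invariant (those are functions of disjoint experiments' data). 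Because $j\in\mathcal{H}^{k_0}$ gives $\pi(Y^{k_0}\mid X^{k_0})=\pi(Y^{k_0}\mid X^{k_0}_{-j})$, the response factor is invariant under $T_{k_0}$, and the Barber et al.\ calculation of the Radon--Nikodym derivative collapses to $d(T_{k_0}\mathcal{D})/d\mathcal{D}=\exp(-\widehat{\mathrm{KL}}_{k_0 j})$. Applying this identity with the test function $\mathbbm{1}\{W_j>0,\ \widehat{\mathrm{KL}}_{k_0 j}\le\epsilon\}\cdot h(|W_j|,W_{-j})$ and using that the swap carries $\{W_j>0\}$ to $\{W_j<0\}$ and $\{\widehat{\mathrm{KL}}_{k_0 j}\le\epsilon\}$ to $\{\widehat{\mathrm{KL}}_{k_0 j}\ge-\epsilon\}$ yields the per-experiment bound $P(W_j>0,\ \widehat{\mathrm{KL}}_{k_0 j}\le\epsilon\mid|W_j|,W_{-j})\le e^{\epsilon}\,P(W_j<0\mid|W_j|,W_{-j})$ for every fixed $k_0\in\mathcal{K}_j$.

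To pass from a single $\widehat{\mathrm{KL}}_{k_0 j}$ to $E_j=\min_{k\in\mathcal{K}_j}\widehat{\mathrm{KL}}_{kj}$, I would decompose $\{E_j\le\epsilon\}$ into the disjoint pieces $A_{k_0}=\{\widehat{\mathrm{KL}}_{k_0 j}\le\epsilon\}\cap\bigcap_{k\in\mathcal{K}_j,\,k<k_0}\{\widehat{\mathrm{KL}}_{kj}>\epsilon\}$ indexed by the smallest qualifying index, apply the swap $T_{k_0}$ to each piece (the constraints on $k<k_0$ are preserved because those KLs live in experiments untouched by $T_{k_0}$), invoke the per-experiment bound, and sum over $k_0\in\mathcal{K}_j$. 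The main obstacle --- and where I expect the proof's real work to sit --- is controlling the summed right-hand side: after the swap, $\{\widehat{\mathrm{KL}}_{k_0 j}\le\epsilon\}$ becomes $\{\widehat{\mathrm{KL}}_{k_0 j}\ge-\epsilon\}$, a half-line that a priori can overlap with the constraints $\widehat{\mathrm{KL}}_{k_0 j}>\epsilon$ coming from the pieces indexed by $k>k_0$, so a naive union/sum would inflate the bound by a factor of $|\mathcal{K}_j|$. Keeping the bound at $e^{\epsilon}$ requires leveraging the index-ordered constraints on the lower-indexed experiments so that the post-swap images remain essentially disjoint inside $\{W_j<0\}$, in the spirit of the Barber et al.\ telescoping; this last bookkeeping is the delicate part of the argument.
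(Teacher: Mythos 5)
Your first step --- running the Barber--Candès swap inside a single null experiment $k_0\in\mathcal{K}_j$, using independence across experiments and $j\in\mathcal{H}^{k_0}$ to collapse the Radon--Nikodym derivative to $\exp(-\widehat{\text{KL}}_{k_0j})$, and noting that the swap negates $\widehat{\text{KL}}_{k_0j}$ while flipping the sign of $W_j$ and fixing $(|W_j|,W_{-j})$ --- is exactly the computation in the paper's proof. The gap is in how you pass from a single experiment to the minimum $E_j$. Your decomposition of $\{E_j\le\epsilon\}$ into pieces $A_{k_0}$ indexed by the smallest qualifying experiment, followed by a sum over $k_0$, does not close: the image $T_{k_0}(A_{k_0})$ is constrained only by $\widehat{\text{KL}}_{k_0j}\ge-\epsilon$ (plus the untouched constraints on $k<k_0$), while $T_{k_0'}(A_{k_0'})$ for $k_0'>k_0$ requires $\widehat{\text{KL}}_{k_0j}>\epsilon$; these events overlap on $\{\widehat{\text{KL}}_{k_0j}>\epsilon\}$, so the post-swap images are \emph{not} disjoint inside $\{W_j<0\}$ and the sum genuinely inflates toward $|\mathcal{K}_j|\cdot e^{\epsilon}$. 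You flag this as ``the delicate part,'' but no bookkeeping of the index-ordered constraints repairs it, because the overlap occurs in the coordinate that the swap itself perturbs.

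The paper avoids the decomposition entirely. Fix $k$ to be an index attaining $E_j=\widehat{\text{KL}}_{kj}$ and condition on $Y$, on all columns outside $(k,j)$, and on the \emph{unordered} pair $\{X^k_j,\widetilde X^k_j\}$. The key observation is that the product $\operatorname{sgn}(W_j)\,\widehat{\text{KL}}_{kj}$ is invariant under the within-experiment-$k$ swap (both factors change sign), hence is measurable with respect to this conditioning $\sigma$-field; on $\{W_j>0\}$ the event $\{E_j\le\epsilon\}$ coincides with $\{\operatorname{sgn}(W_j)\widehat{\text{KL}}_{kj}\le\epsilon\}$, so its indicator pulls out of the conditional probability. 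Combining this with the conditional odds identity
\begin{equation*}
P\bigl(W_j>0\mid \mathcal{G}\bigr)=e^{\operatorname{sgn}(W_j)\widehat{\text{KL}}_{kj}}\,P\bigl(W_j<0\mid \mathcal{G}\bigr)
\end{equation*}
gives $\mathbbm{1}\{\operatorname{sgn}(W_j)\widehat{\text{KL}}_{kj}\le\epsilon\}\,e^{\operatorname{sgn}(W_j)\widehat{\text{KL}}_{kj}}\le e^{\epsilon}$ pointwise, and taking expectations yields the claim in one shot, with no summation over experiments and no factor of $|\mathcal{K}_j|$. Redirecting your argument through this swap-invariant quantity, rather than through a union over $k_0$, is what you are missing.
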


\begin{proof}
For any $j\in\mathcal{H}_0$, find $k\in H_{0j}$ such that $\widehat{\text{KL}}_{kj}=\min_{k':j\in \mathcal{H}_{k'}}\widehat{\text{KL}}_{k'j}$. Define \[\X_{-(k,j)} = \{X^1_{1},\cdots,X^1_{p},\cdots,X^K_{1},\cdots,X^K_{p}\}\setminus \{X^k_{j}\}.\] We will be conditioning on $\X_{-(k,j)}$, $\widetilde{\X}_{-(k,j)}$, $Y$ and observing the unordered pair $\{X^k_{j},\widetilde{X}^k_{j}\} =\{ X_{j}^{k(0)},X_{j}^{k(1)}\}$ (we do not know which is which). Without loss of generality, let $W_j\geq 0$ when $X^k_{j}=X_{j}^{k(0)}$ and $\widetilde{X}^k_{j}=X_{j}^{k(1)}$. Then we have
\begin{eqnarray*}
&&\frac{P(W_j> 0|X_{j}^{k(0)},X_{j}^{k(1)},X_{-(k,j)},\widetilde{X}_{-(k,j)},\Y)}{P(W_j< 0|X_{j}^{k(0)},X_{j}^{k(1)},X_{-(k,j)},\widetilde{X}_{-(k,j)},\Y)}\\
&=&\frac{P((X^k_{j},\widetilde{X}^k_{j})=(X_{j}^{k(0)},X_{j}^{k(1)})|X_{j}^{k(0)},X_{j}^{k(1)},X_{-(k,j)},\widetilde{X}_{-(k,j)},\Y)}{P((X_{j}^k,\widetilde{X}_{j}^k)=(X_{j}^{k(1)},X_{j}^{k(0)})|X_{j}^{k(0)},X_{j}^{k(1)},X_{-(k,j)},\widetilde{X}_{-(k,j)},\Y)}\\
&=&\frac{P((X_{j}^k,\widetilde{X}_{j}^k)=(X_{j}^{k(0)},X_{j}^{k(1)})|X_{j}^{k(0)},X_{j}^{k(1)},X_{k,-j)},\widetilde{X}_{k,-j)},Y^k)}{P((X^k_{j},\widetilde{X}^k_{j})=(X_{j}^{k(1)},X_{j}^{k(0)})|X_{j}^{k(0)},X_{j}^{k(1)},X^k_{-j},\widetilde{X}^k_{-j)},Y_k)}\\ && \text{\quad because the $K$ groups are independent}\\
&=&\prod_{i=1}^{n_k}\frac{P((X_{ij}^k,\widetilde{X}_{ij}^k)=(X_{ij}^{k(0)},X_{ij}^{k(1)})|X_{ij}^{k(0)},X_{ij}^{k(1)},X^k_{i,-j},\widetilde{X}^k_{i,-j},Y_{i}^k)}{P((X_{ij}^k,\widetilde{X}_{ij}^k)=(X_{ij}^{k(0)},X_{ij}^{k(1)})|X_{ij}^{k(1)},X_{ij}^{k(0)},X_{i,-j}^k,\widetilde{X}^k_{i,-j},Y_{i}^k)}\\
&=&\prod_{i=1}^{n_k}\frac{P((X_{ij}^k,\widetilde{X}_{ij}^k)=(X_{ij}^{k(0)},X_{ij}^{k(1)})|X_{ij}^{k(0)},X_{ij}^{k(1)},X^k_{i,-j},\widetilde{X}^k_{i,-j})}{P((X_{ij}^k,\widetilde{X}_{ij}^k)=(X_{ij}^{k(0)},X_{ij}^{k(1)})|X_{ij}^{k(1)},X_{ij}^{k(0)},X^k_{i,-j},\widetilde{X}^k_{i,-j})} \text{\quad because $j \in \mathcal{H}_k$}\\
&=&\prod_{i=1}^{n_k}\frac{P_{kj}(X_{ij}^{k(0)}|X^k_{i,-j})Q_{kj}(X_{ij}^{k(1)}|X^k_{i,-j})}{Q_{kj}(X_{ij}^{k(0)}|X^k_{i,-j})P_{kj}(X_{ij}^{k(1)}|X^k_{i,-j})}\\
&=&e^{\mathbbm{1}(W_j>0)\widehat{\text{KL}}_{kj}}.
\end{eqnarray*}
So we have
\begin{eqnarray*}
&&P(W_j>0,E_j\leq \epsilon | |W_j|,W_{-j})\\
&=&\EE{P(W_j>0,E_j\leq \epsilon  |X_{j}^{k(0)},X_{j}^{k(1)},X_{-(k,j)},\widetilde{X}_{-(k,j)},\Y)}\\
&=&\EE{P(W_j>0,\text{sgn}(W_j)\widehat{\text{KL}}_{kj}\leq \epsilon  |X_{j}^{k(0)},X_{j}^{k(1)},X_{-(k,j)},\widetilde{X}_{-(k,j)},\Y)}\\
&=&\EE{\mathbbm{1}\{\text{sgn}(W_j)\widehat{\text{KL}}_{kj}\leq \epsilon\}P(W_j>0  |X_{j}^{k(0)},X_{j}^{k(1)},X_{-(k,j)},\widetilde{X}_{-(k,j)},\Y)}\\
&=&\EE{\mathbbm{1}\{\text{sgn}(W_j)\widehat{\text{KL}}_{kj}\leq \epsilon\}e^{\text{sgn}(W_j)\widehat{\text{KL}}_{kj}}P(W_j<0  |X_{j}^{k(0)},X_{j}^{k(1)},X_{-(k,j)},\widetilde{X}_{-(k,j)},\Y)}\\
&\leq &e^{\epsilon} \EE{P(W_j<0|X_{j}^{k(0)},X_{j}^{k(1)},X_{-(k,j)},\widetilde{X}_{-(k,j)},\Y)}\\
&= &e^{\epsilon} P(W_j<0||W_j|,W_{-j}).
\end{eqnarray*}
This finishes the proof of the lemma.
\end{proof}

\section*{Web Appendix C: Additional simulation details}
\subsection*{C.1: Simulation settings} \label{A:sim_setting}

\subsection*{C.1.1: Simulation settings for $K=2$} 
\paragraph{Data generation}
We first describe the data generation procedure to obtain data for a single experiment $(Y^k, \X^k)$. We first consider Setting 1 (continuous). We sample outcomes from the following linear model:
\begin{eqnarray*}
Y^k_{i}&=&\bmbeta^{k\top}\X^k_{i}+\varepsilon^k_{i},~\text{where $\varepsilon^k_{i}\sim \mathcal{N}(0,\sigma_k^2)$ for $k=1,2$ and $i=1,\cdots, n_k$.}
\end{eqnarray*}
 Here $\sigma_k$ control the signal noise ratio within each sample.

We simulate data with $K=2$ independent experiments with sample sizes $n_1 = n_2 = 1000$ and the number of covariates $p=200$. For each setting, we run $m=1000$ simulations to obtain the power (the expected proportion of true signals being selected) and the FDR. We first generate covariates $\X_i^1\sim \mathcal{N}(\mathbf{0},\bmSigma(\rho_1)),\cdots, \X_i^K\sim \mathcal{N}(\mathbf{0},\bmSigma(\rho_K))$ independently, where $\bmSigma(\rho)$ is an auto-correlation matrix with its $(i,j)$-th element equals to $\rho^{|i-j|}$. Next we generate coefficients $\bmbeta^1, \cdots,\bmbeta^K$ for the $K$ experiments. We use $s_0$ to denote the number of mutual signals among the $K$ experiments, and $s_k$ to denote the signals only present in the $k$-th experiment. We consider two scenarios for the strengths of mutual signals: 1. the directions and strengths of the mutual signals are identical among the $K$ experiments; 2. only the directions of the mutual signals are the same but the signal strengths are independent among the $K$ experiments.   

 For Scenario 1, we sample $\bmeta_{j} \in \mathbb{R}^{s_j}, j \in [K]$, where $\eta_{ji}\sim \textnormal{Uniform}[0,A]$ are independent for $j=[K]$ and $i=1,\cdots, s_j$. Then we sample $\bmepsilon \in \{-1,1\}^p$ where $\epsilon_l$ are independently sampled from rademacher distribution for $l=1,\cdots,p$. With $K=2$, the coefficients $\bmbeta^1, \bmbeta^2$ are determined by:
\begin{eqnarray*}
\bmbeta^1&=&(\bmeta_0^\top, \bmeta_1^\top, \mathbf{0}_{p-s_0-s_1}^\top)^\top\odot \bmepsilon,\\
\bmbeta^2&=&(\bmeta_0^\top, \mathbf{0}_{s_1}^\top,\bmeta_2^\top, \mathbf{0}_{p-s_0-s_1-s_1}^\top)^\top\odot \bmepsilon, ~\text{where $\odot$ is the Hadamard product}.
\end{eqnarray*}
For Scenario 2, we generate $\bmeta_{0k} \in \mathbb{R}^{s_0}$ for $k=1,\cdots,K$ independently; i.e., we sample $\eta_{0ki} \sim \textnormal{Uniform}[0,A]$ independently for $k=1,\cdots,K$ and $i=1,\cdots, s_0$. We generate $\bmeta_{1}$, $\bmeta_2$, and $\bmepsilon$ the same way as described in Scenario 1. The coefficients $\bmbeta^1, \bmbeta^2$ are determined by:
\begin{eqnarray*}
\bmbeta^1&=&(\bmeta_{01}^\top, \bmeta_1^\top, \mathbf{0}_{p-s_0-s_1}^\top)^\top\odot \bmepsilon\\
\bmbeta^2&=&(\bmeta_{02}^\top, \mathbf{0}_{s_1}^\top,\bmeta_2^\top, \mathbf{0}_{p-s_0-s_1-s_1}^\top)^\top\odot \bmepsilon.
\end{eqnarray*}
For the continuous setting, we choose the amplitude of signals as $A=1.2$.

Next, for Setting 2 (binary), we consider the following logistic model:
\begin{eqnarray*}
Y^k_{i}\sim \textnormal{Bernoulli} \left(\frac{\exp(\alpha_k+\bmbeta^{k\top}X^k_{i})}{1+\exp(\alpha_k+\bmbeta^{k\top}X^k_{i})}\right),~\text{for}~k=1,2~\text{and} i=1,\cdots, n_k,
\end{eqnarray*}
where $\X^1_{i}, \X^2_{i}$ and $\bmbeta^1,\bmbeta^2$ are generated the same way as the continuous setting but with the amplitude of signals as $A=2$.

Next, for Setting 3 (mixed), we first generate $(\overline{Y}^k, \X^k)$ following the same procedure as Setting 1. Then we let $Y^1=\overline{Y}^1$ and dichotomize $\overline{Y}^2$ to construct $Y^2$: 
\[Y^2_{i} = \mathbbm{1}\{\overline{Y}^2_{i} \geq 0\}, ~\text{where}~i=1,\cdots,n_2.\]

\paragraph{Simulation settings} We vary the parameters $s_0, s_1, s_2, \rho_1, \rho_2, \sigma_1, \sigma_2, \alpha_1, \alpha_2$ in our simulation studies.
For Setting 1 and Setting 3, we change the parameters as follows: 
\begin{itemize}
    \item Varying $s_0$, $s_1$, $s_2$. Fixing $\rho_1 =\rho_2 = 0.5$ and $\sigma_1^2 = 1, \sigma_2^2 = 4$. 
    \begin{align*}
        &\text{1. Fixing $s_1 = s_2 =0$, we vary $s_0 = 10,20,30,40,50,60$;}\\
        &\text{2. Fixing $s_0 =40$, we vary $s_1 = s_2= 10,20,30,40,50,60$;}\\
        &\text{3. Fixing $s_0 =40$ and $s_1=0$, we vary $s_2= 10,20,30,40,50,60$.}\\
    \end{align*}
    \item Varying $\rho_1$, $\rho_2$. Fixing $\sigma_1^2 = 1, \sigma_2^2 =4$, Let $\rho_1 = 0.25, 0.3, 0.35, 0.4, 0.45, 0.5$ and $\rho_2 = 1-\rho_1$ for the following three choices of $s_0$, $s_1$, $s_2$:
    \begin{align*}
        & s_0 =40,s_1=0,s_2=0;\\
        & s_0 =40,s_1=40,s_2=40;\\
        & s_0=40,s_1=0,s_2=40.
    \end{align*}
    \item Varying $\sigma_1$, $\sigma_2$. Fixing $\rho_1=\rho_2 =0.5$, Let $\sigma_1^2 +\sigma_2^2 =5, \sigma_1^2 = 0.5, 1, 1.5., 2, 2.5$, for the following three choices of $s_0$, $s_1$, $s_2$:
    \begin{align*}
        & s_0 =40,s_1=0,s_2=0;\\
        & s_0 =40,s_1=40,s_2=40;\\
        & s_0=40,s_1=0,s_2=40.
    \end{align*}
\end{itemize}

For Setting 2 (binary), we vary the following parameters. 
\begin{itemize}
    \item Varying $s_0$, $s_1$, $s_2$. Fixing $\rho_1 =\rho_2 = 0.5$ and $\alpha_1 = 1, \alpha_2 = -1$. 
    \begin{align*}
        &\text{1. Fixing $s_1 = s_2 =0$, we vary $s_0 = 10,20,30,40,50,60$;}\\
        &\text{2. Fixing $s_0 =40$, we vary $s_1 = s_2= 10,20,30,40,50,60$;}\\
        &\text{3. Fixing $s_0 =40$ and $s_1=0$, we vary $s_2= 10,20,30,40,50,60$.}\\
    \end{align*}
    \item Varying $\rho_1$, $\rho_2$. Fixing $\alpha_1 = 1, \alpha_2 =-1$, Let $\rho_1 = 0.25, 0.3, 0.35, 0.4, 0.45, 0.5$ and $\rho_2 = 1-\rho_1$ for the following three choices of $s_0$, $s_1$, $s_2$:
    \begin{align*}
        & s_0 =40,s_1=0,s_2=0;\\
        & s_0 =40,s_1=40,s_2=40;\\
        & s_0=40,s_1=0,s_2=40.
    \end{align*}
    \item Varying $\alpha_1$, $\alpha_2$. Fixing $\rho_1=\rho_2 =0.5$, Let $\alpha_1 +\alpha_2 =0, \alpha_1 = 0, 0.5, 1, 1.5, 2, 2.5$, for the following three choices of $s_0$, $s_1$, $s_2$:
    \begin{align*}
        & s_0 =40,s_1=0,s_2=0;\\
        & s_0 =40,s_1=40,s_2=40;\\
        & s_0=40,s_1=0,s_2=40.
    \end{align*}
\end{itemize}

\subsection*{C.1.2: Simulation settings for $K=3$} 
\paragraph{Data generation}
We first describe the data generation procedure to obtain data for a single experiment $(Y^k, \X^k)$. We sample outcomes from the following linear model:
\begin{eqnarray*}
Y^k_{i}&=&\bmbeta^{k\top}\X^k_{i}+\varepsilon^k_{i},~\text{where $\varepsilon^k_{i}\sim \mathcal{N}(0,\sigma_k^2)$ for $k=1,2,3$ and $i=1,\cdots, n_k$.}
\end{eqnarray*}
 Here $\sigma_k$ control the signal noise ratio within each sample.

We simulate data with $K=3$ independent experiments with sample sizes $n_1 = n_2 = n_3 = 1000$ and the number of covariates $p=200$. For each setting, we run $m=1000$ simulations to obtain the power (the expected proportion of true signals being selected) and the FDR. We first generate covariates $\X_i^1\sim \mathcal{N}(\mathbf{0},\bmSigma(\rho_1)),\cdots, \X_i^K\sim \mathcal{N}(\mathbf{0},\bmSigma(\rho_K))$ independently, where $\bmSigma(\rho)$ is an auto-correlation matrix with its $(i,j)$-th element equals to $\rho^{|i-j|}$. Next we generate coefficients $\bmbeta^1, \cdots,\bmbeta^K$ for the $K$ experiments. We use $s_0$ to denote the number of mutual signals among the $K$ experiments, $s_1$, $s_2$, $s_3$ to be the signals only present in experiment 1, 2, 3 respectively, and $s_{12}$, $s_{13}$, $s_{23}$ to be the signals only present in experiments 1 and 2, 1 and 3, and 2 and 3 respectively. We consider the scenarios that only the direction of the mutual signals are the same but the signal strengths are independent among the $K$ experiments.   

 We generate $\bmeta_{0k} \in \mathbb{R}^{s_0}$ for $k=[K]$ independently; i.e., we sample $\eta_{0ki} \sim \textnormal{Uniform}[0,1]$ independently for $k=[K]$ and $i=1,\cdots, s_0$. Then we sample $\bmeta_{j} \in \mathbb{R}^{s_j}, j \in \{1,2,3,12,13,23\}$, where $\eta_{ji}\sim \textnormal{Uniform}[0,A]$ are independent for $j=[K]$ and $i=1,\cdots, s_j$. Then we sample $\bmepsilon \in \{-1,1\}^p$ where $\epsilon_l$ are independently sampled from Rademacher distribution for $l=1,\cdots,p$. With $K=3$, the coefficients $\bmbeta^1, \bmbeta^2$, $\bmbeta^3$ are determined by:
\begin{eqnarray*}
\bmbeta^1&=&(\bmeta_{01}^\top, \bmeta_1^\top,\mathbf{0}_{s_2},\mathbf{0}_{s_3},\bmeta_{12},\bmeta_{13},\mathbf{0}_{s_{23}}, \mathbf{0}_{p_s}^\top)^\top\odot \epsilon,\\
\bmbeta^2&=&(\bmeta_{02}^\top, \mathbf{0}_{s_1},\bmeta_2^\top,\mathbf{0}_{s_3},\bmeta_{12},\mathbf{0}_{s_{13}},\bmeta_{23}, \mathbf{0}_{p_s}^\top)^\top\odot \epsilon,\\
\bmbeta^3&=&(\bmeta_{03}^\top, \mathbf{0}_{s_1},\mathbf{0}_{s_2},\eta_3^\top,\mathbf{0}_{s_{12}},\eta_{13},\bmeta_{23}, \mathbf{0}_{p_s}^\top)^\top\odot \epsilon,
~\text{where $\odot$ is the Hadamard product},
\end{eqnarray*}
and $p_s=p-s_0-s_1-s_2-s_3-s_{12}-s_{13}-s_{23}$. 

In the simulation, we fix $\rho_1=0.4$, $\rho_2=0.5$, $\rho_3=0.6$, $q=0.2$, $\sigma_1=1$, $\sigma_2=2$, $\sigma_3=1.5$ and $A=1.2$.

\subsection*{C.1.3: Simulation settings for power comparison} 
When using the difference function ($\W=\Z-\widetilde{\Z}$) as the flip-sign function, we have
$\W=\odot_{k=1}^K (\Z^k-\widetilde{\Z}^k)$ which is clear to have power because if feature $j$ has effects within each subsample (i.e., $Z_{kj}-\widetilde{Z}_{kj}>0$ with high probability), then we will expect $W_{j}>0$ with high probability. However, for other combination functions, this is not very straightforward. So we plot the distribution under various null and alternatives of 8 functions defined below to show they have power when $K=3$ and test statistics $Z_{jk}$s from the absolute value of the normal distribution. 
\begin{itemize}
\item Max: $\W=(\Z\vee \widetilde{\Z})\odot(-1)^{\mathbbm{1}(\Z<\widetilde{\Z})}$
\item Diff: $\W=\Z-\widetilde{\Z}=|\Z-\widetilde{\Z}|\odot(-1)^{\mathbbm{1}(\Z<\widetilde{\Z})}$
\item Direct Max Sum: $\W=\sum_{k=1}^K(\Z^k\vee \widetilde{\Z}^k)\odot(-1)^{\mathbbm{1}(\Z<\widetilde{\Z})}$
\item Direct Max Max: $\W=\max_{k=1}^K(\Z^k\vee \widetilde{\Z}^k)\odot(-1)^{\mathbbm{1}(\Z<\widetilde{\Z})}$
\item Direct Diff Sum: $\W=\sum_{k=1}^K|\Z^k-\widetilde{\Z}^k|\odot(-1)^{\mathbbm{1}(\Z<\widetilde{\Z})}$
\item Direct Diff Max: $\W=\max_{k=1}^K|\Z^k-\widetilde{\Z}^k|\odot(-1)^{\mathbbm{1}(\Z<\widetilde{\Z})}$
\item Direct Sum Sum: $\W=\sum_{k=1}^K(\Z^k+\widetilde{\Z}^k)\odot(-1)^{\mathbbm{1}(\Z<\widetilde{\Z})}$
\item Direct Sum Max: $\W=\max_{k=1}^K|\Z^k+\widetilde{\Z}^k|\odot(-1)^{\mathbbm{1}(\Z<\widetilde{\Z})}$,
\end{itemize}
where the max between vectors are taken elementwise.

Additional simulations were performed for $K=2$ under the settings $s_0=40$, $q=0.2$, $\rho_1=0.4$, $\rho_2=0.6$, $\sigma_1=1$,$\sigma_2=2$, $A=1.2$ and $s_1=s_2=0$ or $s_1=s_2=s_0$, for Scenarios 1 and 2 to compare the power of these functions and it turns out that the signed max and difference functions have the best performance among the functions we explored when there are sample-specific effects (i.e., $s_1,s_2\neq 0$).

To see the potential limitations of the proposed method, we consider the setting for $K=2$ under the settings $s_0=40$, $q=0.2$, $\rho_1=0.4$, $\rho_2=0.6$, $\sigma_1=1$,$\sigma_2=2$. We let the fake signals (those shown only in one of the two studies) have stronger effects than the true signals. We vary the ratio between the magnitudes of the fake signals and the true signals from 1.5 to 5. 

\subsection*{C.2 Algorithm implementing details for simulation}
\begin{itemize}
\item Knockoffs construction: For the \textit{simultaneous} and \textit{intersection} knockoffs, we construct the knockoffs for each experiment separately. We use the Fixed-X knockoff method for the continuous outcome setting of sections C.1.1 and C.1.2 and use the second-order Model-X knockoff construction method for the binary outcomes setting of section C.1.1. For the mixed outcome setting of section C.1.1, we use the Fixed-X knockoff method for the first experiment with the continuous outcome and use the second-order Model-X knockoff construction method for the second experiment with binary outcome. For \textit{pooling} knockoffs, we construct knockoffs for the pooled data. We use the Fixed-X knockoff method for the continuous outcome setting of sections C.1.1 and C.1.2 and use the second-order Model-X knockoff construction method for the binary outcomes setting of section C.1.1.
\item Statistics calculation: For each experiment (\textit{simultaneous} and \textit{intersection} knockoffs) or pooled data (\textit{pooling} knockoffs), We choose the absolute value of the coefficients from the $\ell_1$-penalized linear regression (if the outcome is continuous) of the outcome on both original features and knockoffs or $\ell_1$-penalized logistic regression (if the outcome is binary) of the outcome on both original features and knockoffs with the tuning parameters selected from 5-fold cross-validation as our statistics $[\Z^k,\widetilde{\Z}^k]$s.
\item Calculating the filter statistics $\W$: For the \textit{simultaneous} knockoffs, we use the OSSF function derived from the composition of the OSCF in equation (10) of the main paper and the flip sign function in equation (9) of the main paper. For \textit{intersection} and \textit{pooling} knockoffs, we use the antisymmetric function $\W=\Z-\widetilde{\Z}$.
\item Calculate the threshold and selection set: we control the FDR at a level of $q=0.2$ and use the knockoff{\color{black}+} filter as defined in equation (8) of the main paper. 
\end{itemize}

\subsection*{C.3 Additional simulation results}\label{app:sim_res}
\subsubsection*{C.3.1 Additional simulation results for continuous outcomes with $K=2$\\}
In Figures \ref{fig:cont_samesig=1} and \ref{fig:cont_samesig=0}, we show results for Setting 1 (continuous) with Scenario 1 (same signal strengths) and Scenario 2 (different signal strengths). 
\begin{figure}[!p]
    \centering
    \includegraphics[scale=0.3]{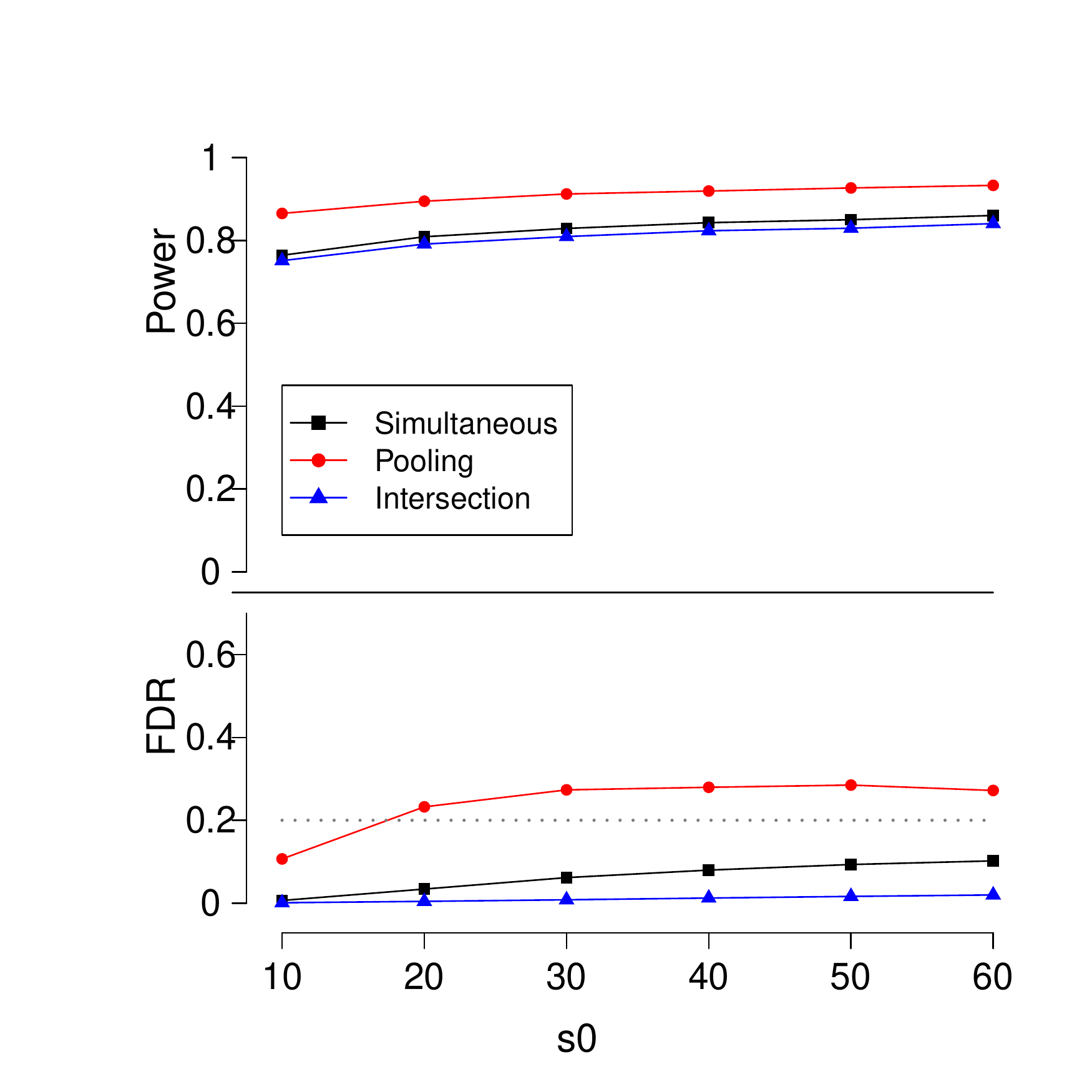}
    \includegraphics[scale=0.3]{Continuous_s1_s2.pdf}
    \includegraphics[scale=0.3]{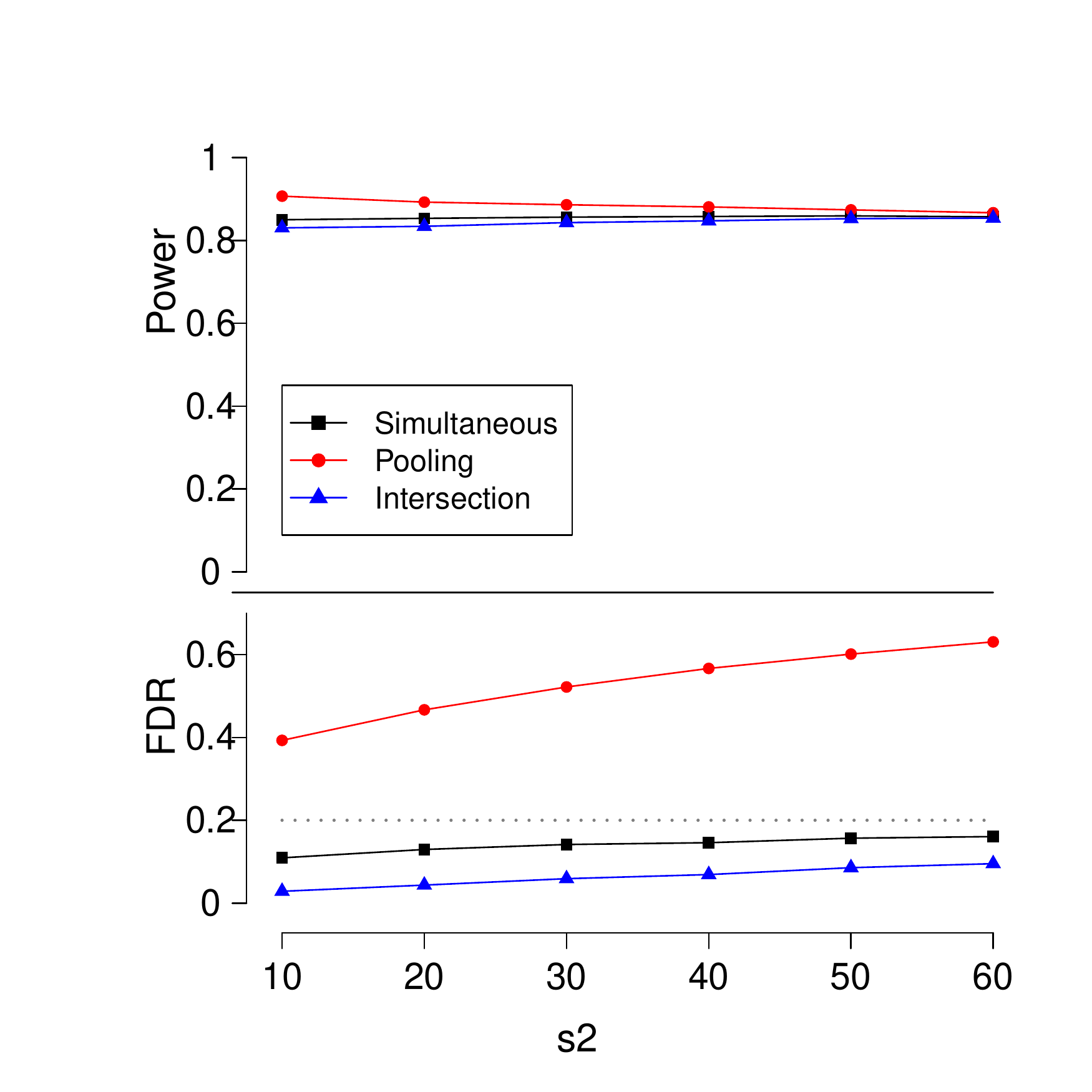}
     \includegraphics[scale=0.3]{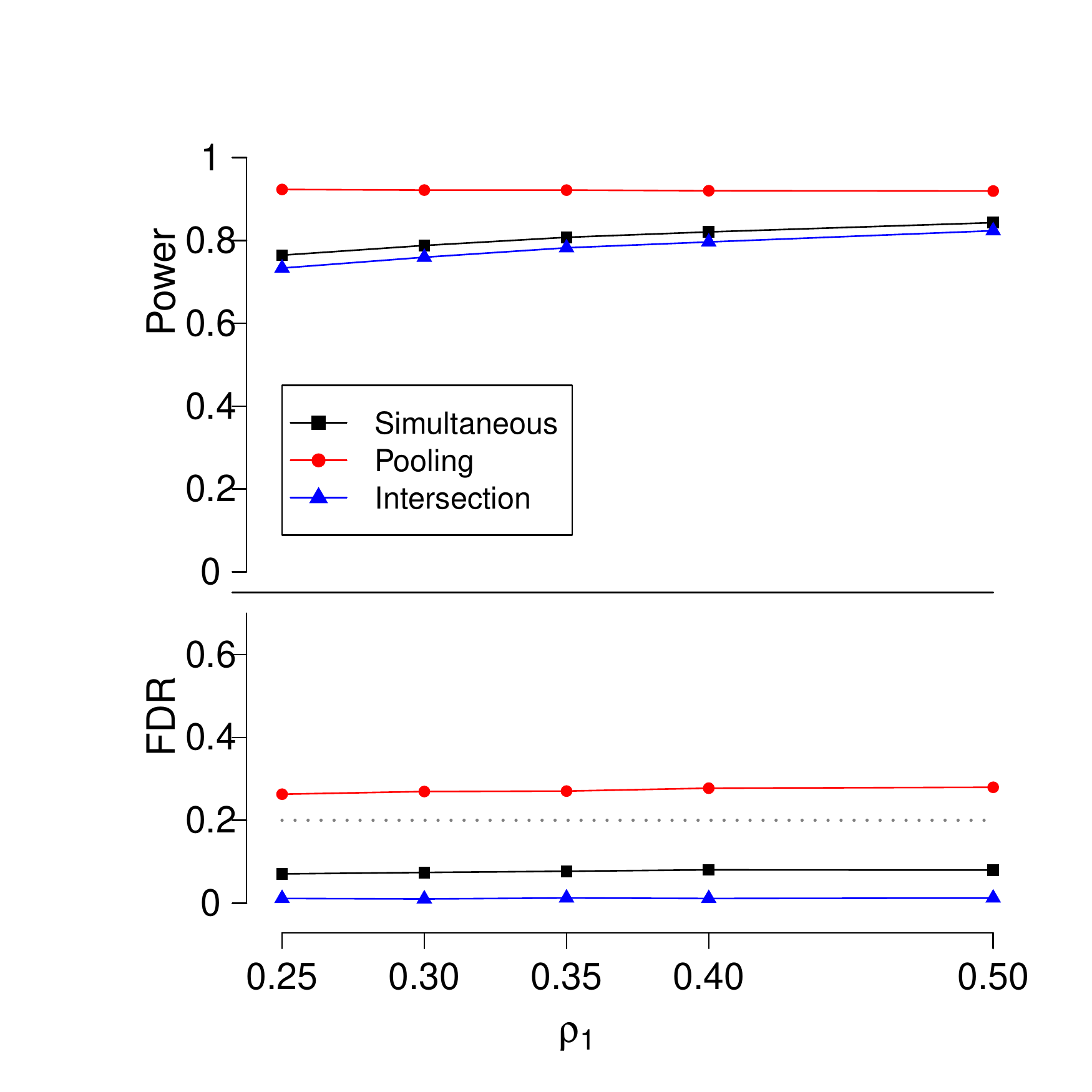}
    \includegraphics[scale=0.3]{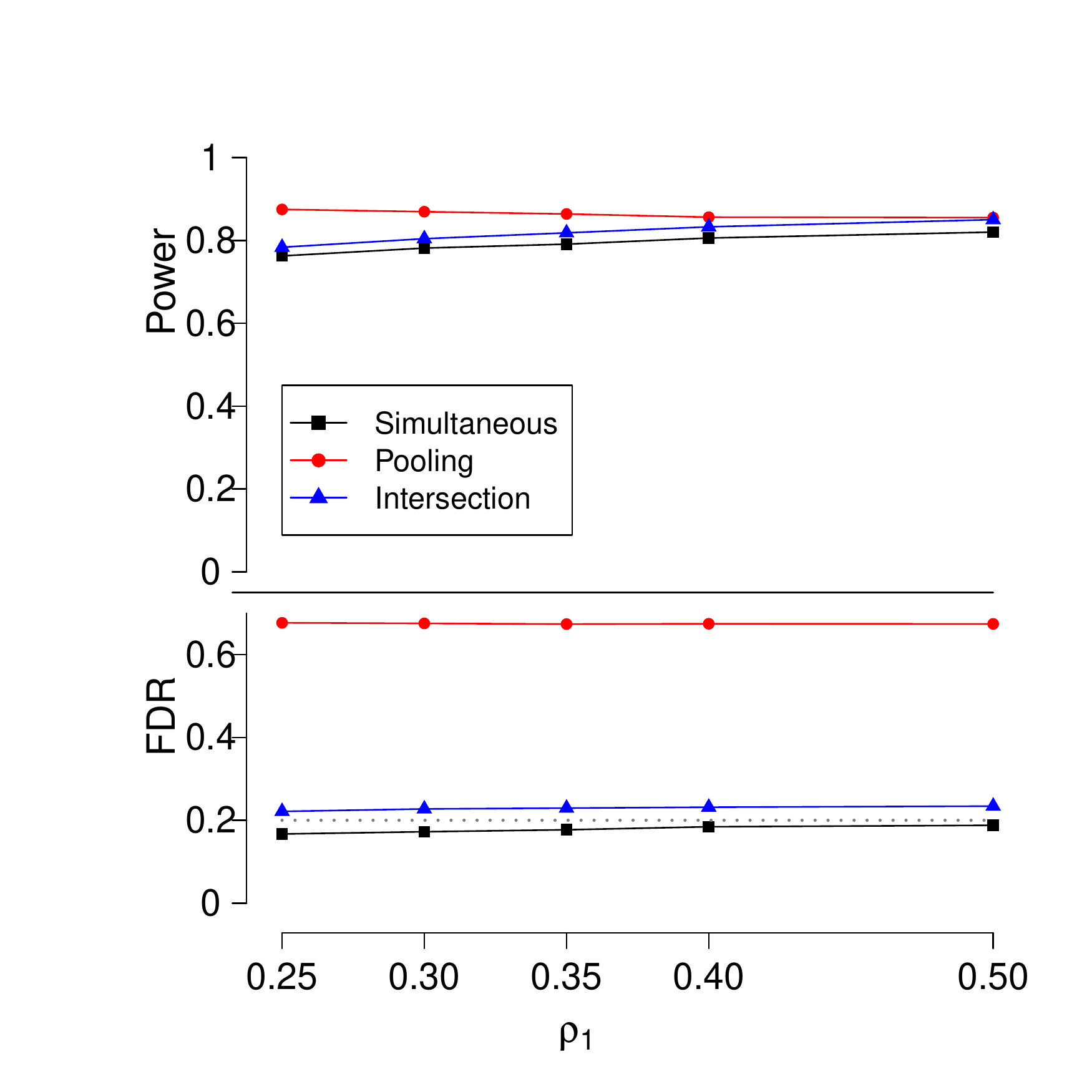}
    \includegraphics[scale=0.3]{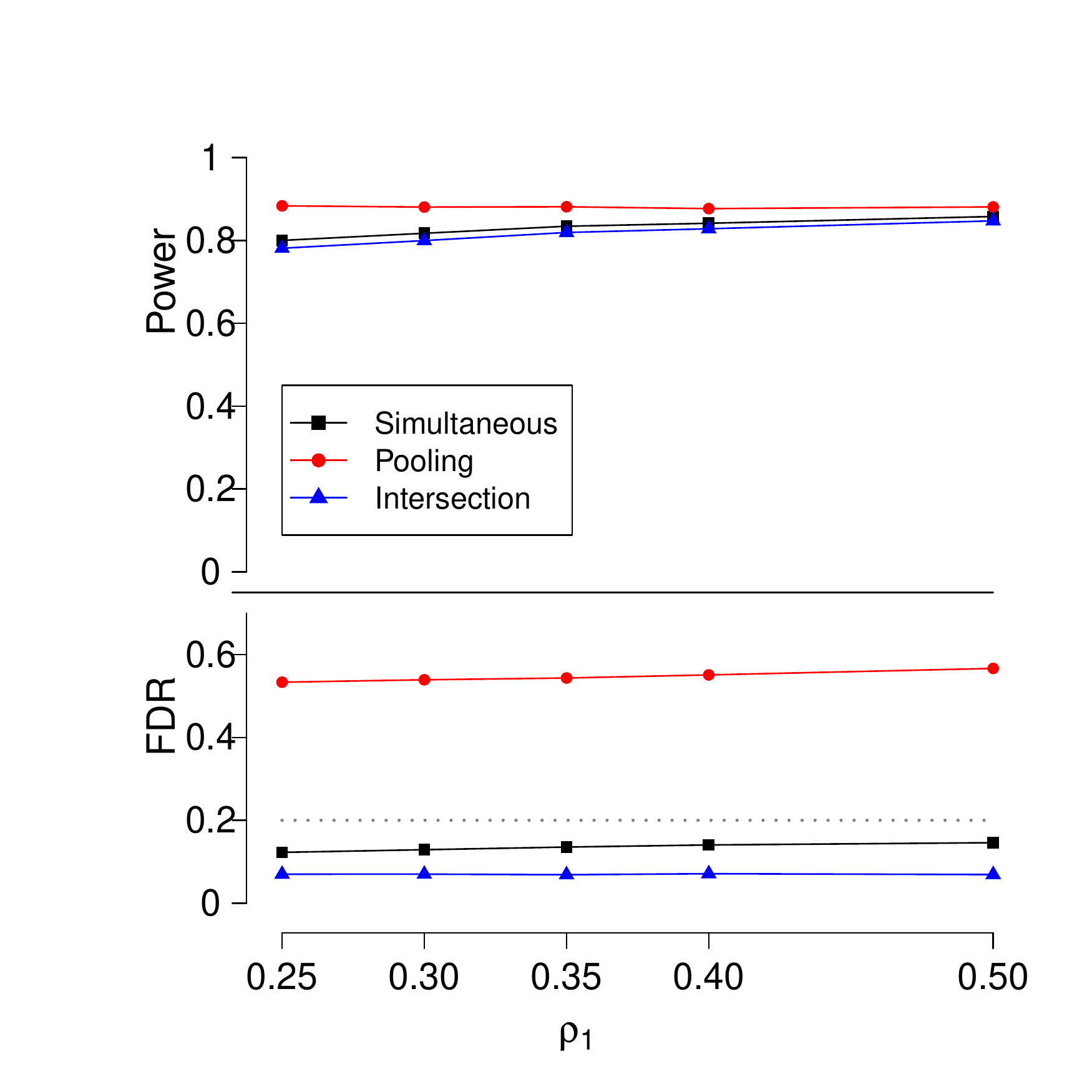}
    \includegraphics[scale=0.3]{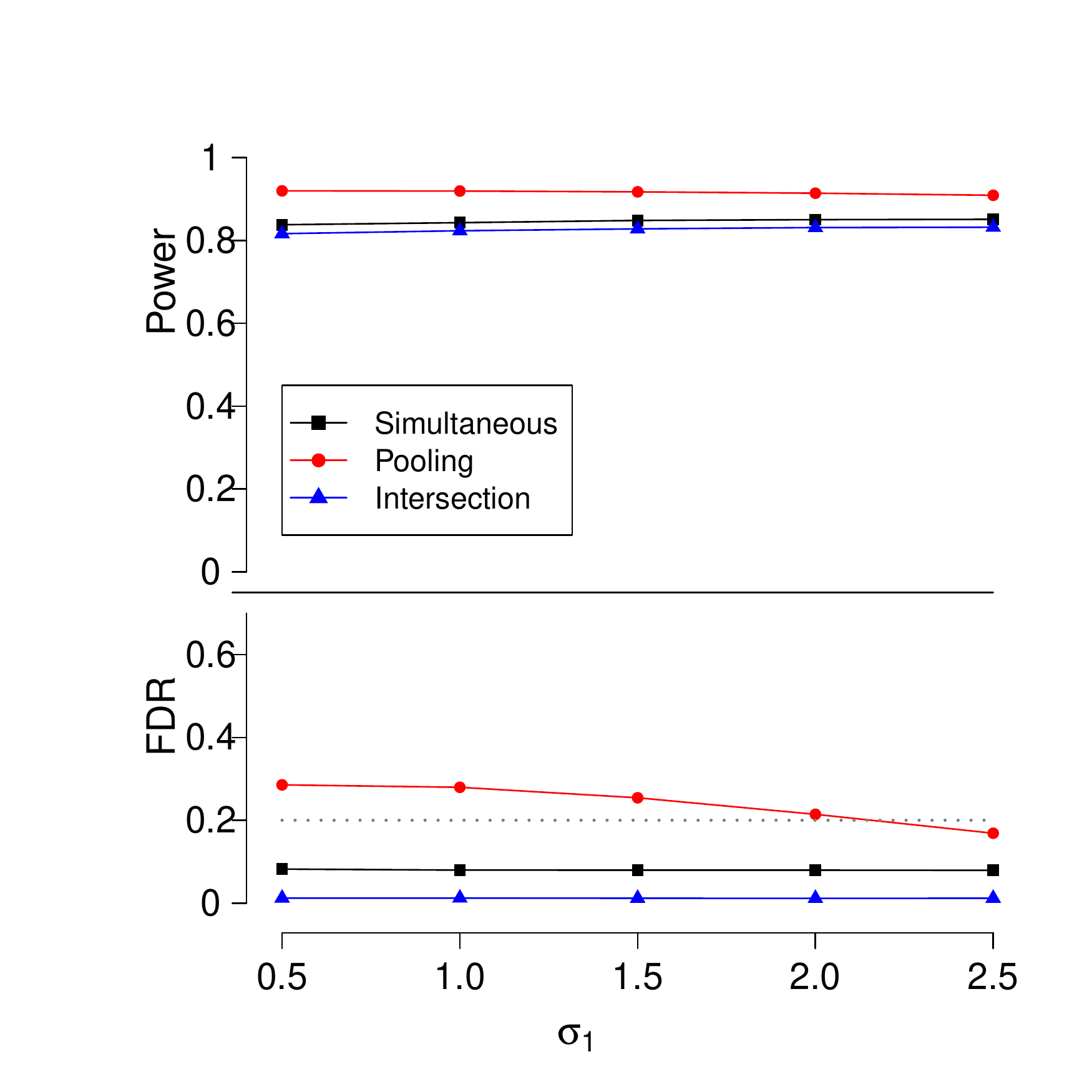}
    \includegraphics[scale=0.3]{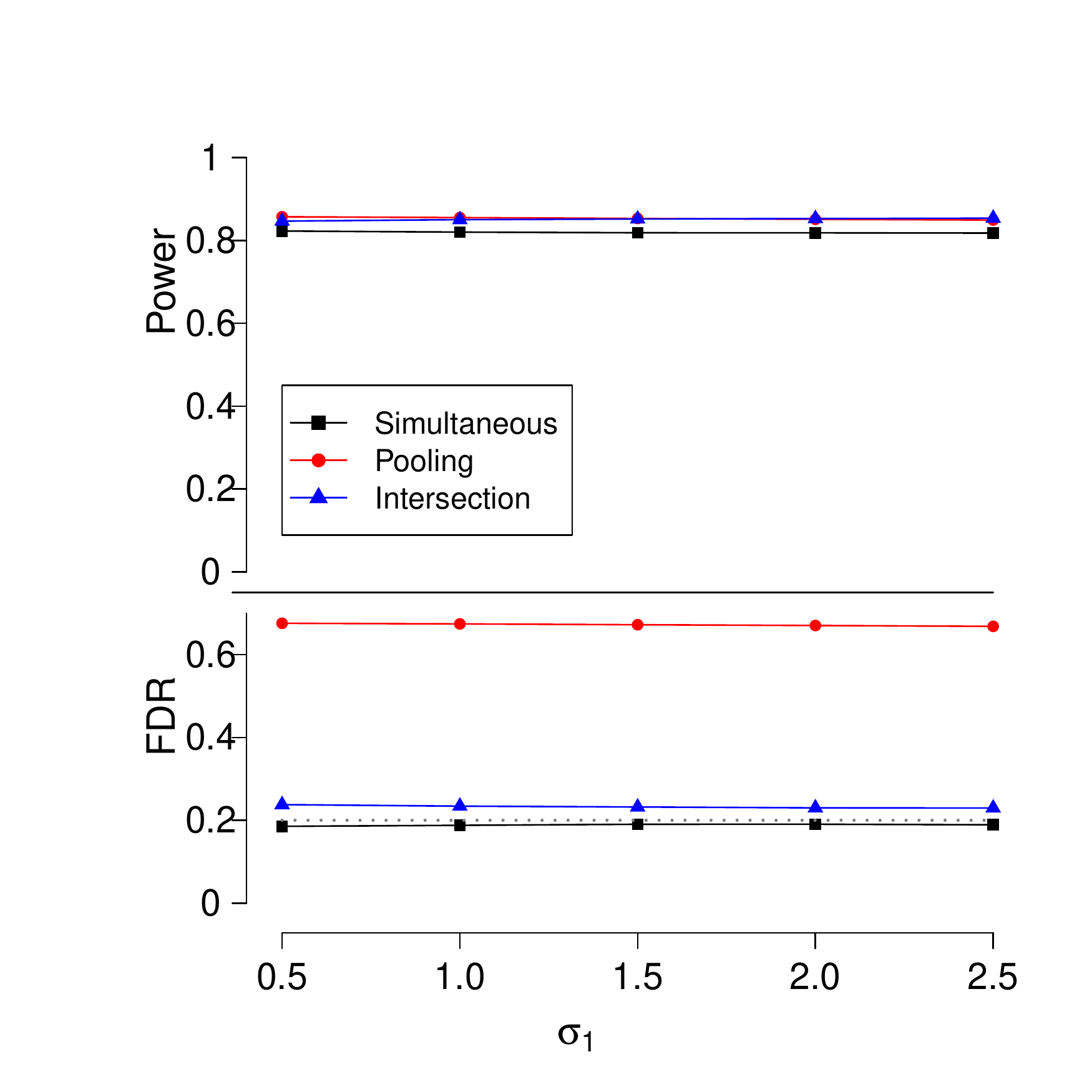}
    \includegraphics[scale=0.3]{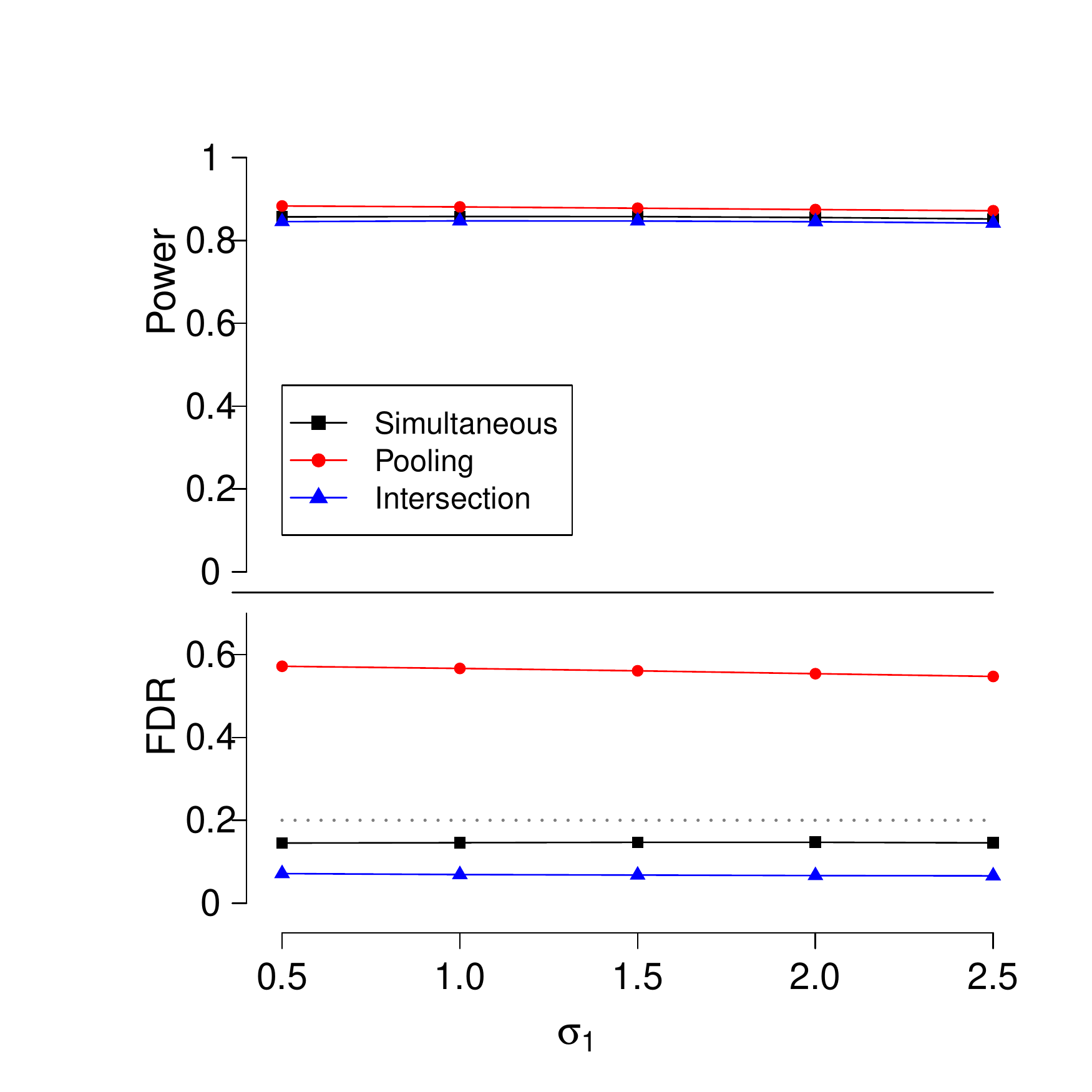}
    \caption{The power and the FDR for simulations with Setting 1 (continuous $K=2$) and Scenario 1 data. Column 1 includes the settings with $s_1=s_2=0$, column 2 includes the settings with $s_1=s_2\neq 0$, column 3 includes the settings with $s_1\neq 0, s_2=0$. Row 1 shows the experiments varying $s_0, s_1, s_2$, row 2 shows experiments varying $\rho_1, \rho_2$, row 3 shows experiments varying $\sigma_1, \sigma_2$.}
    \label{fig:cont_samesig=1}
\end{figure}

\begin{figure}[!p]
    \centering
    \includegraphics[scale=0.3]{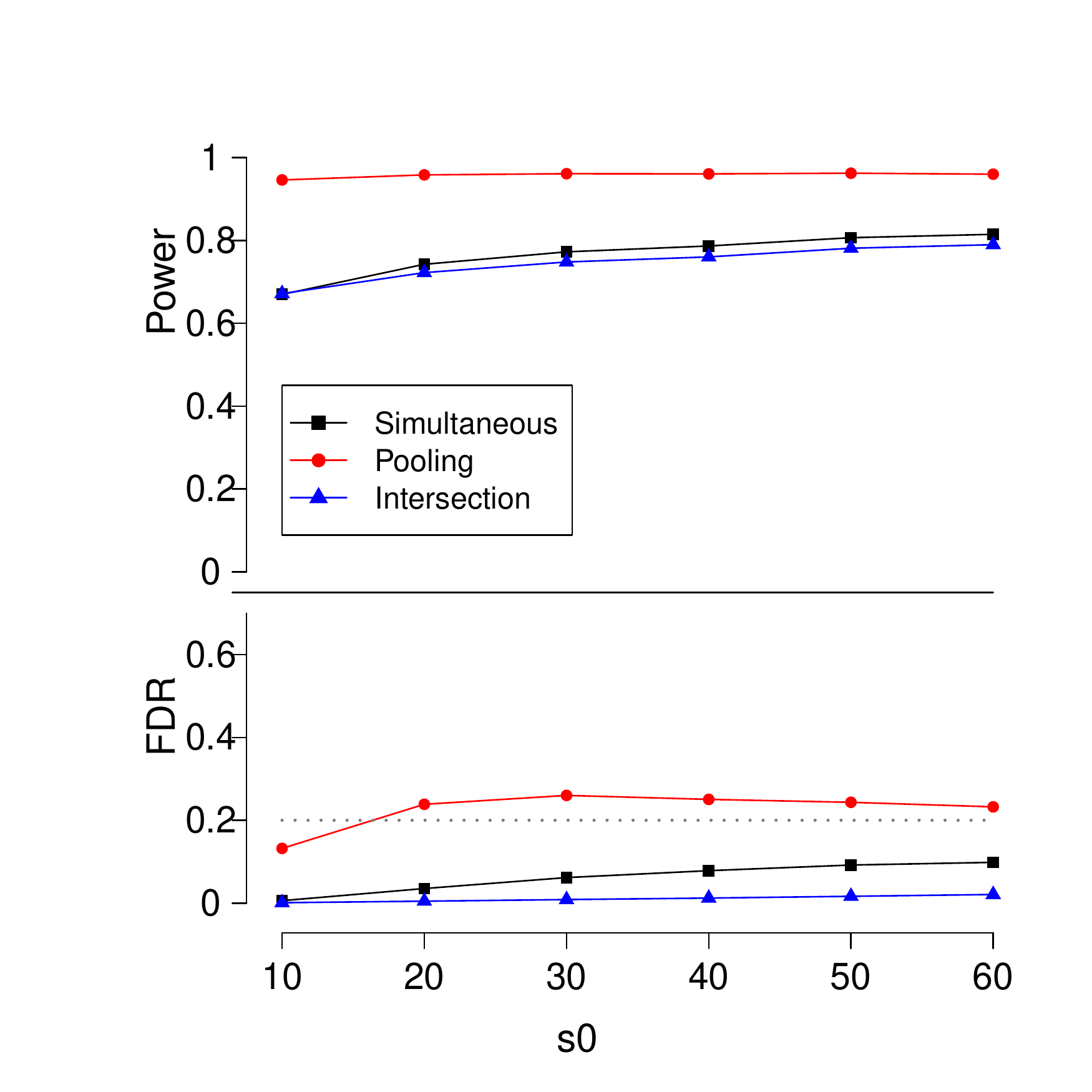}
    \includegraphics[scale=0.3]{Continuous_s1_s2samesig=0.pdf}
    \includegraphics[scale=0.3]{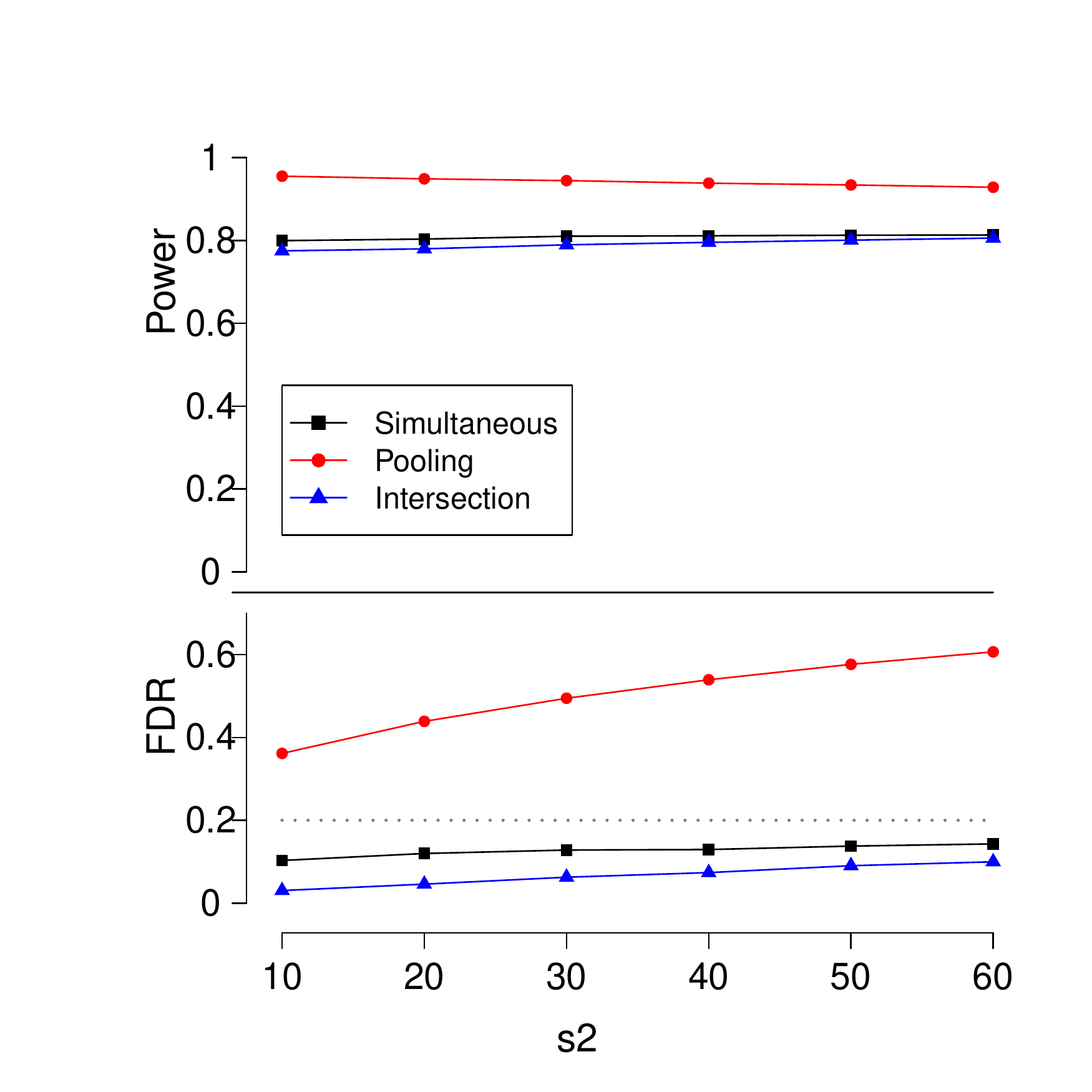}
     \includegraphics[scale=0.3]{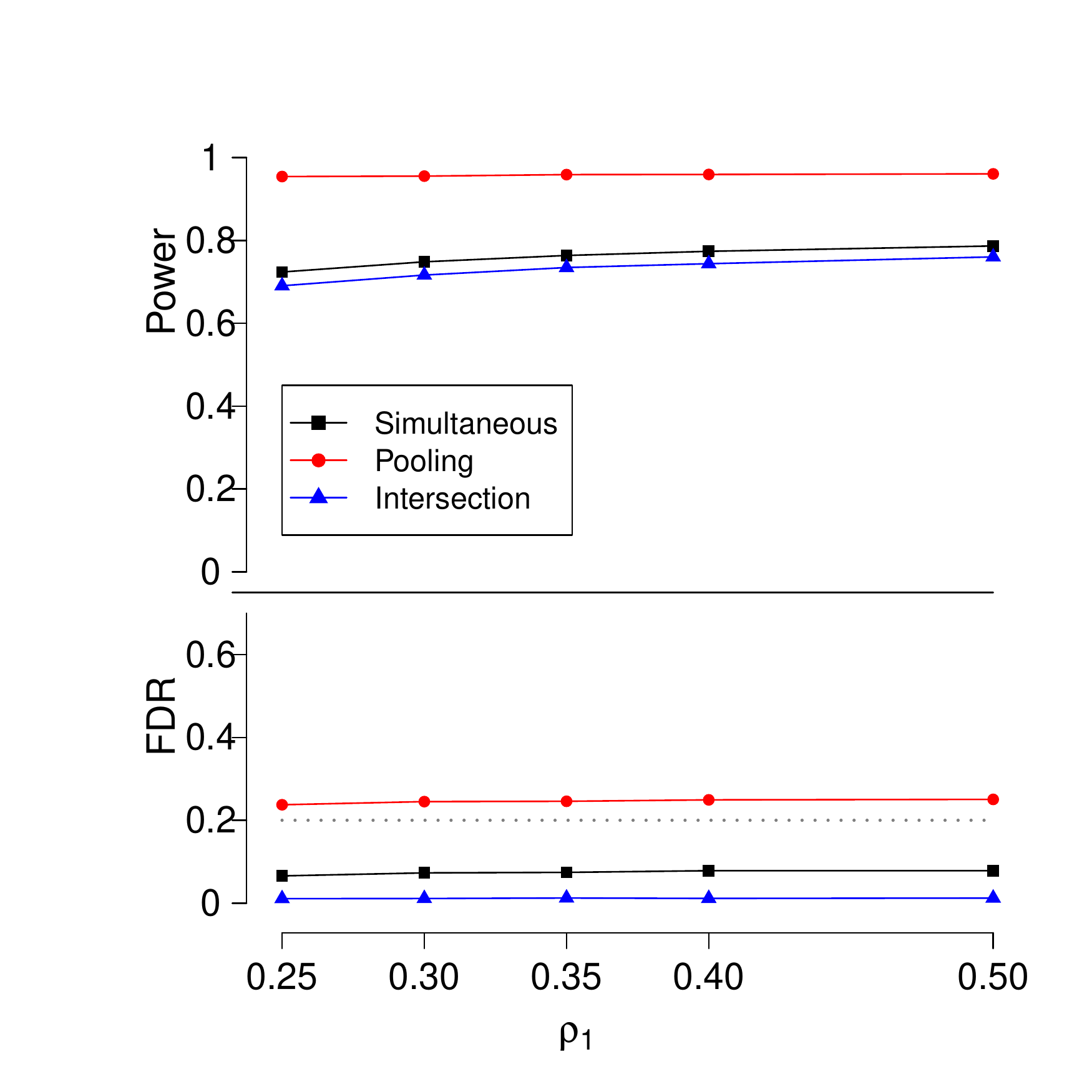}
    \includegraphics[scale=0.3]{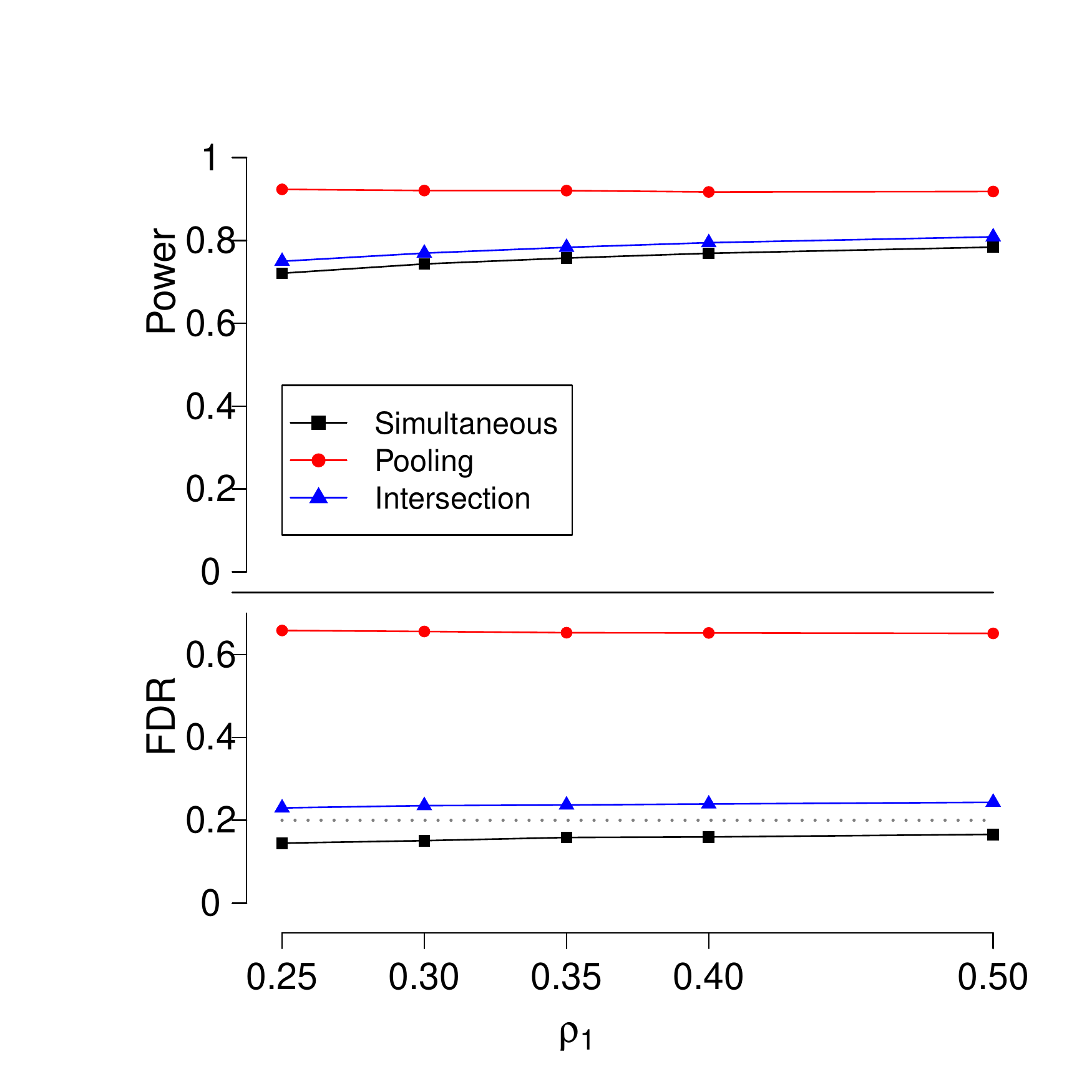}
    \includegraphics[scale=0.3]{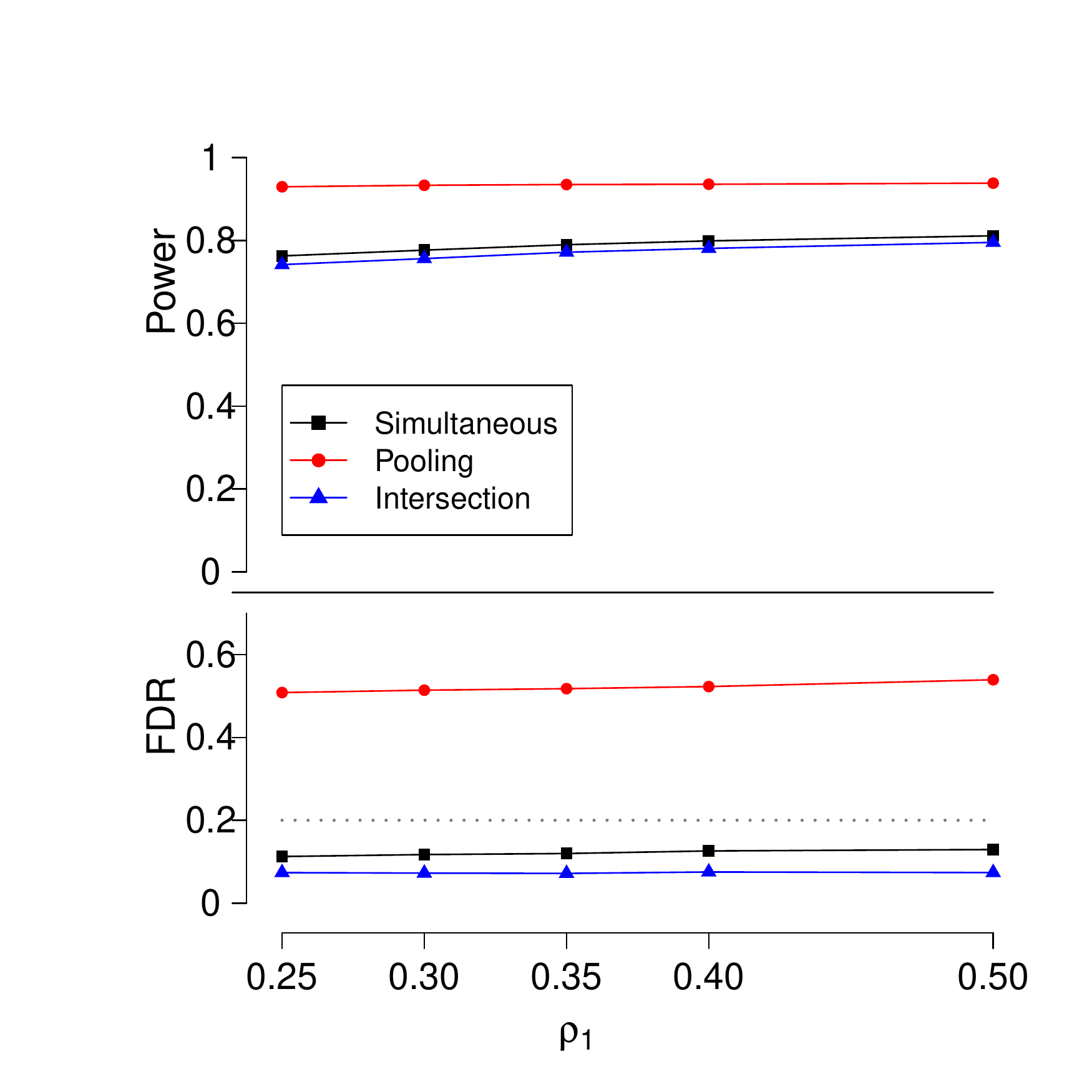}
    \includegraphics[scale=0.3]{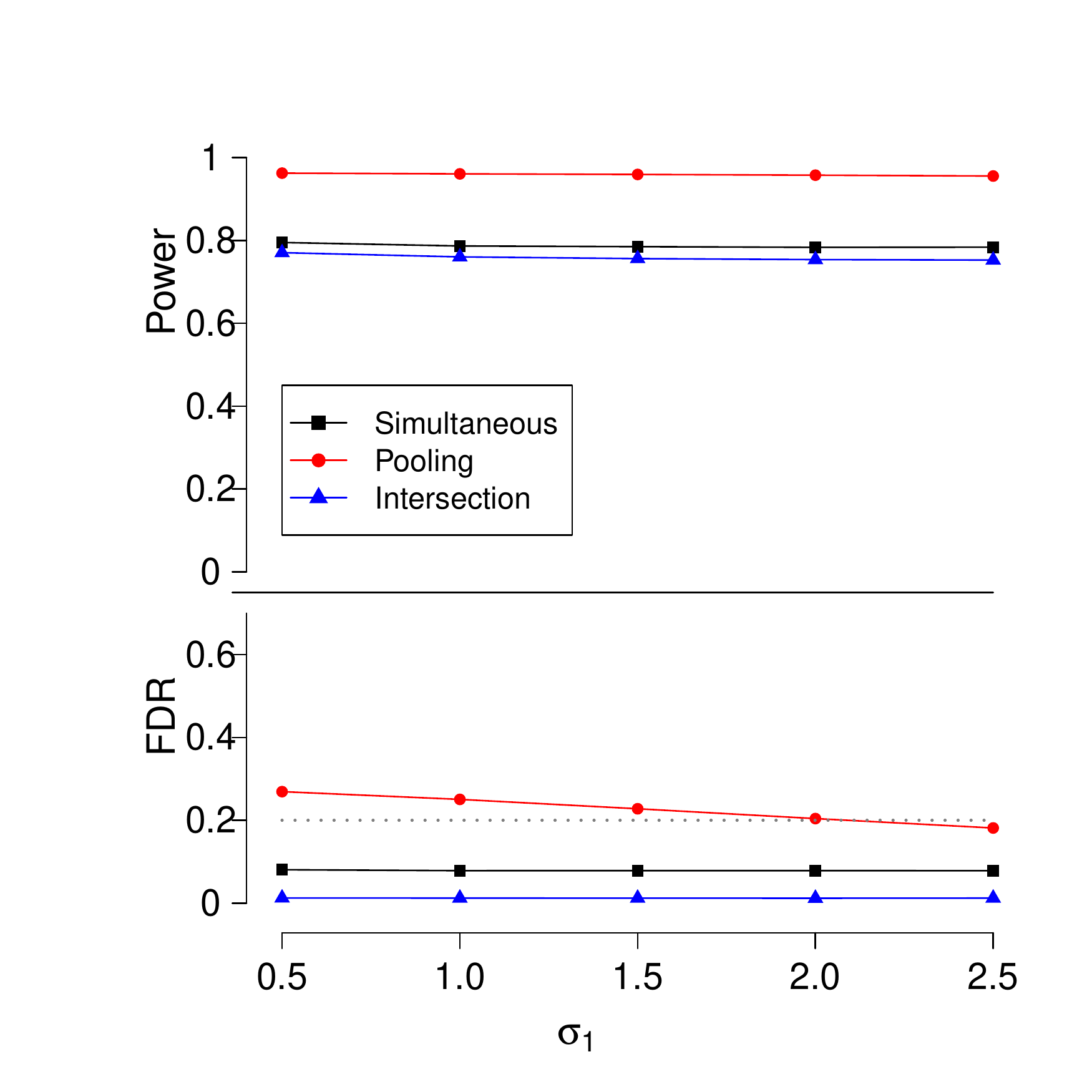}
    \includegraphics[scale=0.3]{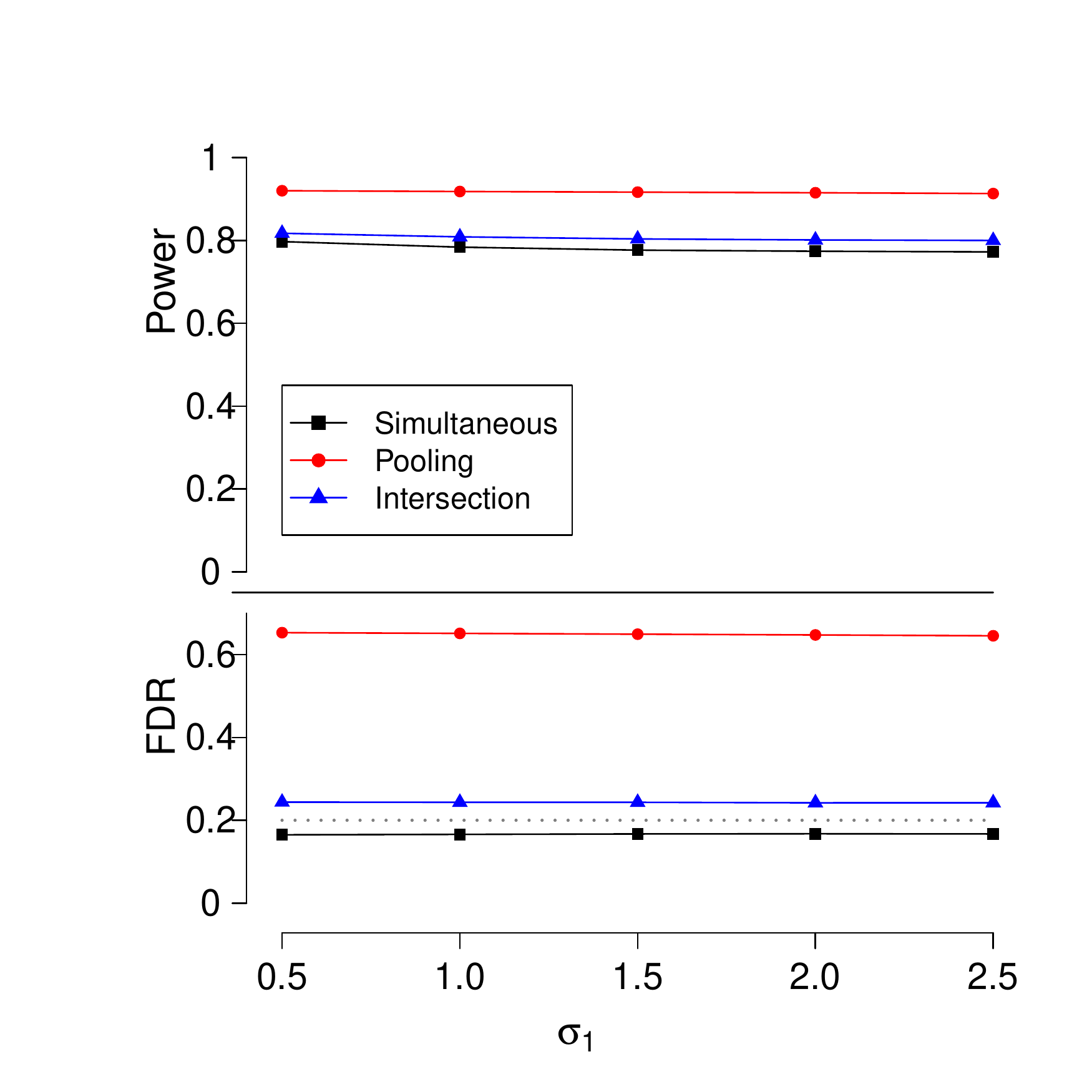}
    \includegraphics[scale=0.3]{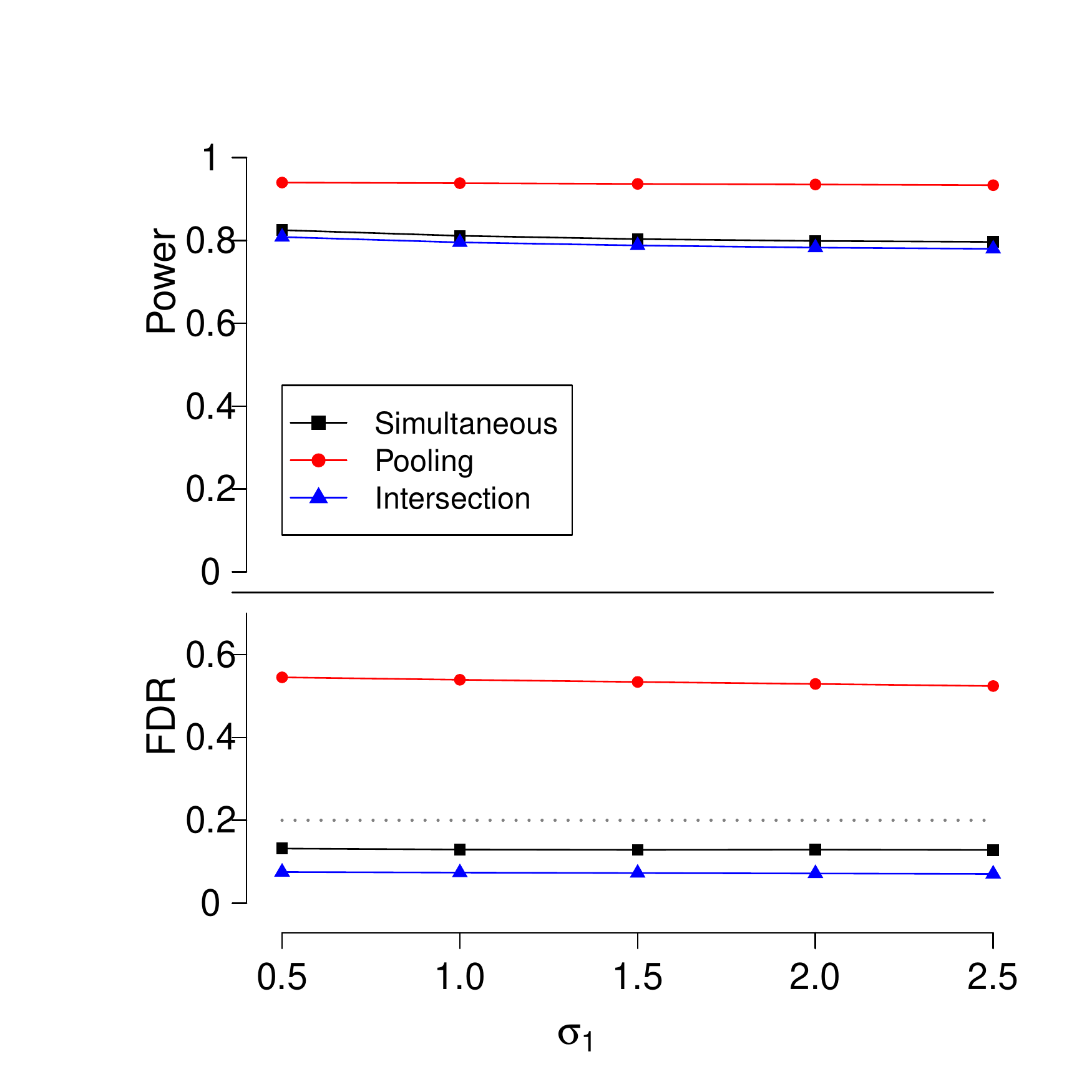}
    \caption{The power and the FDR for simulations with Setting 1 (continuous $K=2$) and Scenario 2 data. Column 1 includes the settings with $s_1=s_2=0$, column 2 includes the settings with $s_1=s_2\neq 0$, column 3 includes the settings with $s_1\neq 0, s_2=0$. Row 1 shows the experiments varying $s_0, s_1, s_2$, row 2 shows experiments varying $\rho_1, \rho_2$, row 3 shows experiments varying $\sigma_1, \sigma_2$.}
    \label{fig:cont_samesig=0}
\end{figure}

For the continuous settings, when there are only simultaneous signals present ($s_1=s_2=0$) and the signal strengths are equal ($\sigma_1=\sigma_2$), \textit{pooling} method also controls the FDR. When $s_1=s_2=0$, but signal strengths are different ($\sigma_1\neq \sigma_2$), \textit{pooling} method does not control the FDR. When there exist non-mutual signals in one group ($s_1=0, s_2\neq 0$), the \textit{pooling} method fails to control the FDR, and the FDR increases as $s_2$ increases. The \textit{intersection} method still controls the FDR and has slightly lower power than the proposed \textit{simultaneous} method. When non-mutual signals are present in both groups ($s_1=s_2\neq 0$), we observe that with $s_1=s_2$ increases, the \textit{intersection} method does not control the FDR. In this case, only the \textit{simultaneous} method successfully controls the FDR with very small power loss compared with the other two methods. The power does not change as the difference between the signal strengths from the two experiments increases. In general, for Scenario 1, where both the strengths and the directions (positive or negative) of mutual signals are the same, the \textit{simultaneous} and the \textit{intersection} methods have similar power, which is only slightly lower than the \textit{pooling} method. For Scenario 2, there is some power loss with our proposed \textit{simultaneous} method when compared with the \textit{pooling} method; the loss is small, suggesting that gaining communication efficient property does not cost us much.

\subsubsection*{C.3.2 Additional simulation results for binary outcomes with $K=2$\\}
In Figures \ref{fig:bin_samesig=1} and \ref{fig:bin_samesig=0} we show results for Setting 2 (binary) with Scenario 1 (same signal strengths) and Scenario 2 (different signal strengths).

\begin{figure}[!p]
    \centering
    \includegraphics[scale=0.3]{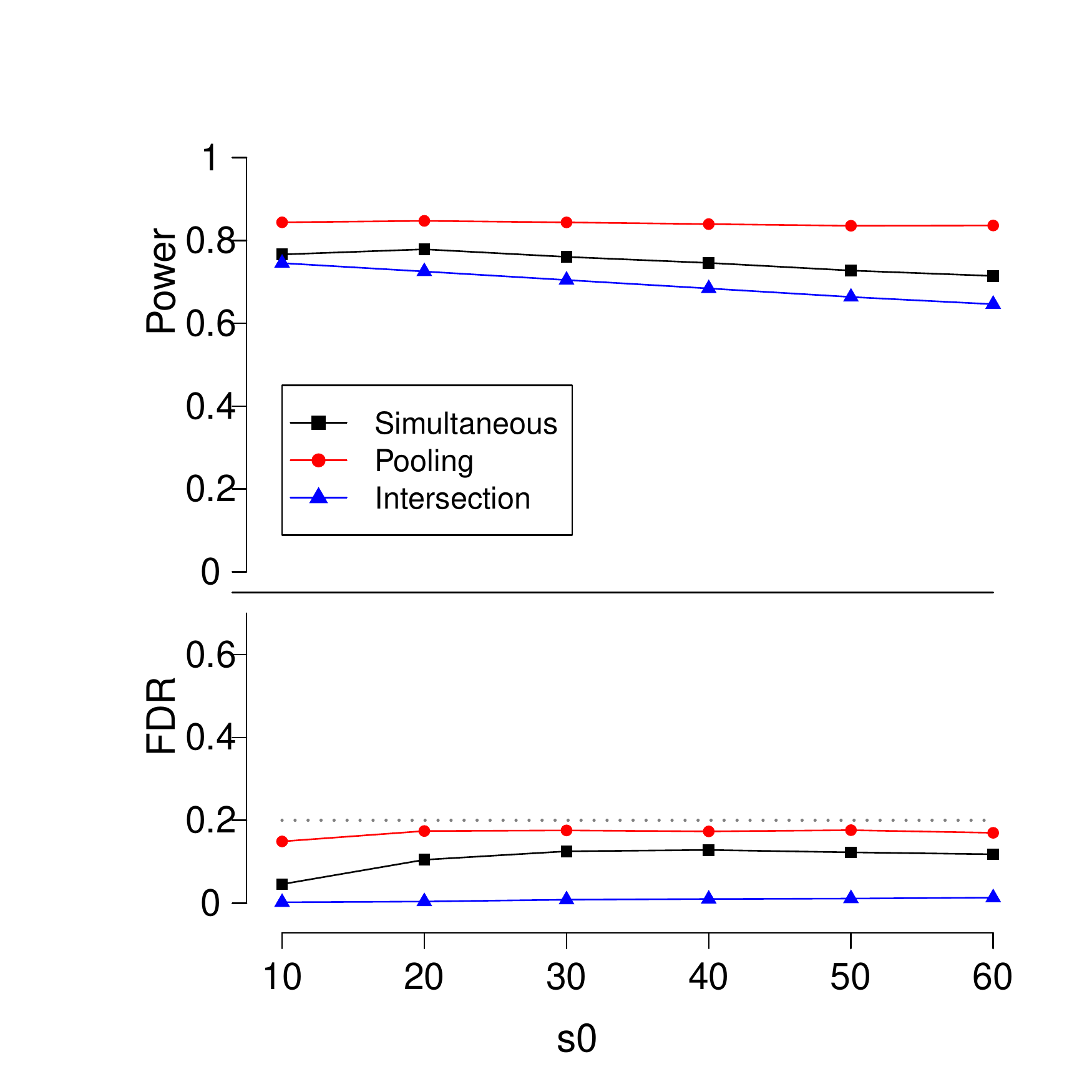}
    \includegraphics[scale=0.3]{Binary_s1_s2.pdf}
    \includegraphics[scale=0.3]{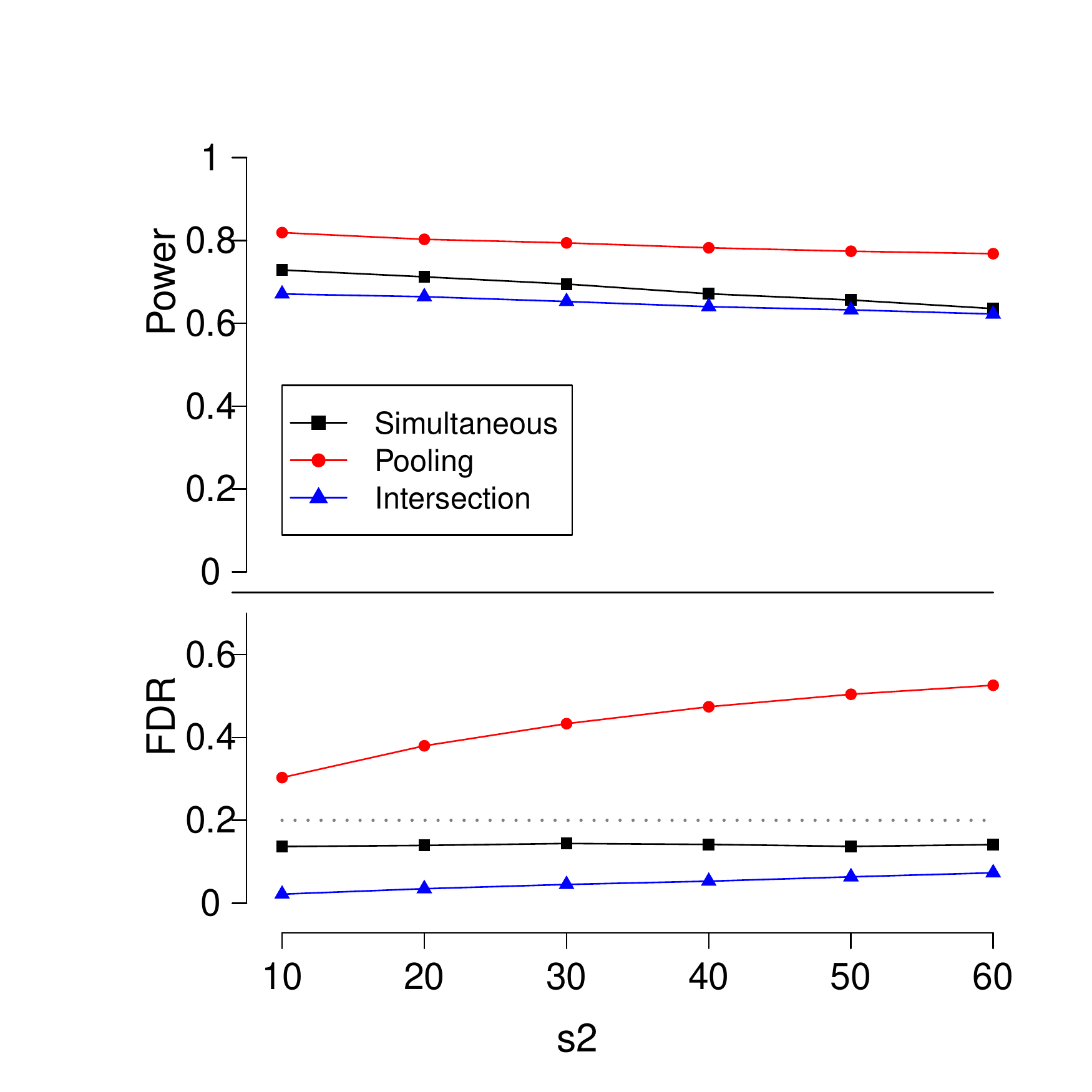}
     \includegraphics[scale=0.3]{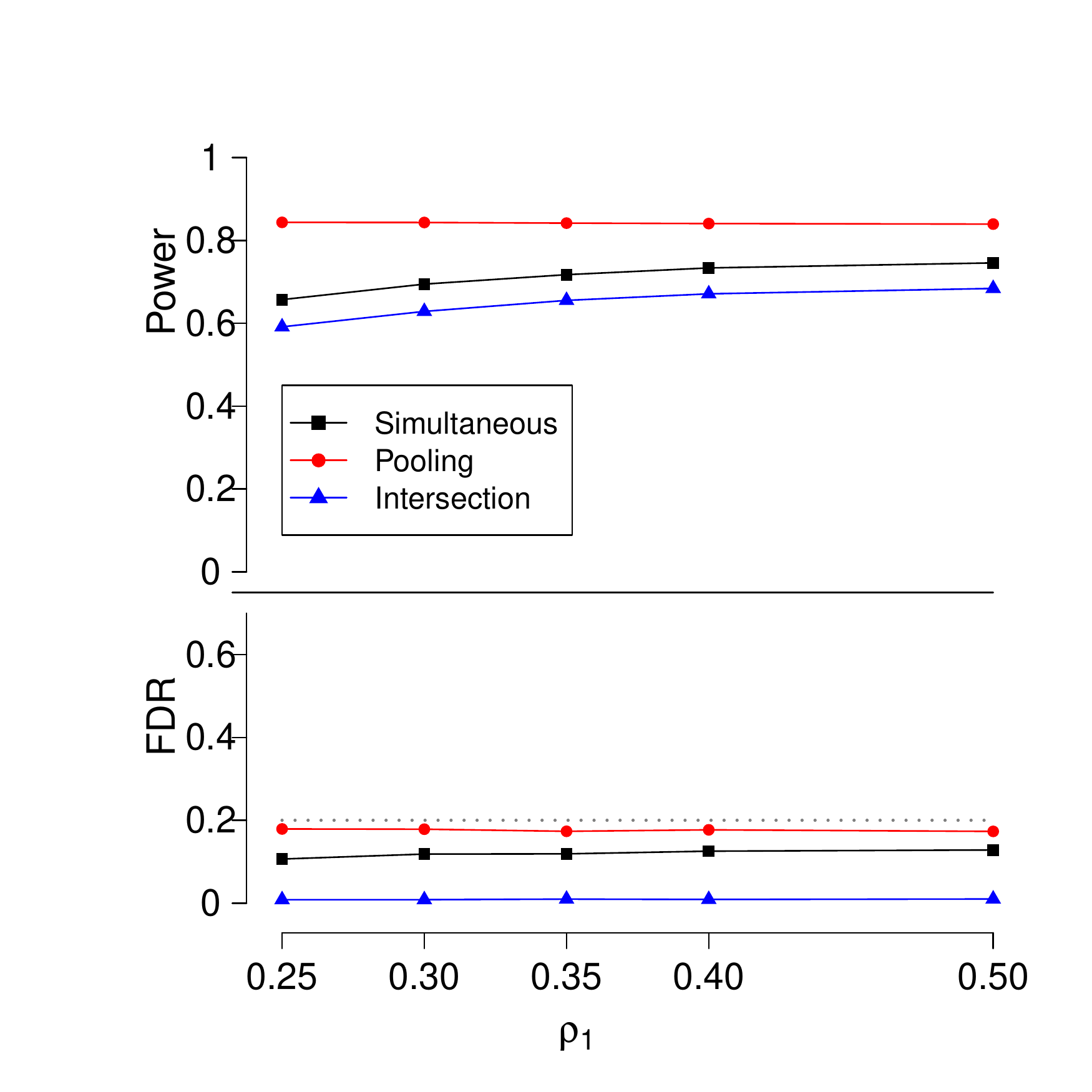}
    \includegraphics[scale=0.3]{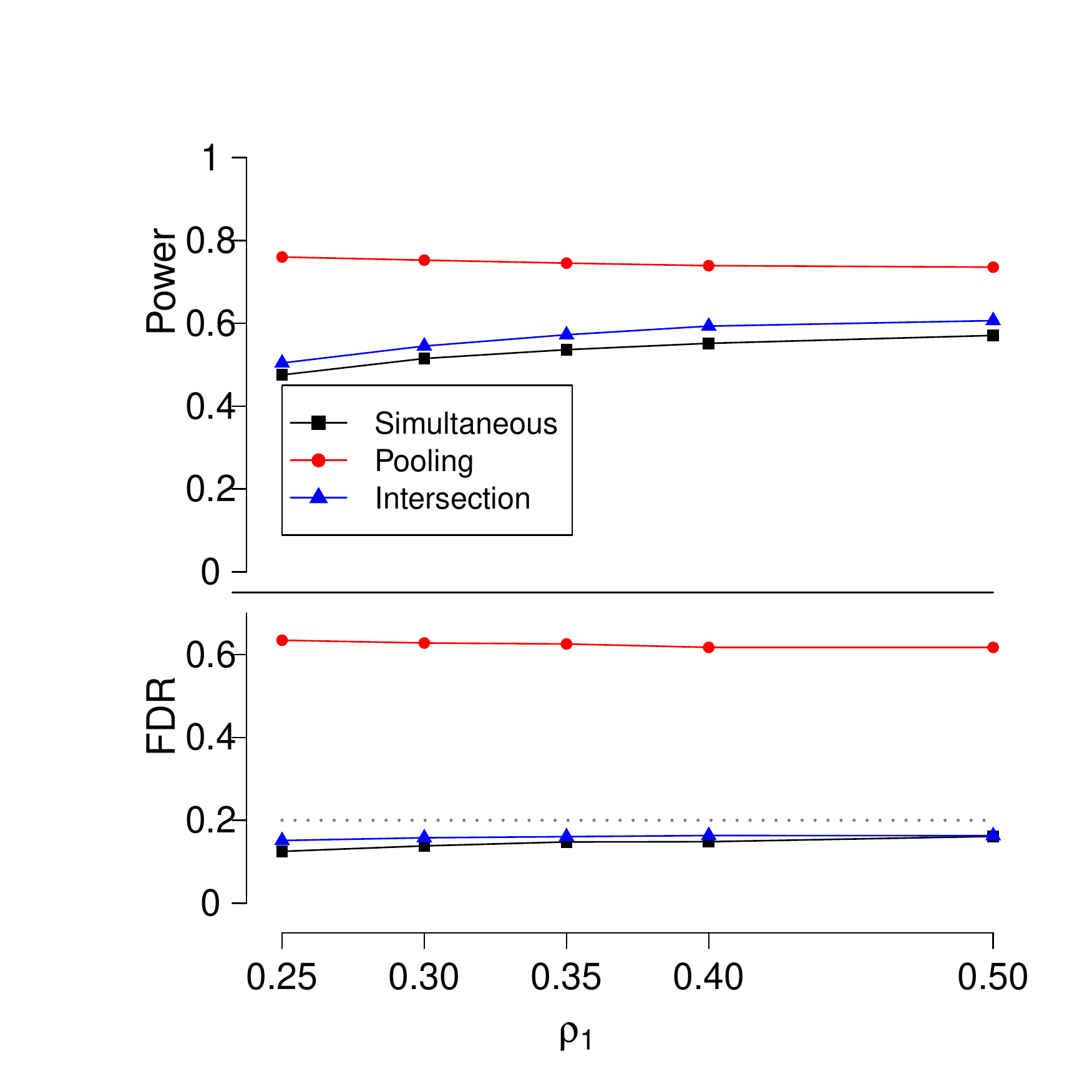}
    \includegraphics[scale=0.3]{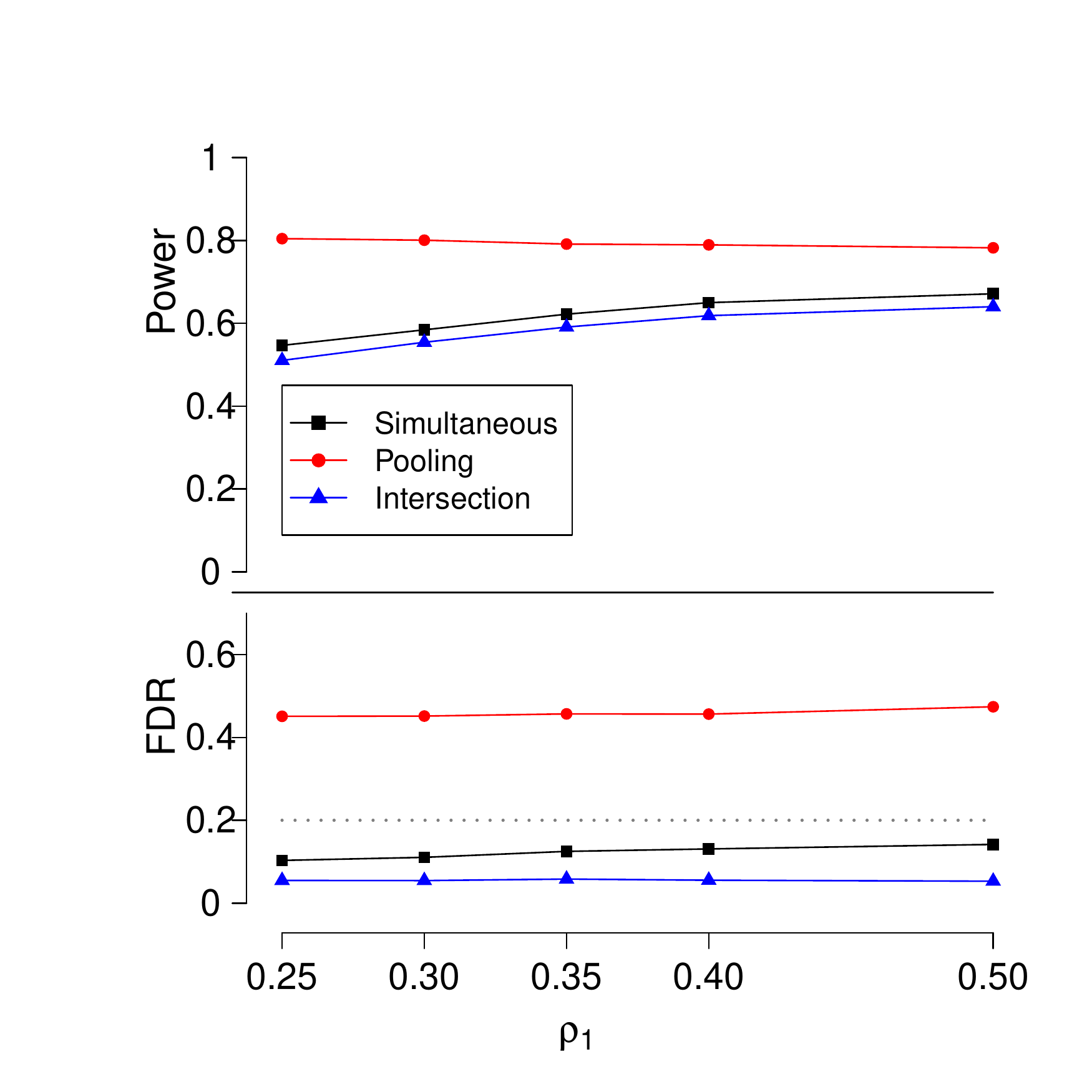}
    \includegraphics[scale=0.3]{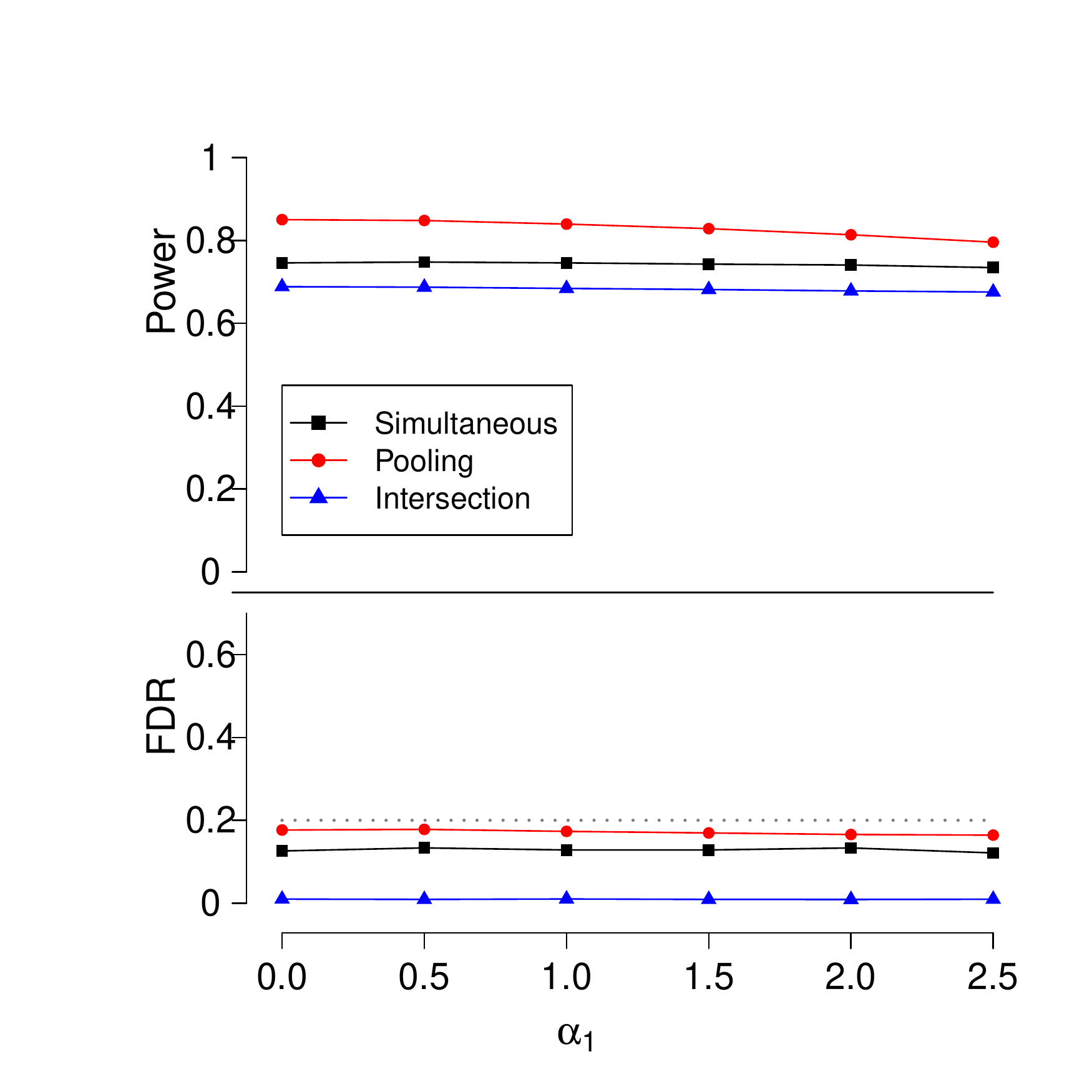}
    \includegraphics[scale=0.3]{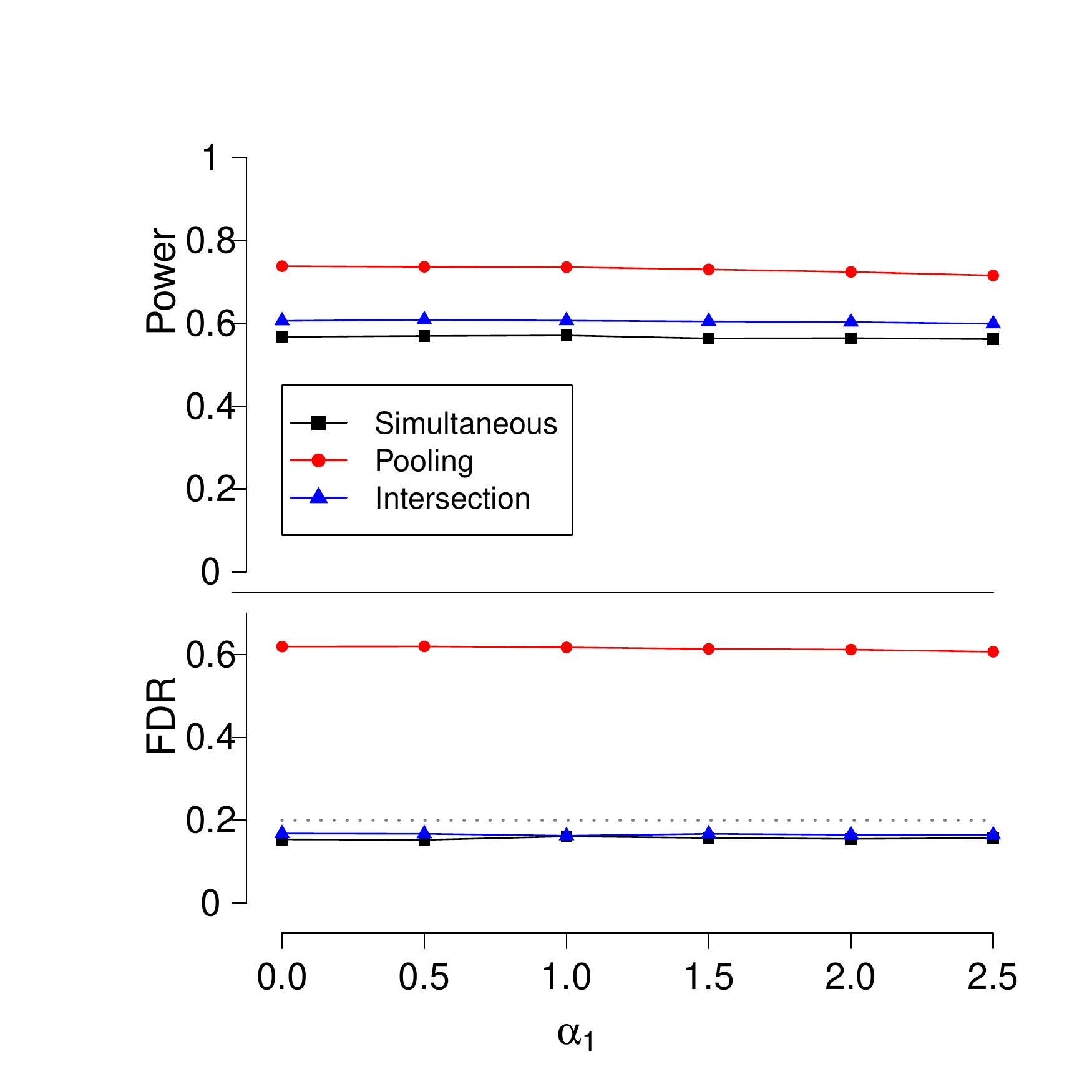}
    \includegraphics[scale=0.3]{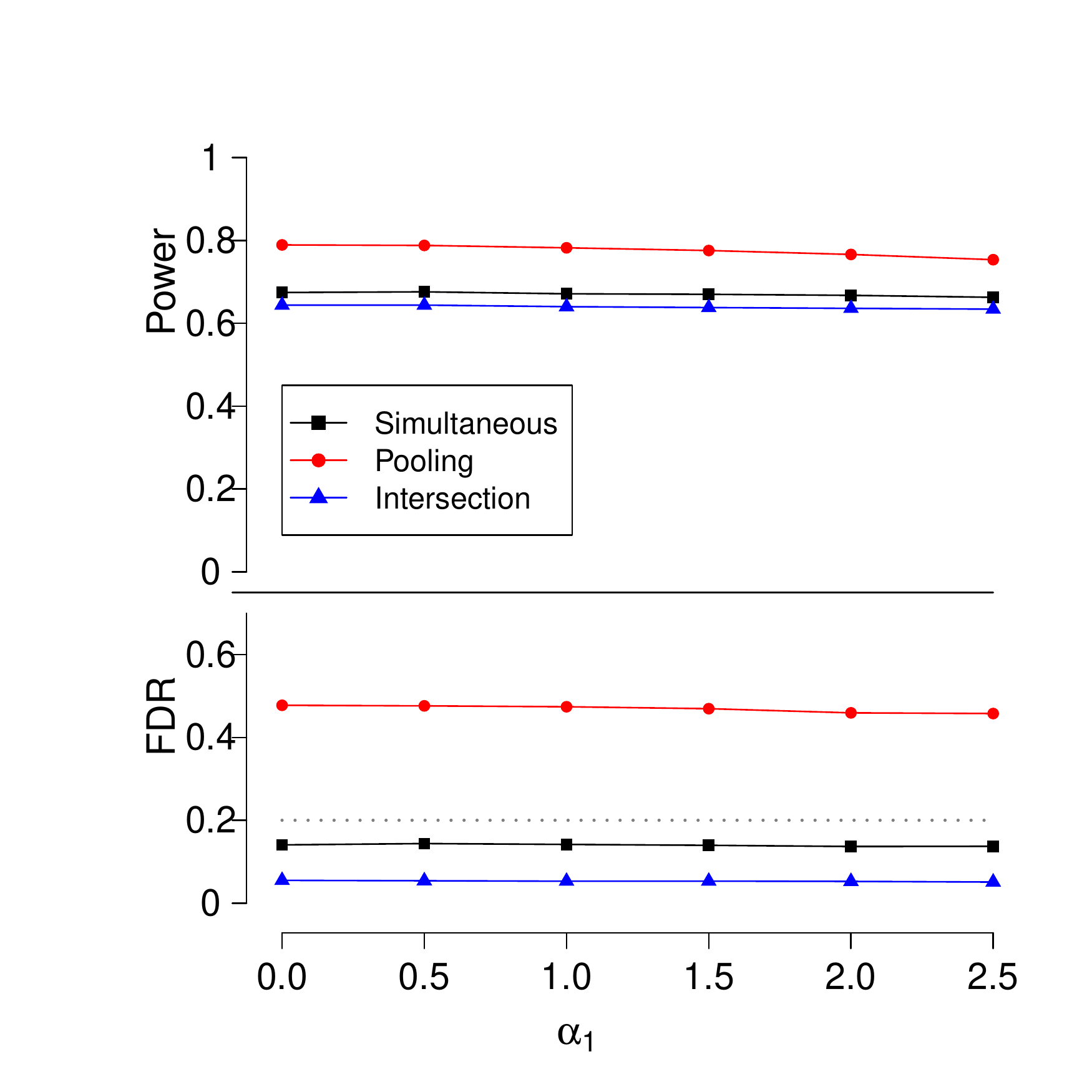}
    \caption{The power and the FDR for simulations with Setting 2 (binary  $K=2$) and Scenario 1 data. Column 1 includes the settings with $s_1=s_2=0$, column 2 includes the settings with $s_1=s_2\neq 0$, column 3 includes the settings with $s_1\neq 0, s_2=0$. Row 1 shows the experiments varying $s_0, s_1, s_2$, row 2 shows experiments varying $\rho_1, \rho_2$, row 3 shows experiments varying $\alpha_1, \alpha_2$.}
    \label{fig:bin_samesig=1}
\end{figure}

\begin{figure}[!p]
    \centering
    \includegraphics[scale=0.3]{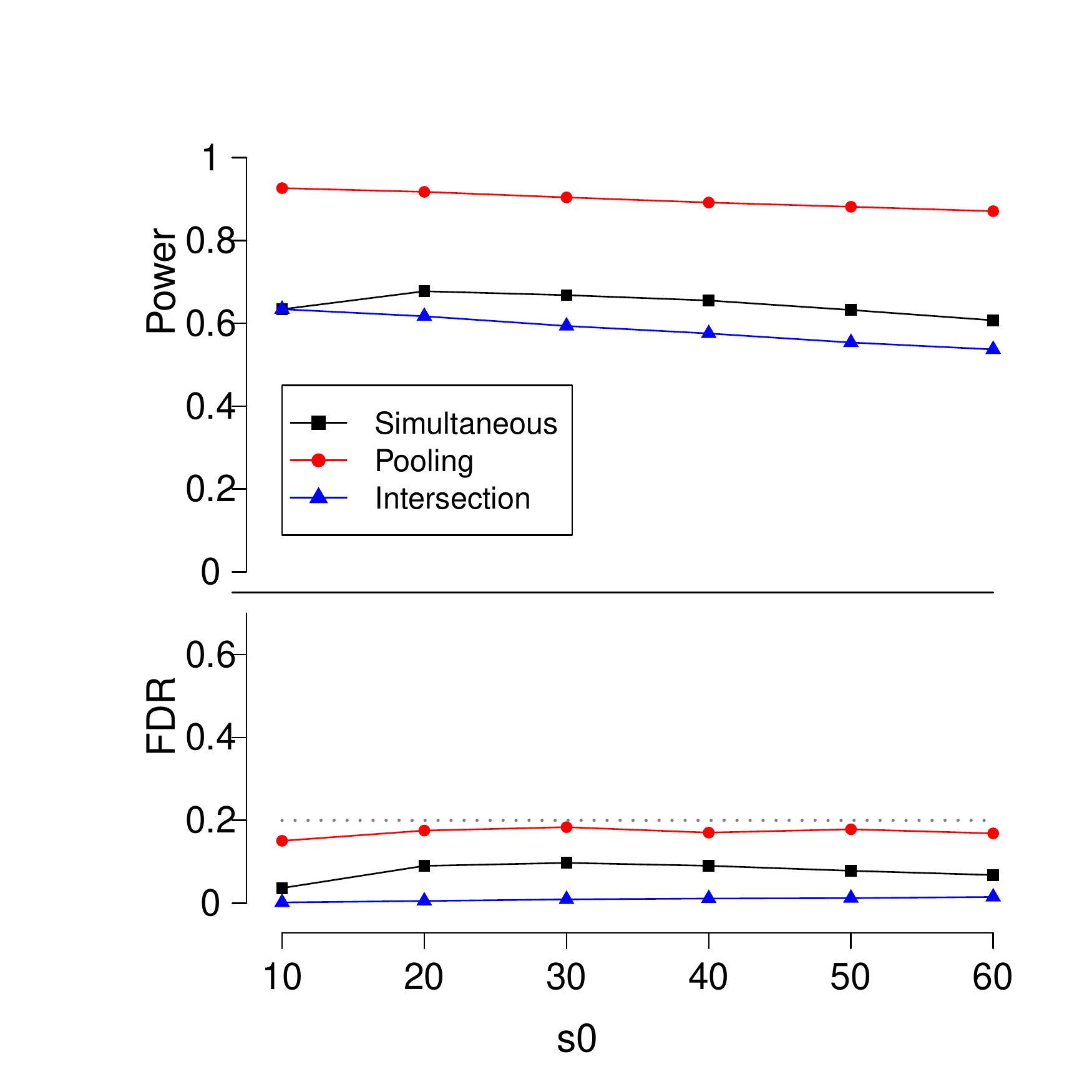}
    \includegraphics[scale=0.3]{Binary_s1_s2samesig=0.pdf}
    \includegraphics[scale=0.3]{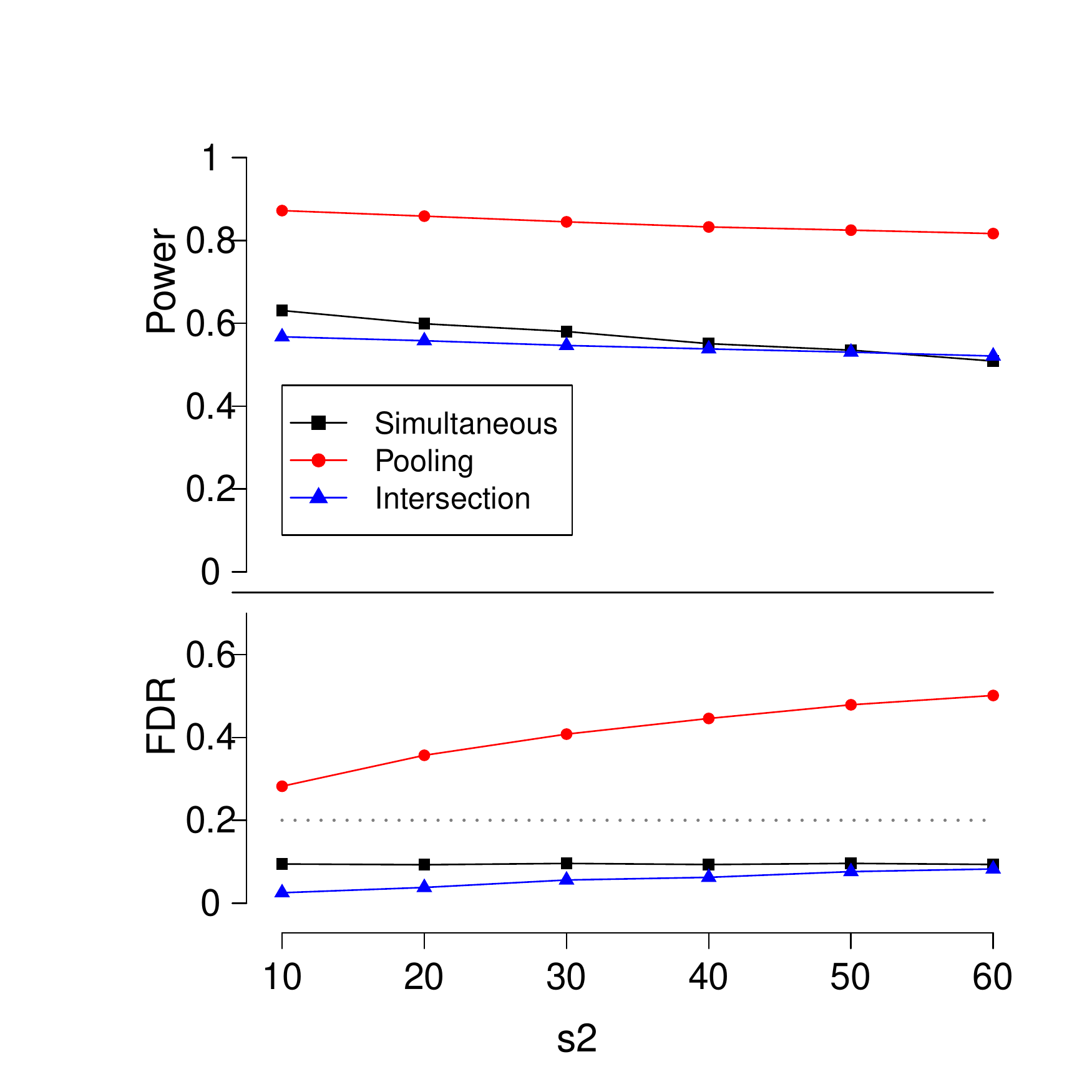}
     \includegraphics[scale=0.3]{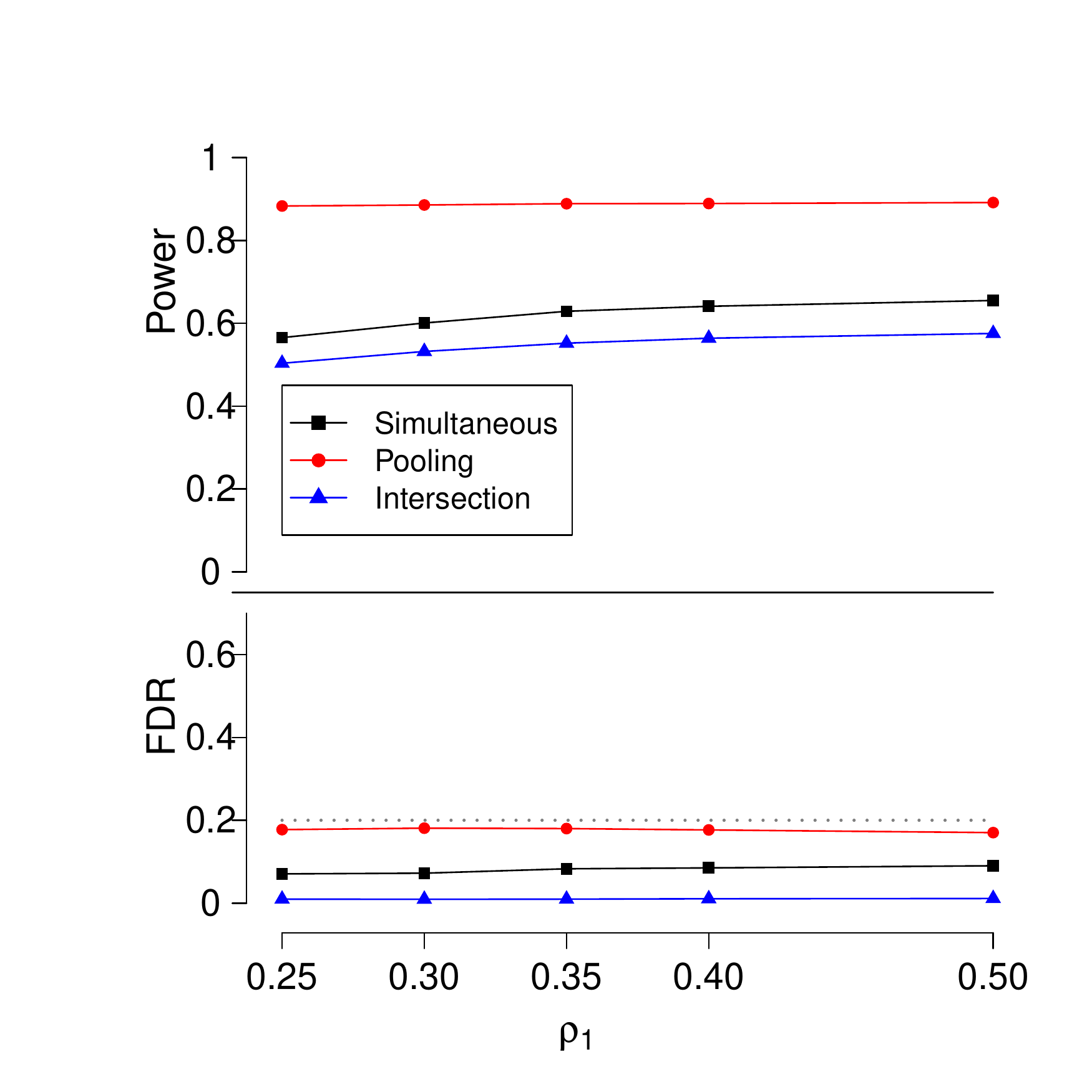}
    \includegraphics[scale=0.3]{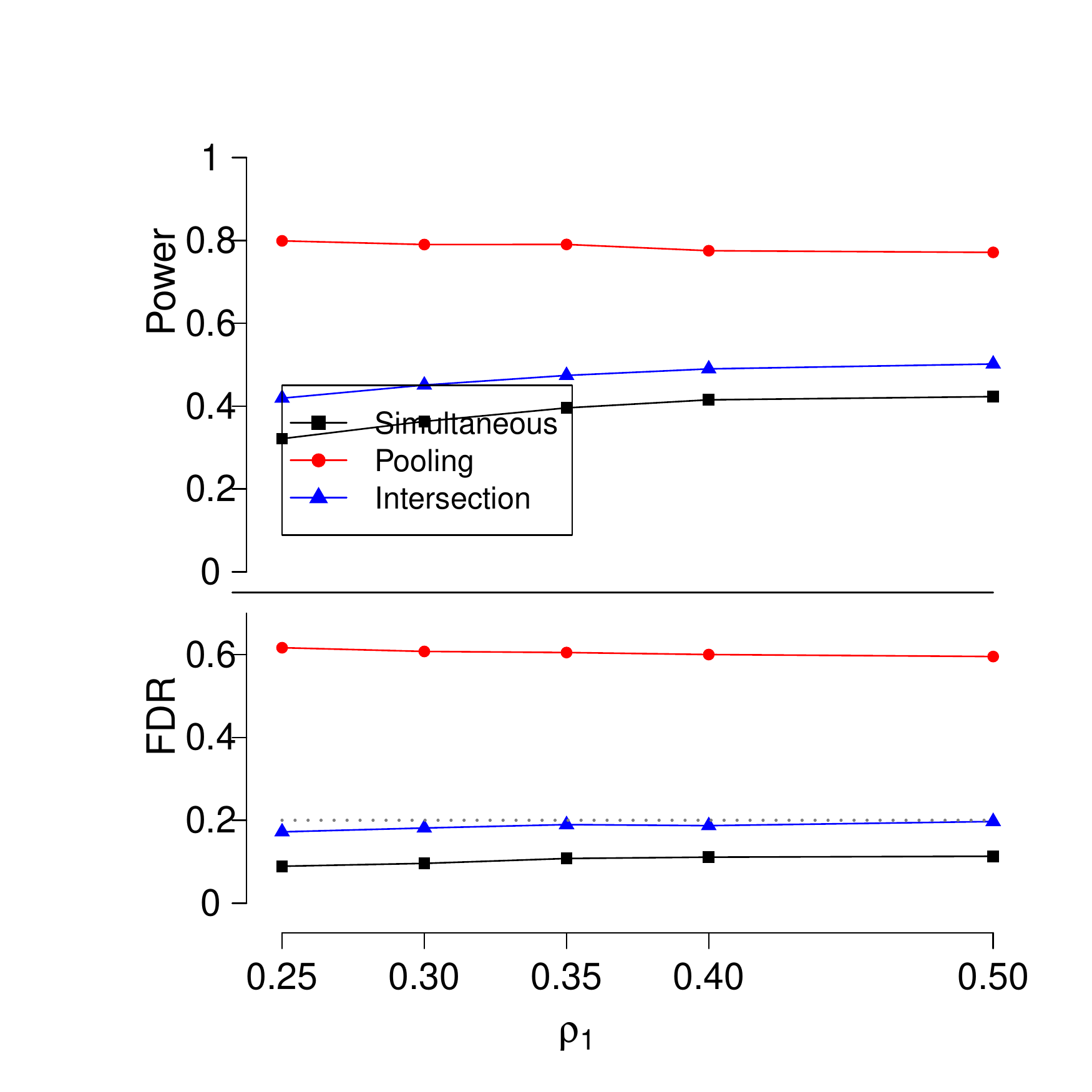}
    \includegraphics[scale=0.3]{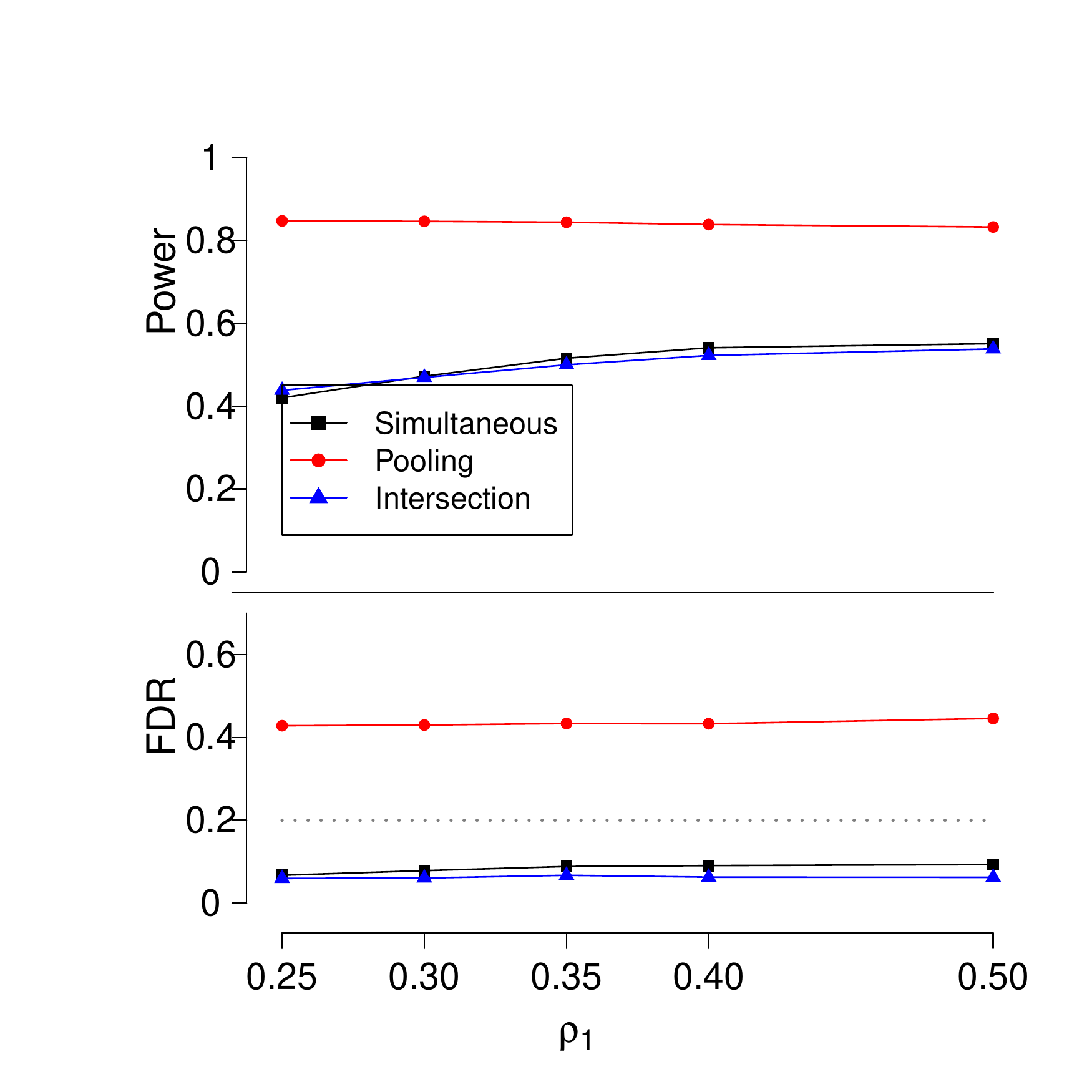}
    \includegraphics[scale=0.3]{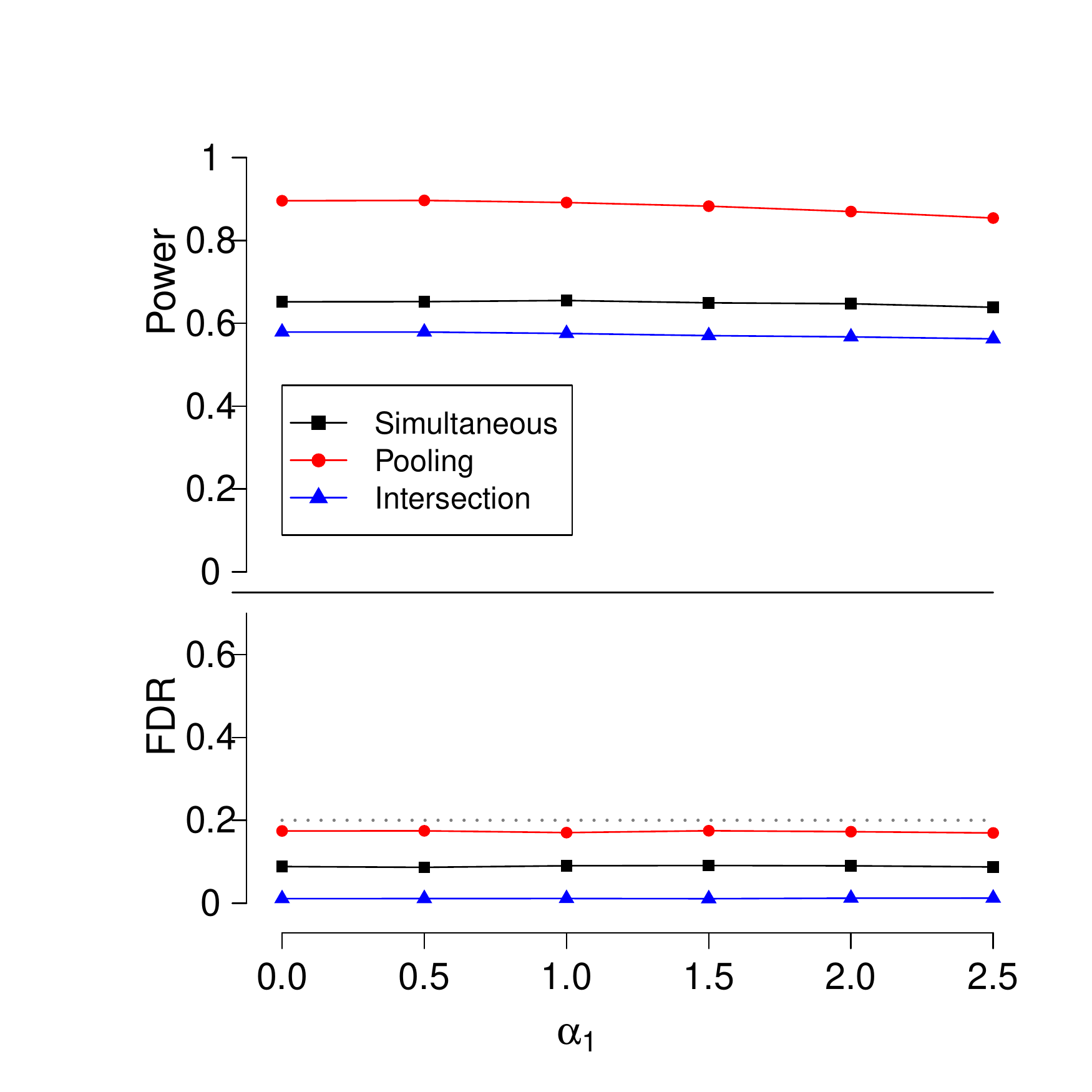}
    \includegraphics[scale=0.3]{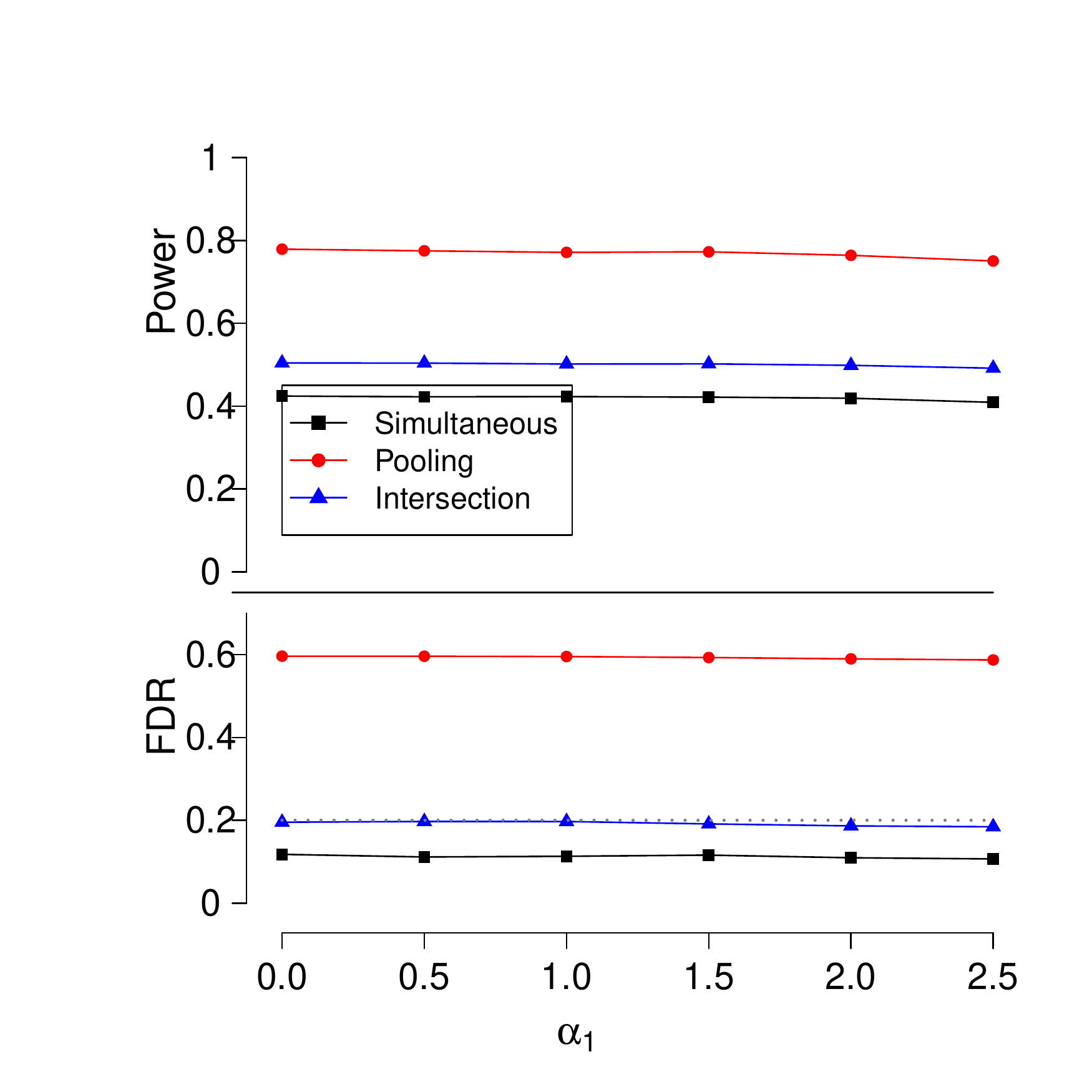}
    \includegraphics[scale=0.3]{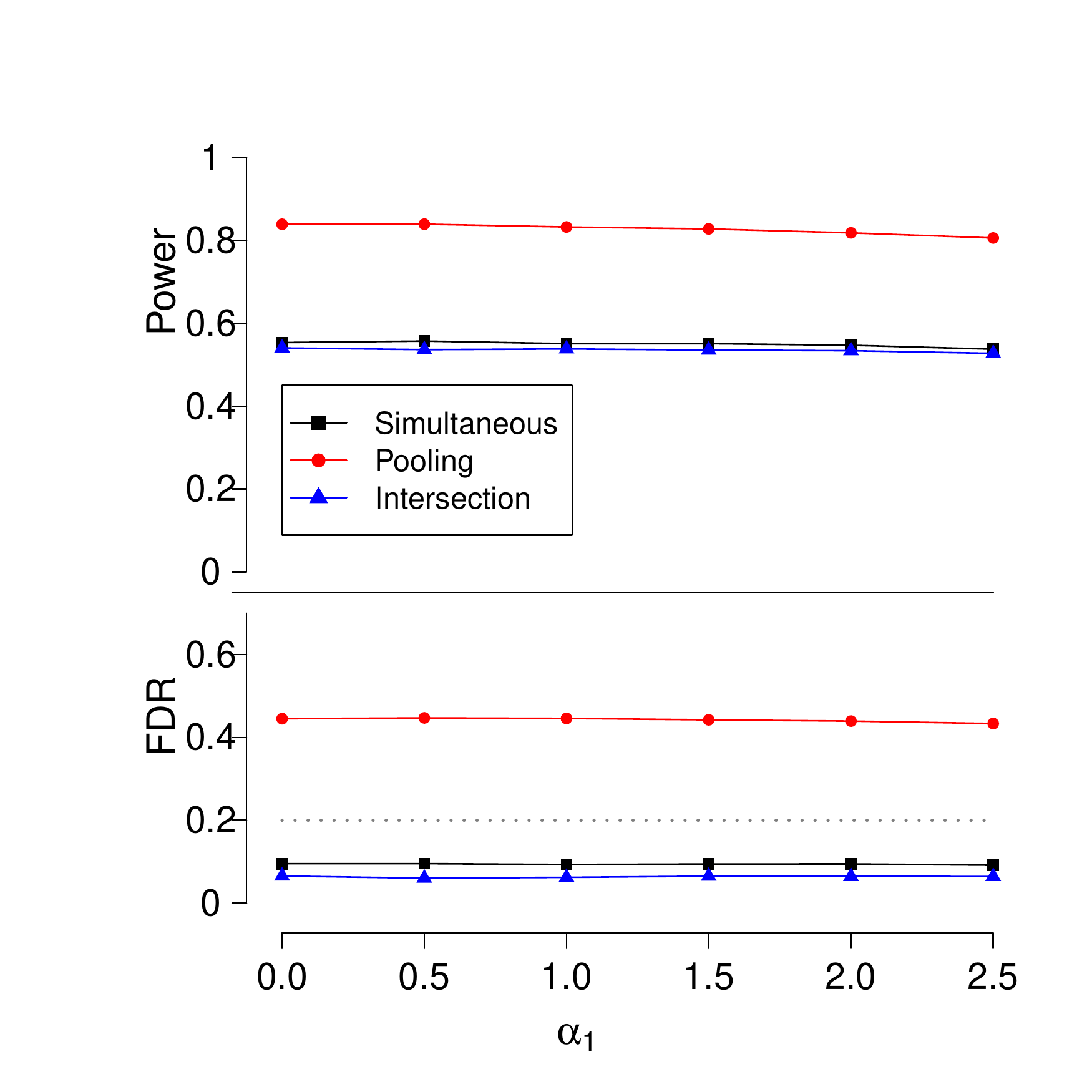}
    \caption{The power and the FDR for simulations with Setting 2 (binary  $K=2$) and Scenario 2 data. Column 1 includes the settings with $s_1=s_2=0$, column 2 includes the settings with $s_1=s_2\neq 0$, column 3 includes the settings with $s_1\neq 0, s_2=0$. Row 1 shows the experiments varying $s_0, s_1, s_2$, row 2 shows experiments varying $\rho_1, \rho_2$, row 3 shows experiments varying $\alpha_1, \alpha_2$.}
    \label{fig:bin_samesig=0}
\end{figure}

For the binary settings, we see similar results as the continuous setting. When there are only mutual signals present ($s_1=s_2=0$), all three methods control the FDR and the \textit{pooling} method has the best power, and our proposed \textit{simultaneous} method has better power than the \textit{intersection} method. When there exist non-mutual signals in one group ($s_1=0, s_2\neq 0$), the \textit{pooling} method fails to control the FDR, the \textit{intersection} method still controls the FDR, but it has lower power than the proposed \textit{simultaneous} method. When non-mutual signals are present in both groups ($s_1=s_2\neq 0$), we observe that with $s_1=s_2$ increases, the \textit{intersection} method does not control the FDR. In this case, only the \textit{simultaneous} method successfully controls the FDR. The power loss of the \textit{simultaneous} method compared with the \textit{pooling} method is larger than the continuous settings but is still acceptable. The performance does not change much with varying $\rho_1,\rho_2$ or $\alpha_1, \alpha_2$.

\subsubsection*{C.3.3 Additional simulation results for mixed outcomes with $K=2$\\}
In Figures \ref{fig:mix_samesig=1} and \ref{fig:mix_samesig=0} we show results for Setting 3 (mixed) with Scenario 1 (same signal strengths) and Scenario 2 (different signal strengths).

\begin{figure}[!p]
    \centering
    \includegraphics[scale=0.3]{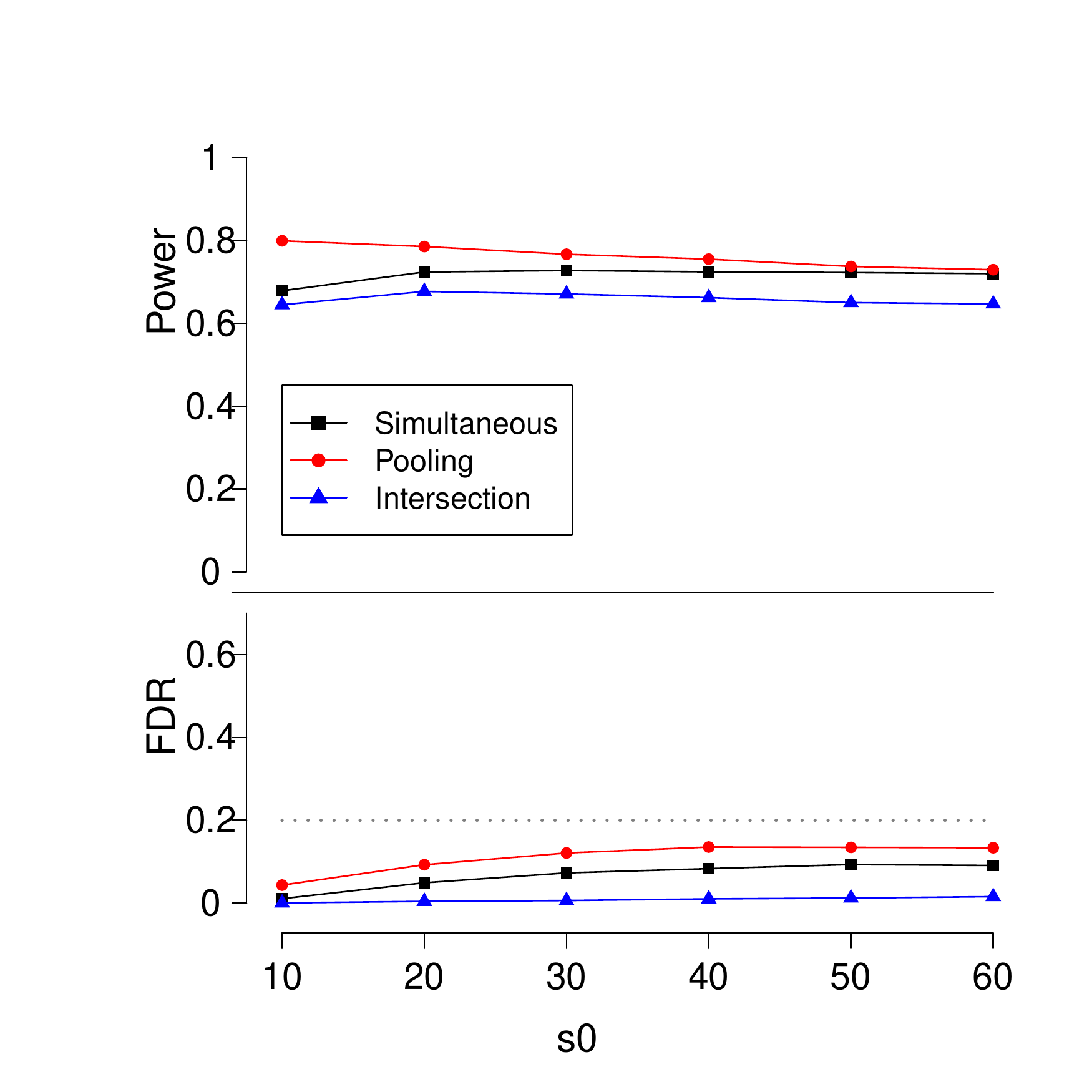}
    \includegraphics[scale=0.3]{Mixed_s1_s2.pdf}
    \includegraphics[scale=0.3]{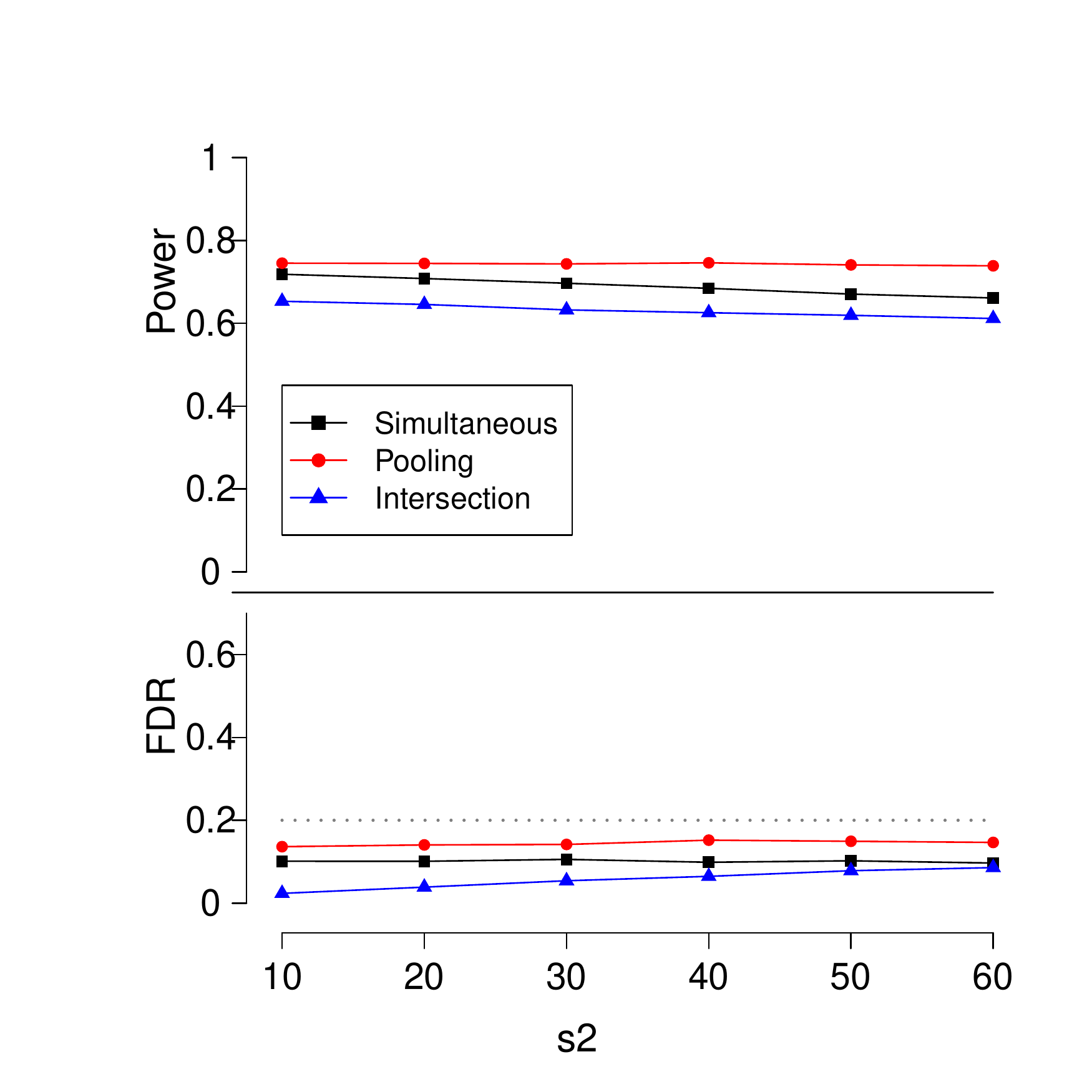}
     \includegraphics[scale=0.3]{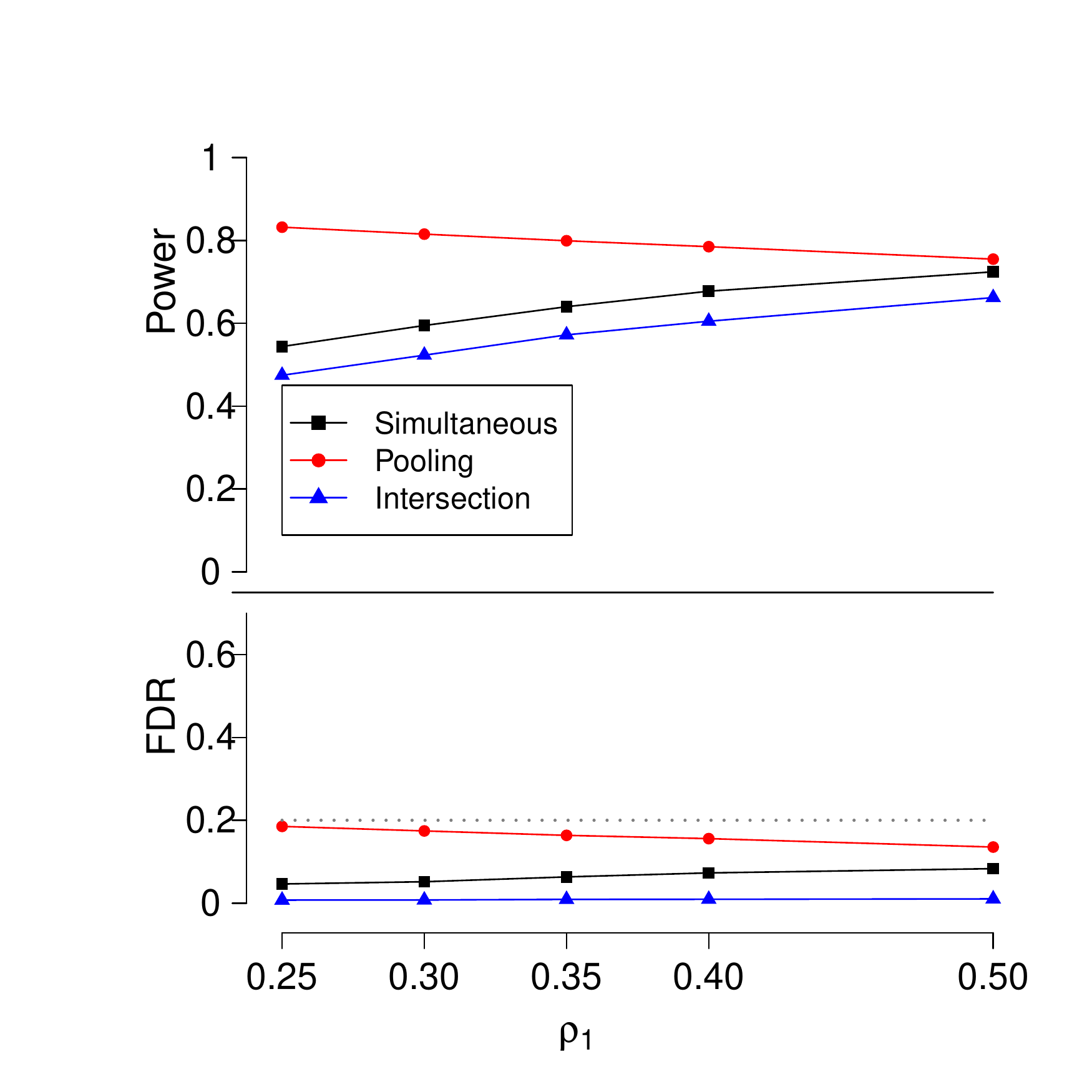}
    \includegraphics[scale=0.3]{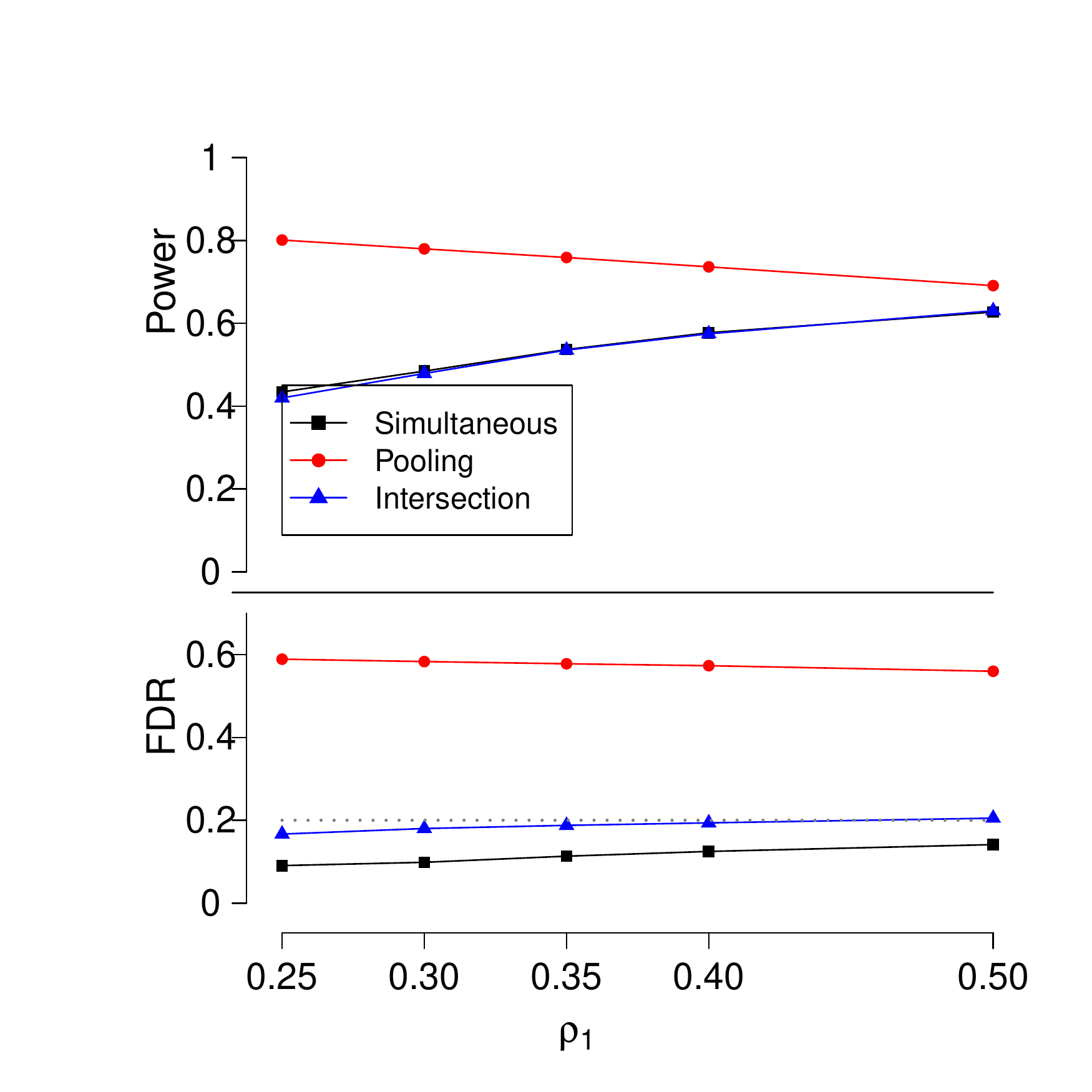}
    \includegraphics[scale=0.3]{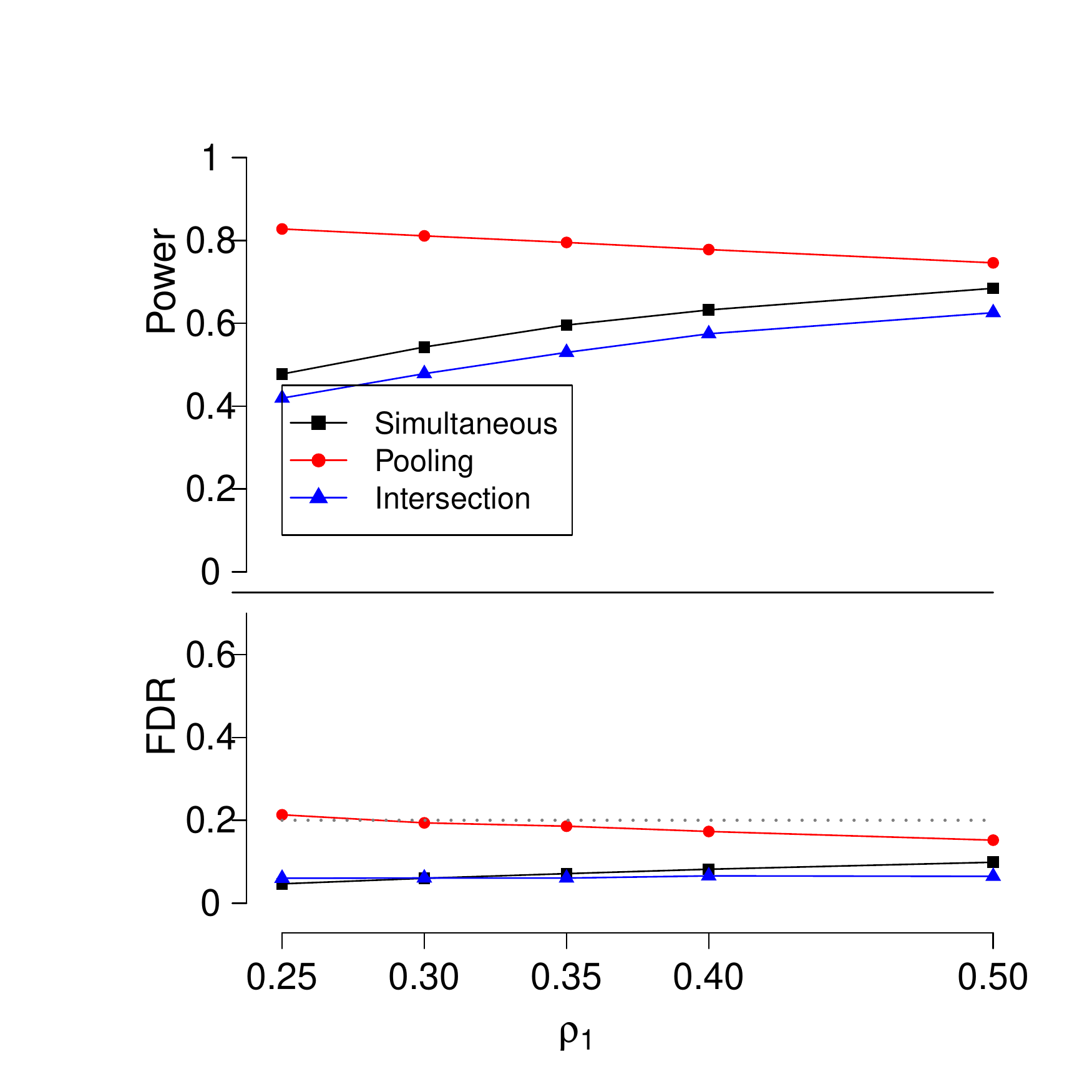}
    \includegraphics[scale=0.3]{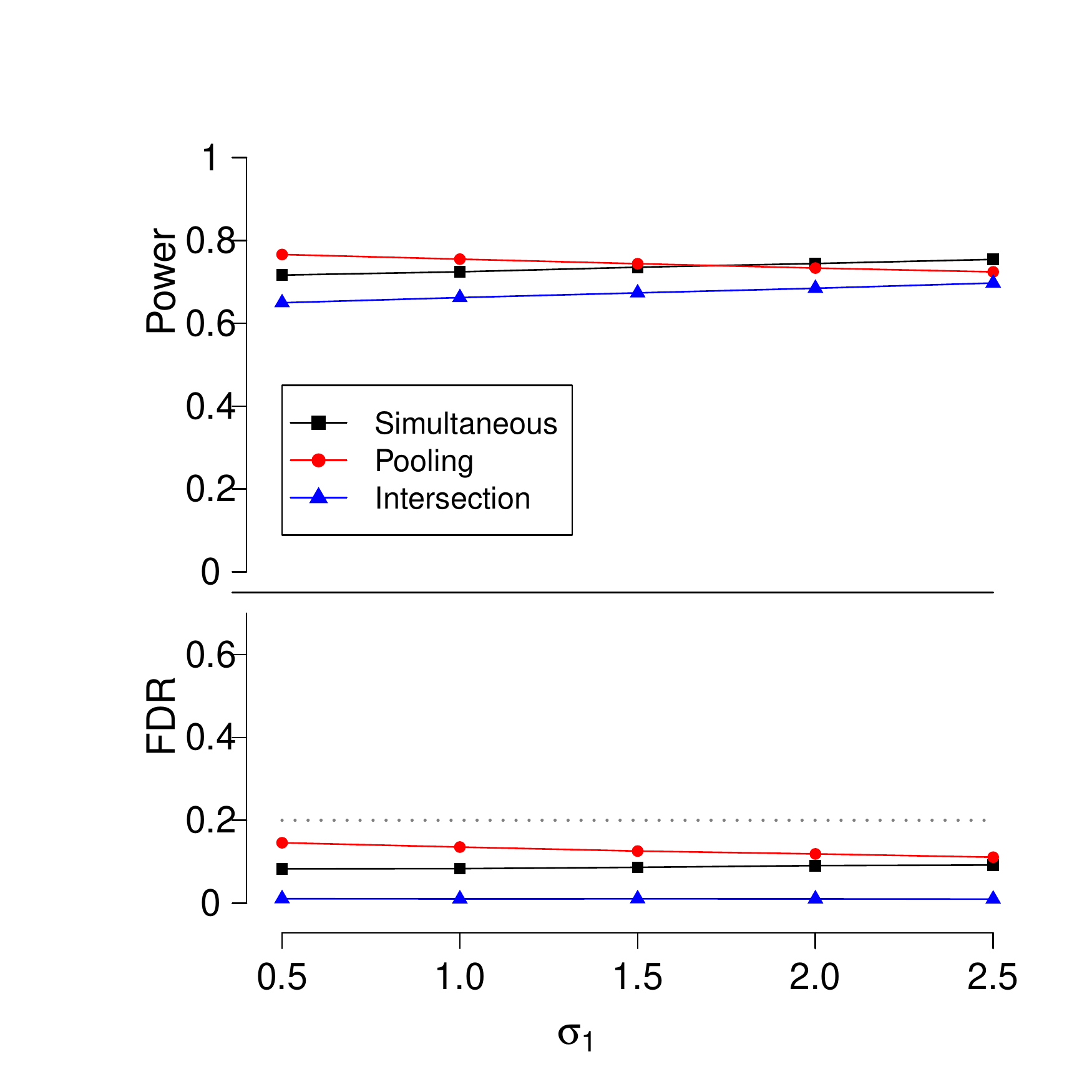}
    \includegraphics[scale=0.3]{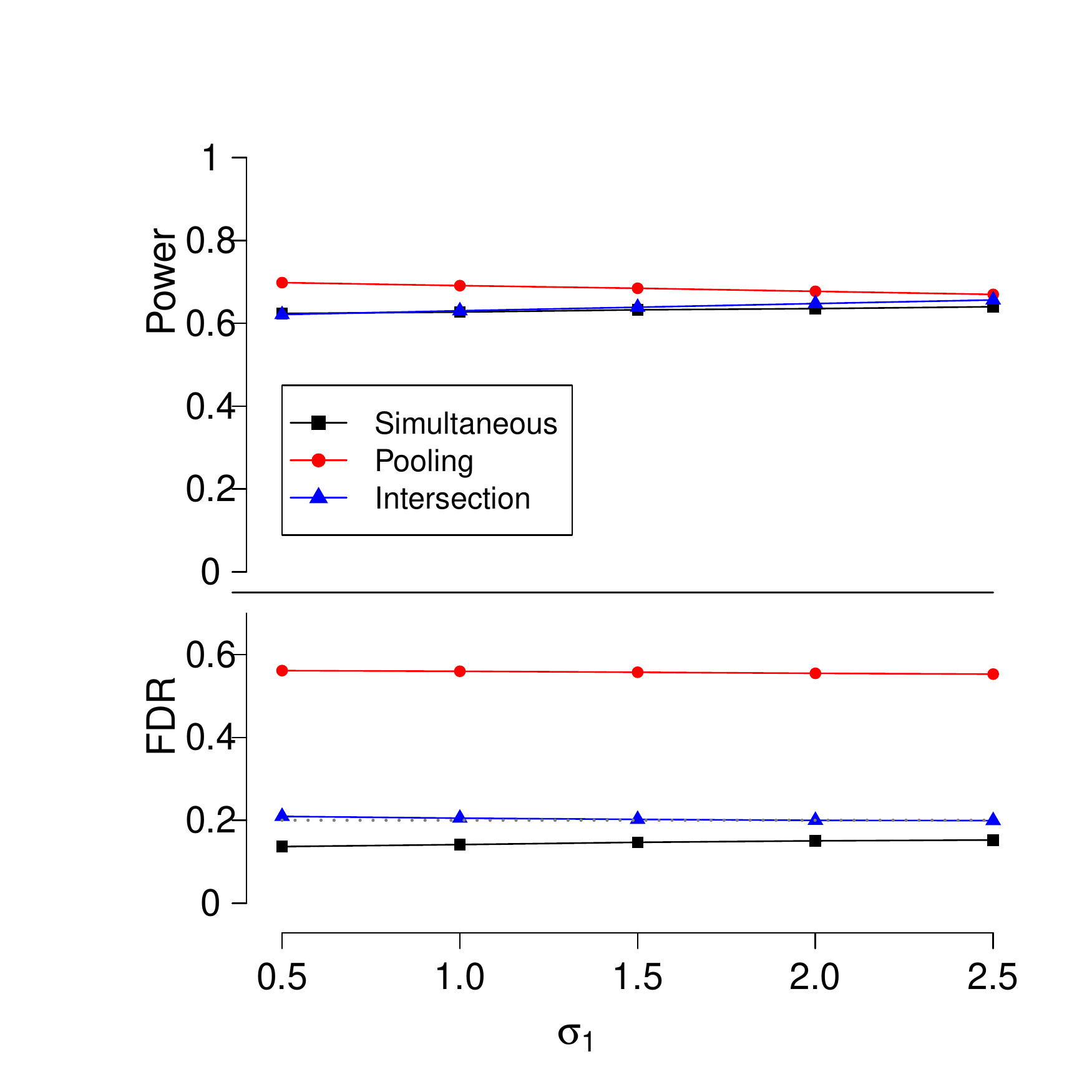}
    \includegraphics[scale=0.3]{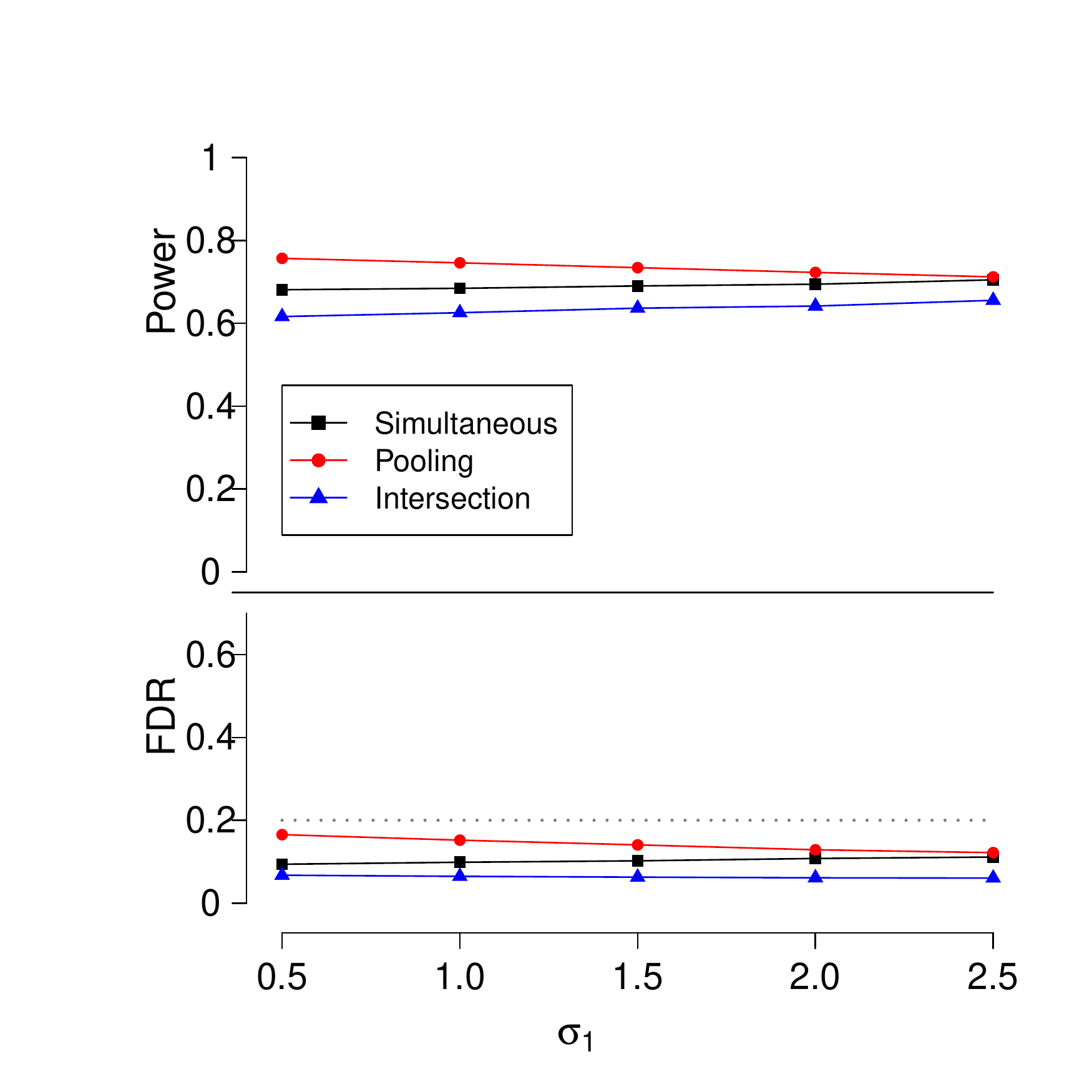}
    \caption{The power and the FDR for simulations with Setting 3 (mixed $K=2$) and Scenario 1 data. Column 1 includes the settings with $s_1=s_2=0$, column 2 includes the settings with $s_1=s_2\neq 0$, column 3 includes the settings with $s_1\neq 0, s_2=0$. Row 1 shows the experiments varying $s_0, s_1, s_2$, row 2 shows experiments varying $\rho_1, \rho_2$, row 3 shows experiments varying $\sigma_1, \sigma_2$.}
    \label{fig:mix_samesig=1}
\end{figure}

\begin{figure}[!p]
    \centering
    \includegraphics[scale=0.3]{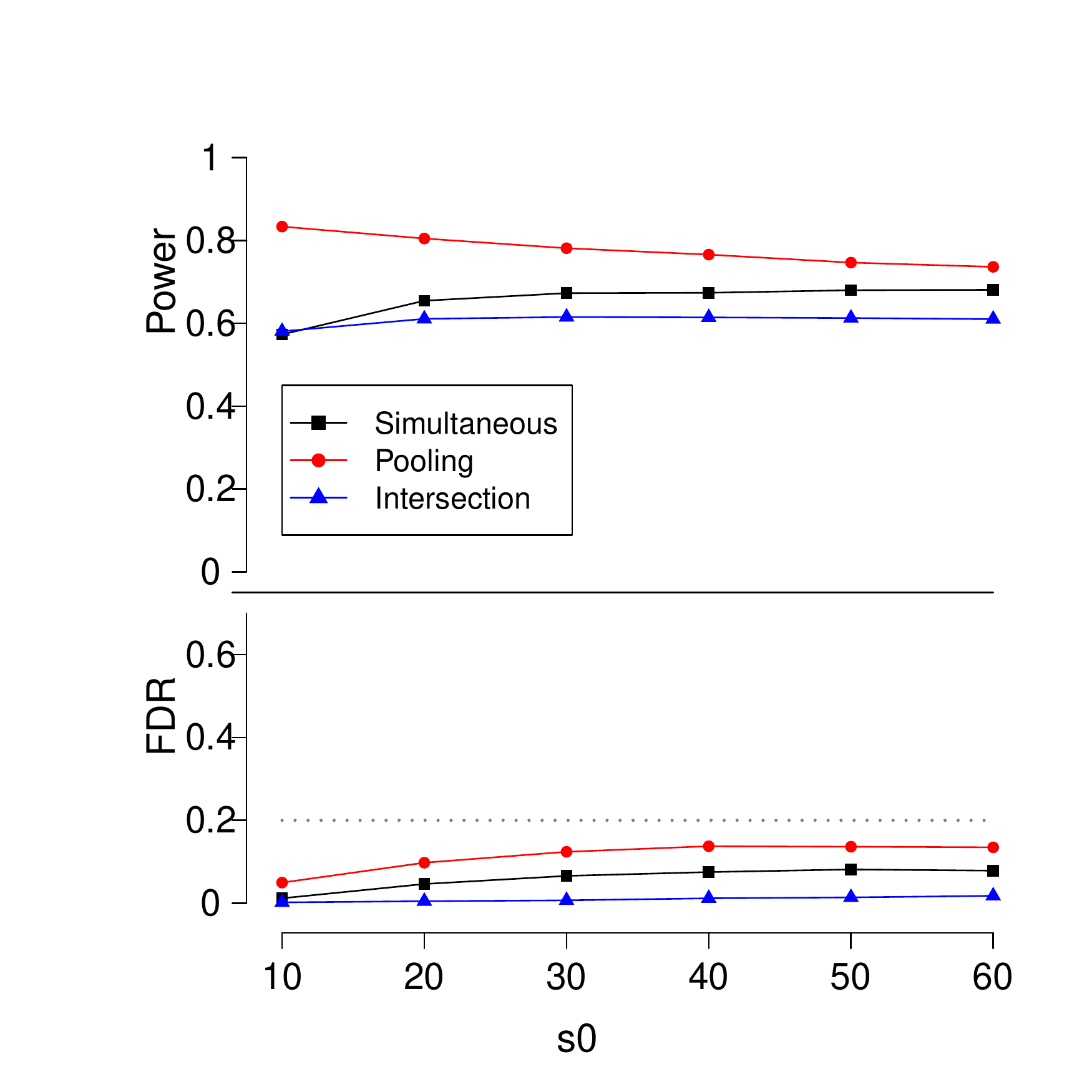}
    \includegraphics[scale=0.3]{Mixed_s1_s2samesig=0.pdf}
    \includegraphics[scale=0.3]{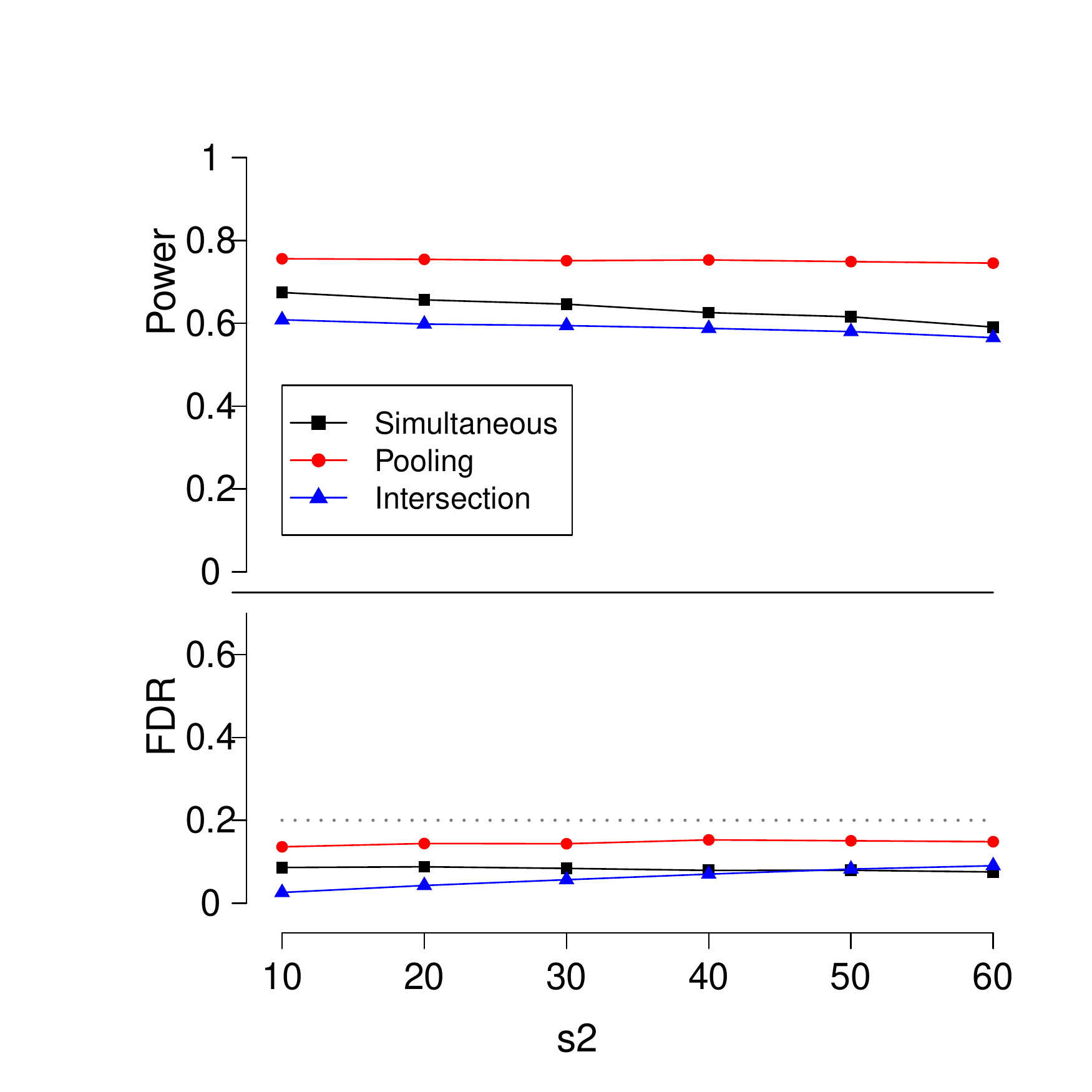}
     \includegraphics[scale=0.3]{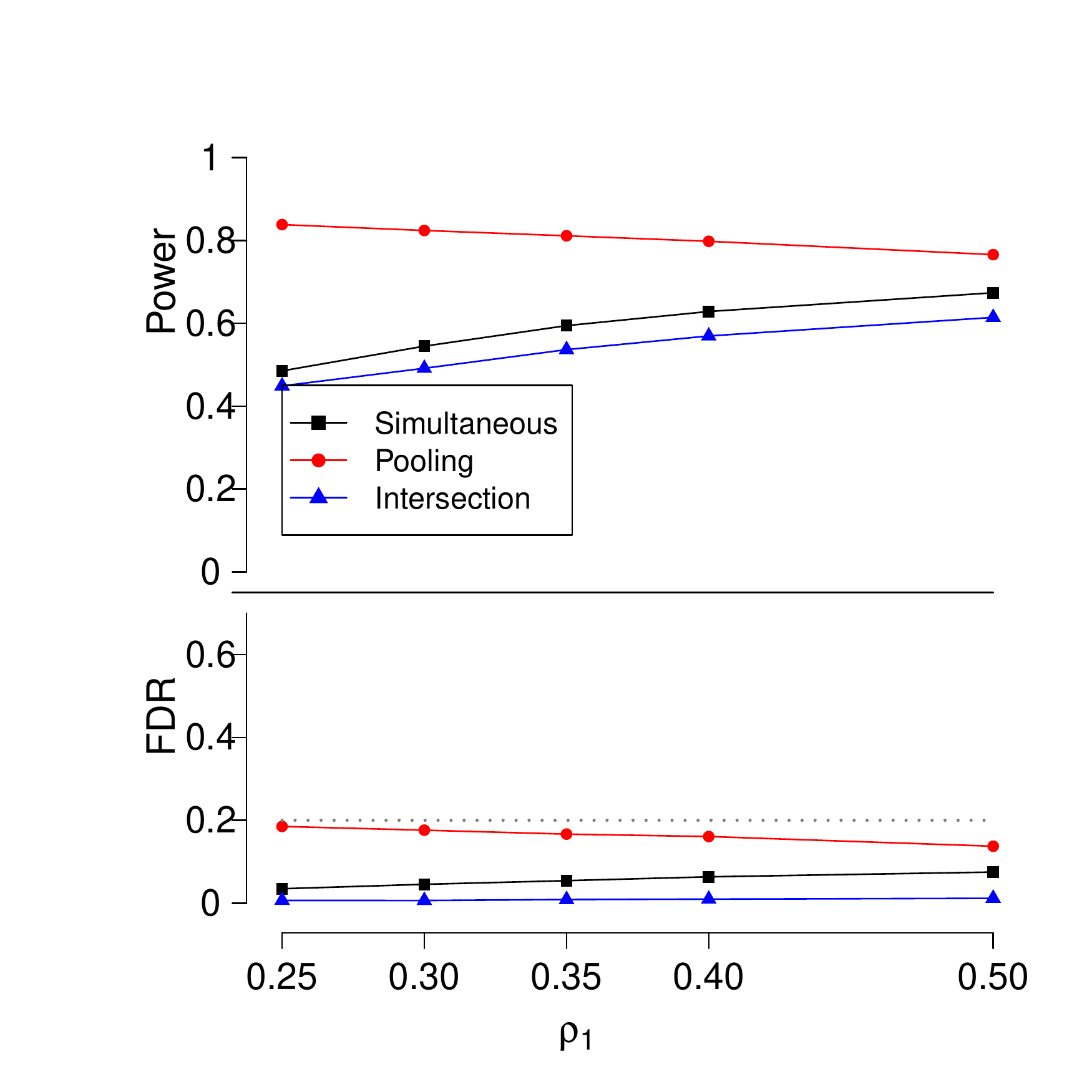}
    \includegraphics[scale=0.3]{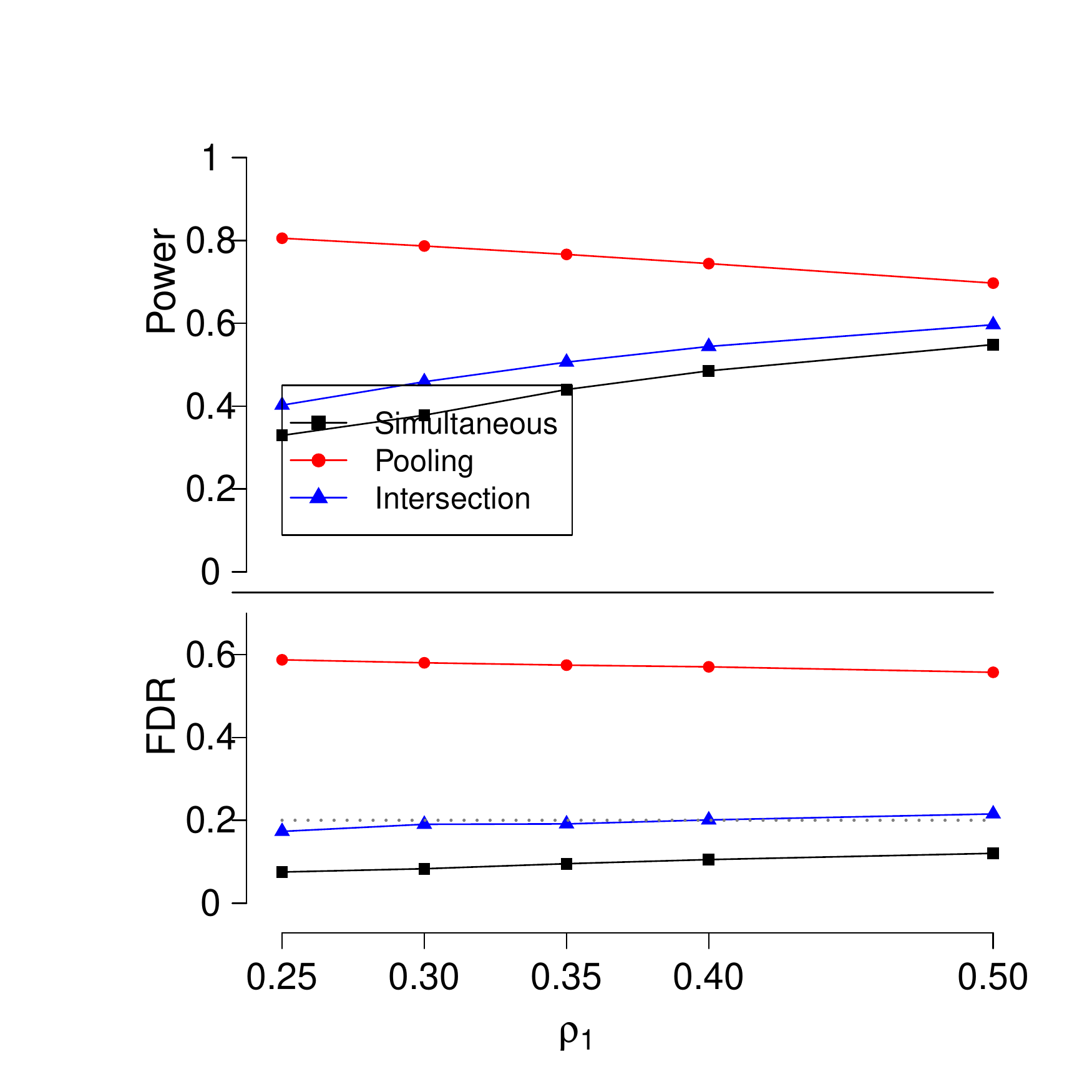}
    \includegraphics[scale=0.3]{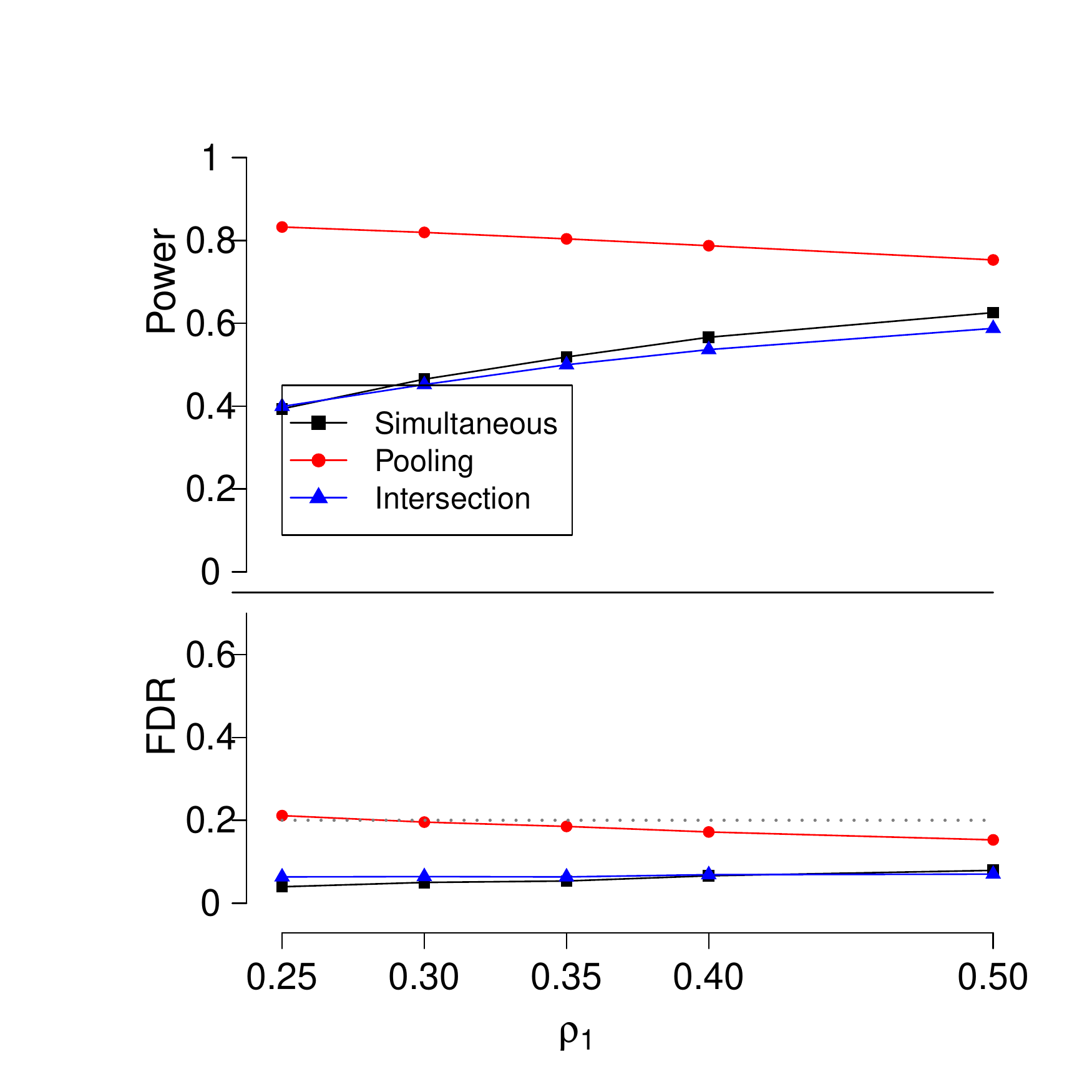}
    \includegraphics[scale=0.3]{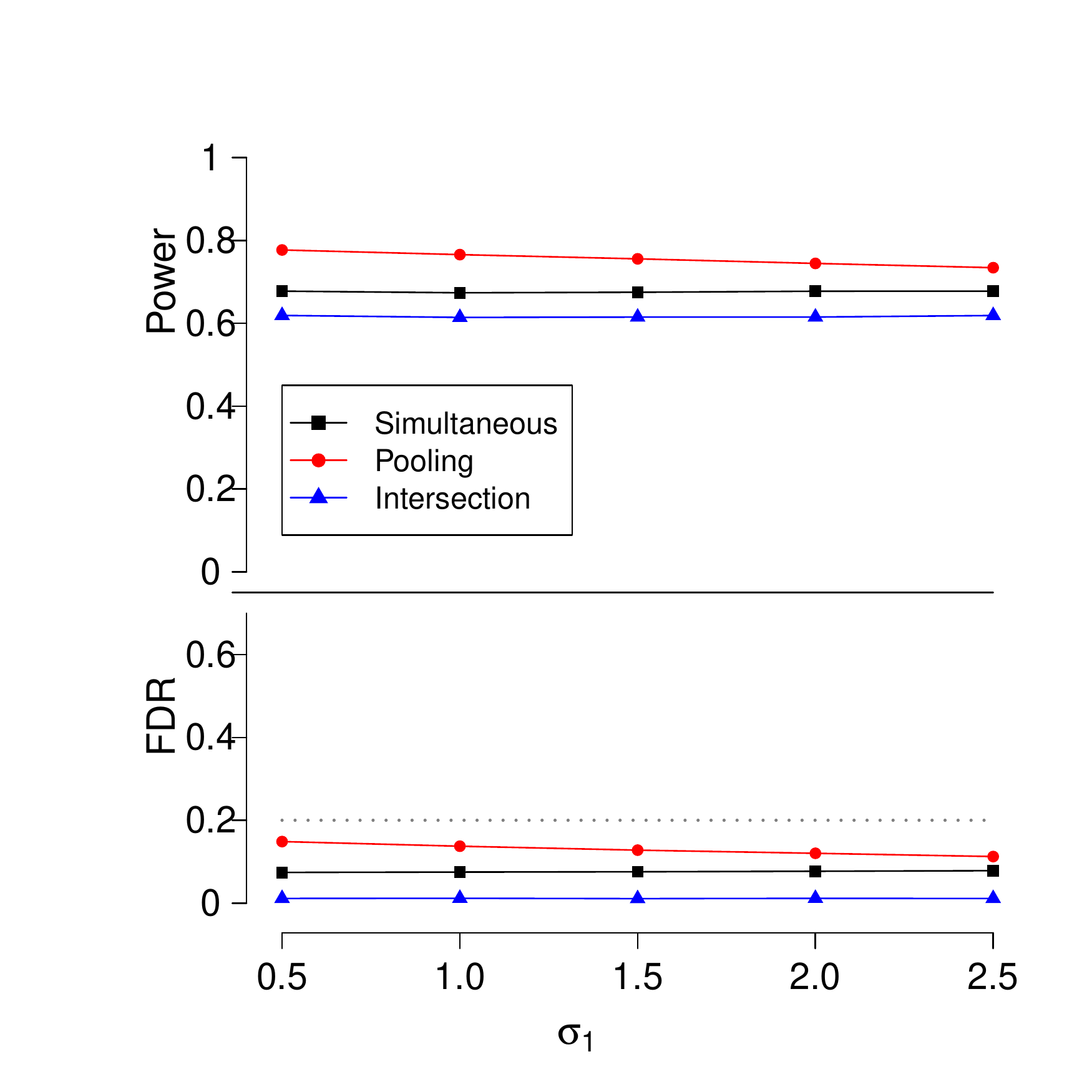}
    \includegraphics[scale=0.3]{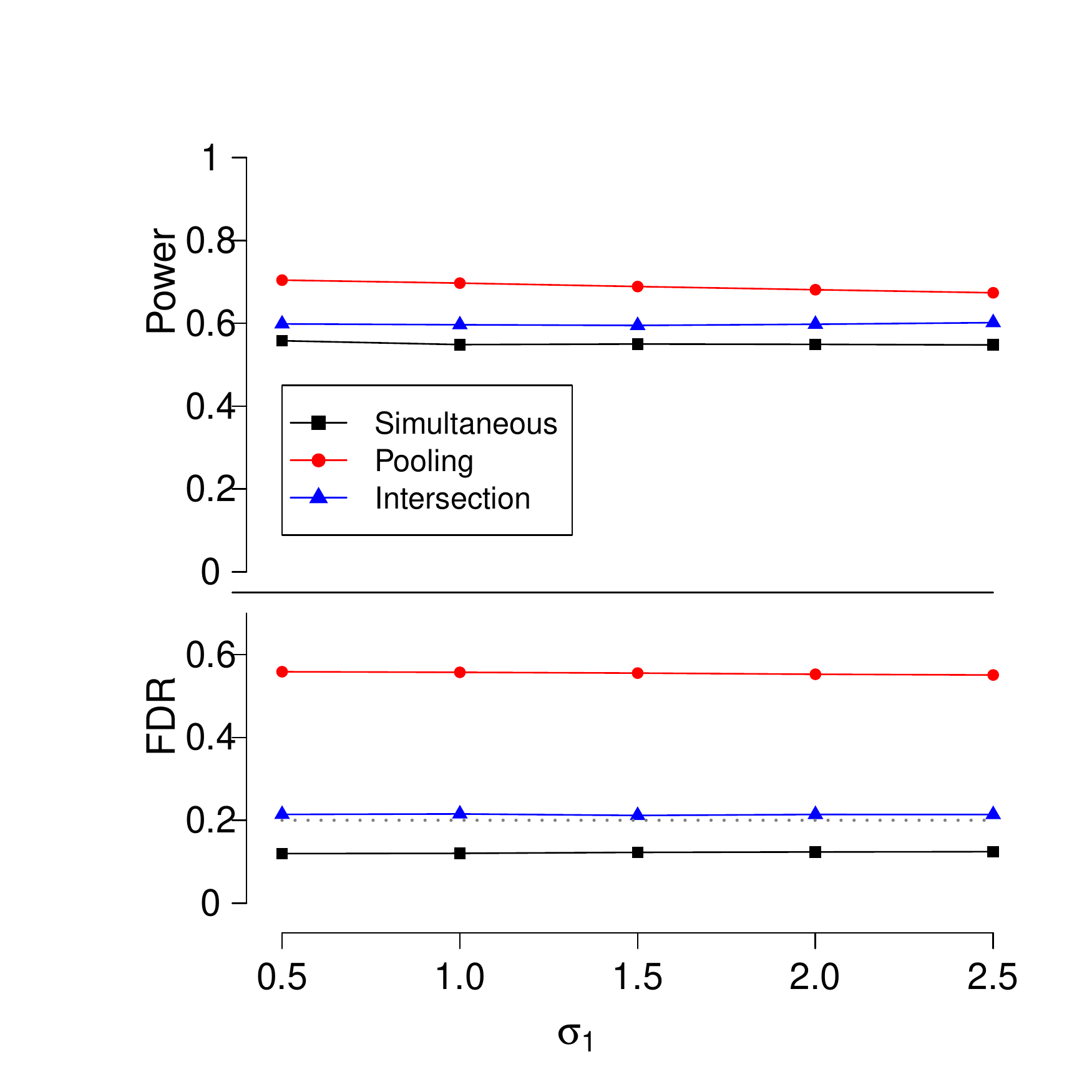}
    \includegraphics[scale=0.3]{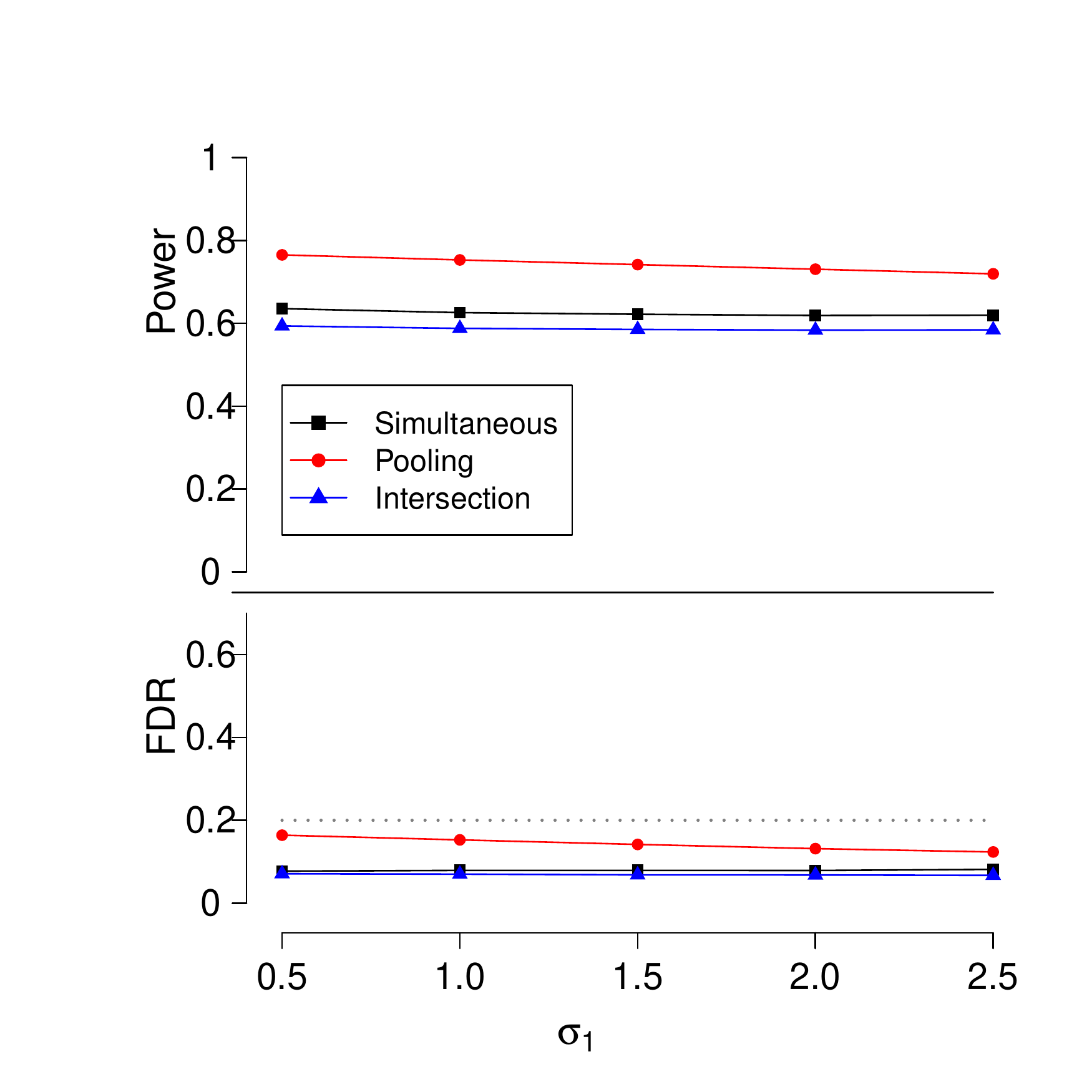}
    \caption{The power and the FDR for simulations with Setting 3 (mixed $K=2$) and Scenario 2 data. Column 1 includes the settings with $s_1=s_2=0$, column 2 includes the settings with $s_1=s_2\neq 0$, column 3 includes the settings with $s_1\neq 0, s_2=0$. Row 1 shows the experiments varying $s_0, s_1, s_2$, row 2 shows experiments varying $\rho_1, \rho_2$, row 3 shows experiments varying $\sigma_1, \sigma_2$.}
    \label{fig:mix_samesig=0}
\end{figure}

For the mixed setting, the results are similar to the continuous setting. The \textit{pooling} method can only control the FDR when $s_1=s_2=0$. The \textit{intersection} method can control the FDR when $s_1$ and $s_2$ are small. However, it does not control the FDR when $s_1$ and $s_2$ become large ($s_1=s_2 \geq 50$). The power of the \textit{simultaneous} method is slightly better than the \textit{intersection} method when $s_1=s_2$.

\subsubsection*{C.3.4 Additional simulation results for continuous outcomes with $K=3$\\}
In this section, we present simulation results for the $K=3$ cases when the outcomes are continuous.

From Figure \ref{fig:K=3}, we can see that when the only signals are mutual signals ($s_1=s_2=s_3=s_{12}=s_{13}=s_{23}=0$), all methods control the FDR and the \textit{pooling} method has the highest power as expected. However, when there are signals that only occur within some of the samples, the \textit{pooling} method fails and has very high FDR levels. The \textit{intersection} method works fine when the false signals are only shown in one of the three features, but when the false signals are shown in two of the three features, the \textit{intersection} method cannot control FDR. The \textit{simultaneous} method can always control FDR as expected and its power is similar to the \textit{intersection} method.

\begin{figure}[!p]
    \centering
    \includegraphics[scale=0.4]{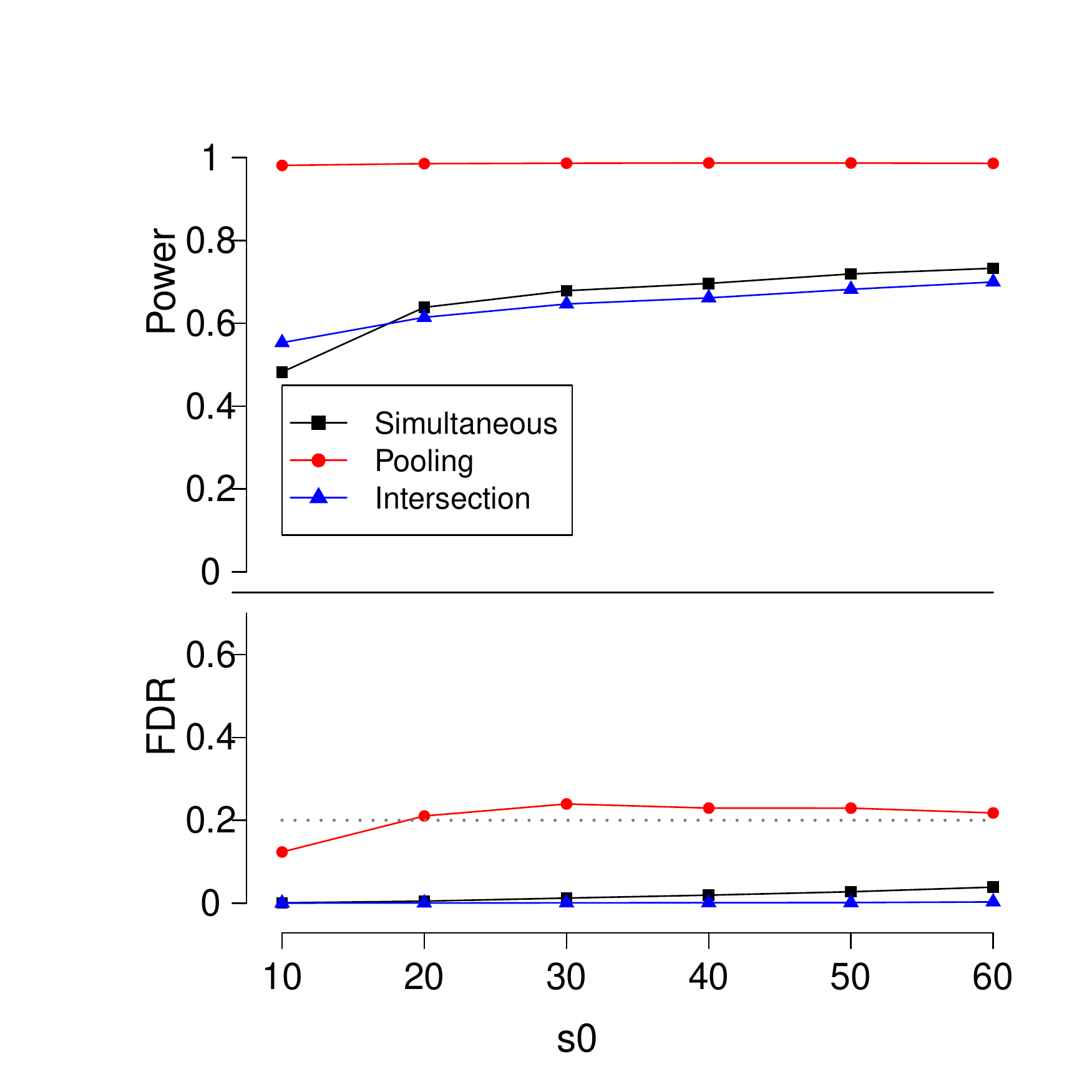}
    \includegraphics[scale=0.4]{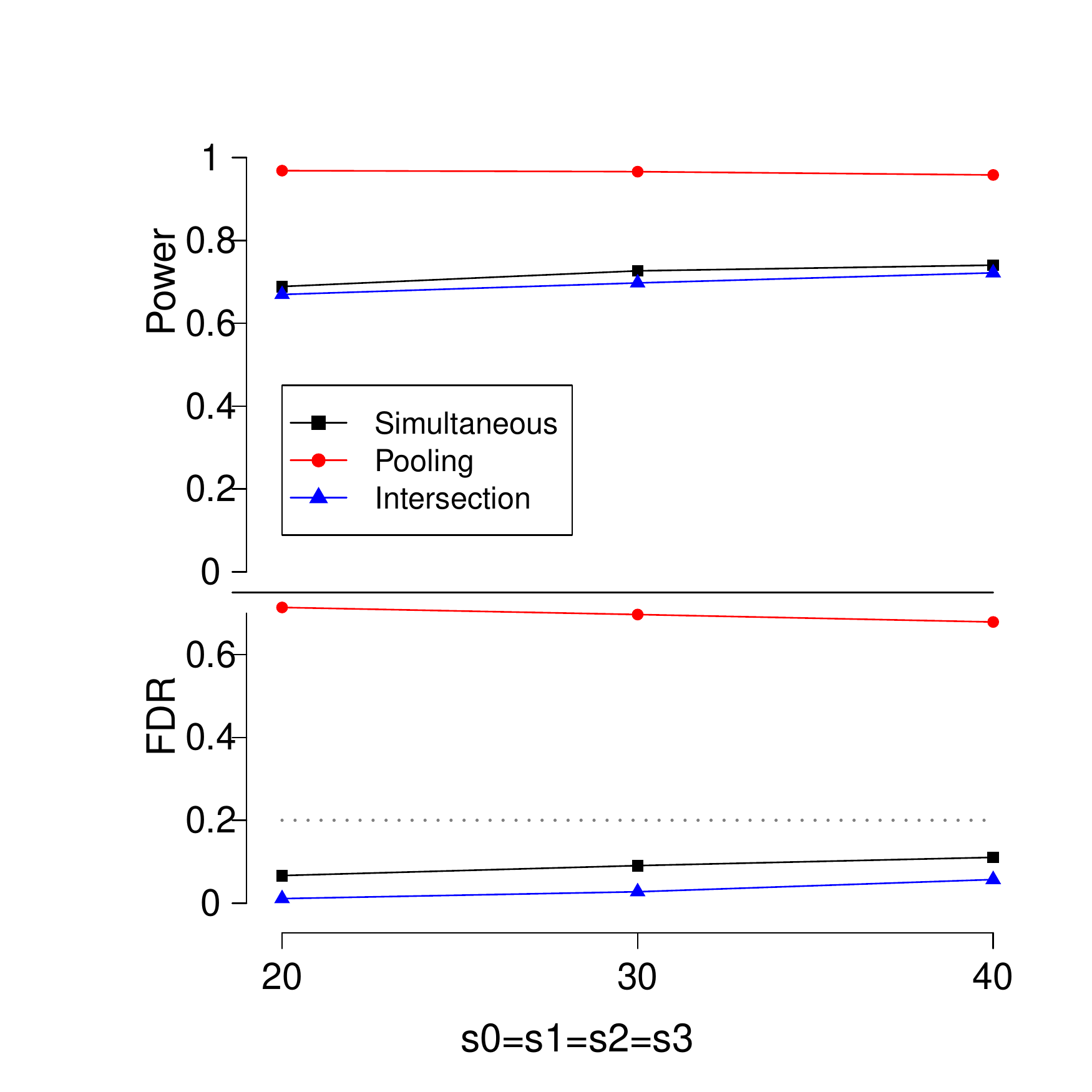}
    \includegraphics[scale=0.4]{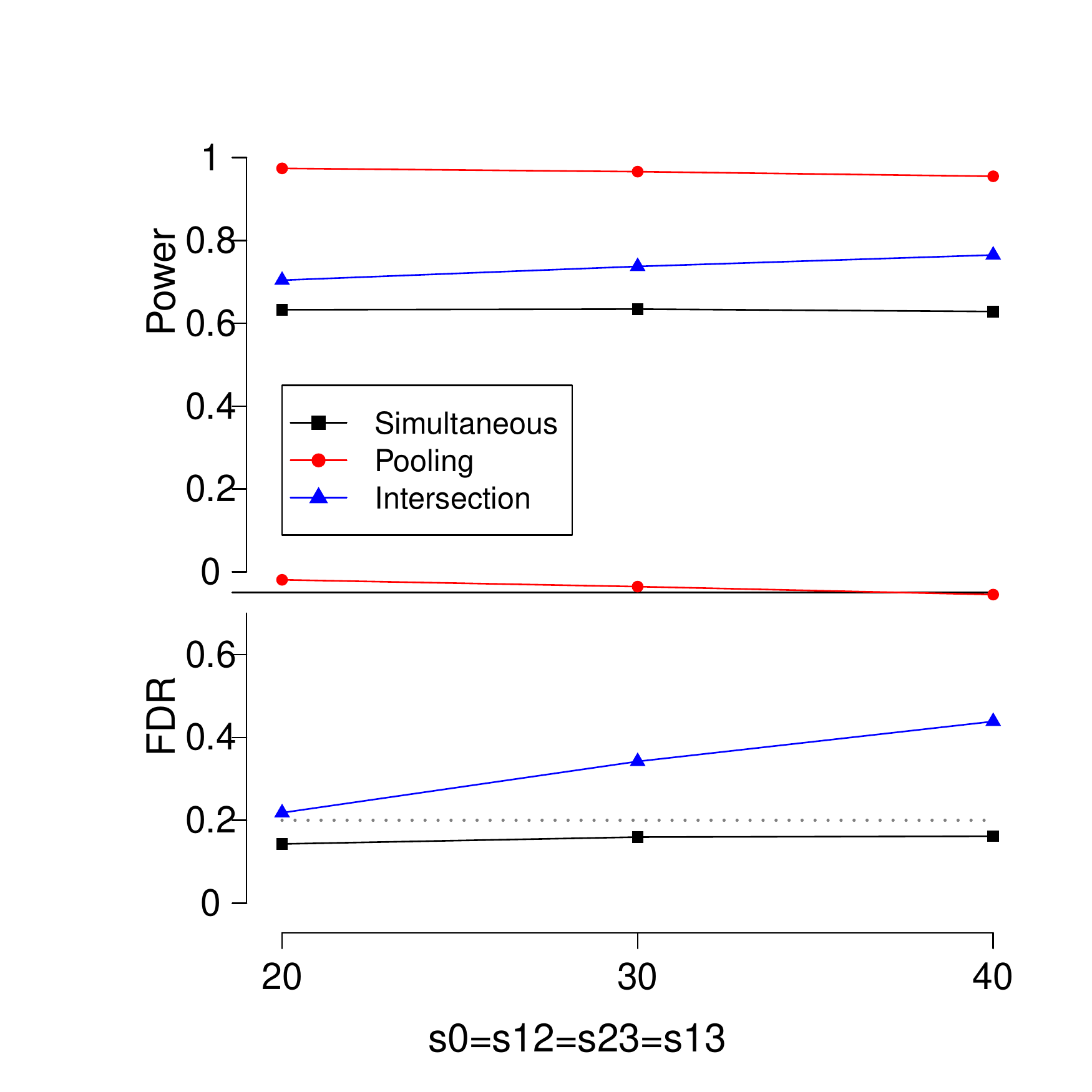}
     \includegraphics[scale=0.4]{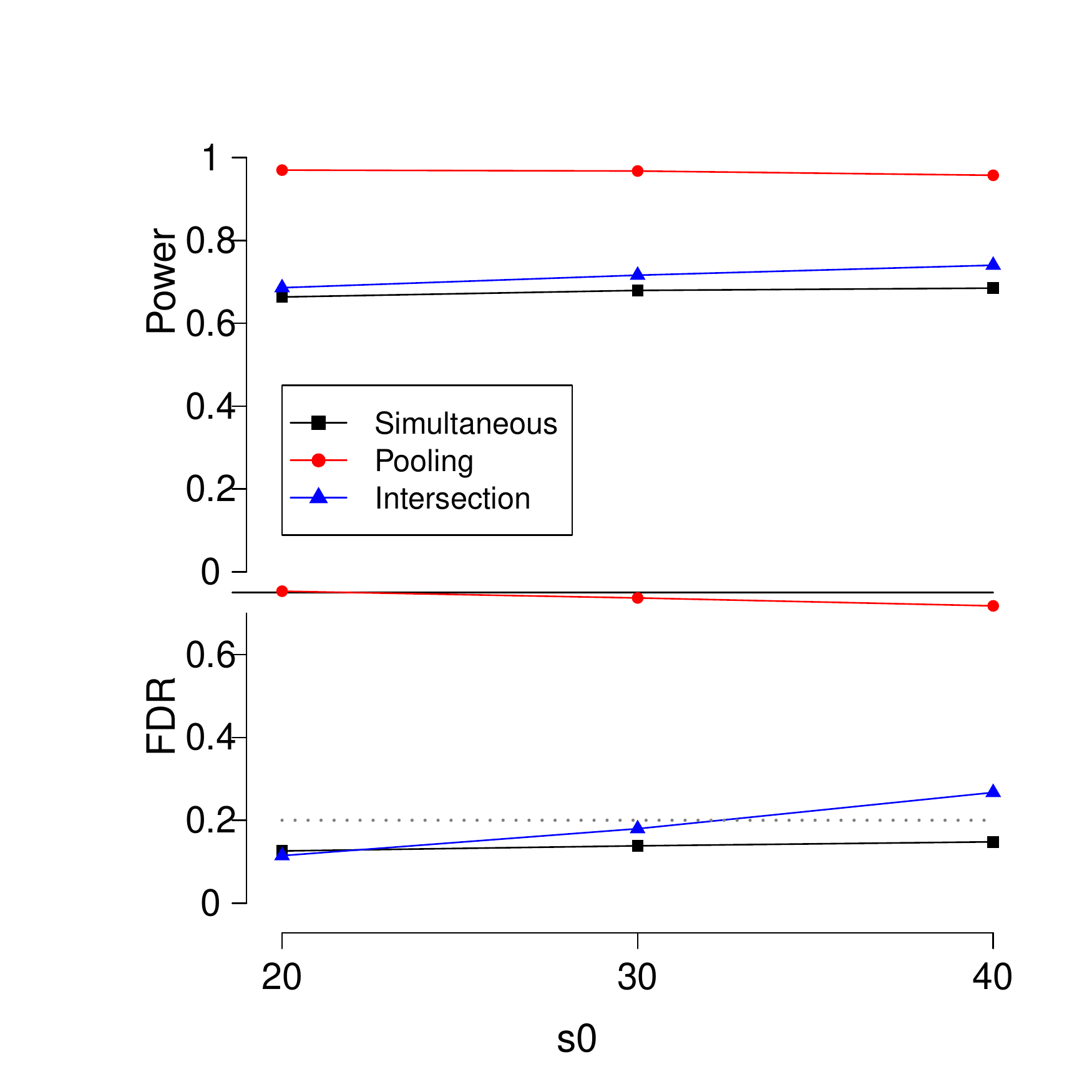}
    \caption{The power and the FDR for simulations with continuous data ($K=3$). Top left shows experiments varying $s_0$ while fix $s_1=s_2=s_3=s_{12}=s_{13}=s_{23}=0$, Top right shows experiments varying $s_0=s_1=s_2=s_3$ while fix $s_{12}=s_{13}=s_{23}=0$. Bottom left shows experiments varying $s_{12}=s_{13}=s_{23}=s_0$ while fix $s_1=s_2=s_3=0$. Bottom right shows experiments varying $s_{1}=s_{2}=s_{3}=s_{12}=s_{13}=s_{23}=s_0/2$.}
    \label{fig:K=3}
\end{figure}

\subsubsection*{C.3.5 Additional simulation results for power comparisons\\}
In this section, we plot the distributions of the filter statistics $W_j$ when assuming all underlying $Z_{kj}$, $k\in [K]$, and $j\in [p]$, are independent normal distributions. We assume $Z_{kj}=|Z^{\ast}_{kj}|$ where $Z^{\ast}_{kj}\sim N(0,1)$ when $j\in \mathcal{H}_k$ and $Z^{\ast}_{kj}\sim N(\mu,1)$ otherwise. We vary $\mu$ from 3 to 5 to denote weak and strong signals.

The symmetric distributions of $W_j$ in Figures \ref{fig:hist0}\ref{fig:hist1}\ref{fig:hist2}\ref{fig:hist1l}\ref{fig:hist2l} illustrate numerically that when the signals are not showing in all three datasets, the sign of $W_j$ is independent to its magnitude with $\PP{W_j>0}=1/2$ no matter how strong the signal is, which is just what we need for the FDR control theorem to work. The asymmetric distributions of $W_j$ in Figures \ref{fig:hist3}\ref{fig:hist3l} illustrate numerically that when the signals are showing in all three datasets, then we have a larger chance of observing positive $W_j$ and thus the test will have power (especially when the positive distribution of $W_j$ are far more toward right comparing to $W_j$ distribution for those features not having signals in all three data). Comparing these figures, we can see with the increase in the number or the strengths of signals in those null features that are only present in one or two groups will widen the distribution and thus lead to the lower power of the proposed method. 

\begin{figure}[!p]
    \centering
    \includegraphics[scale=0.9]{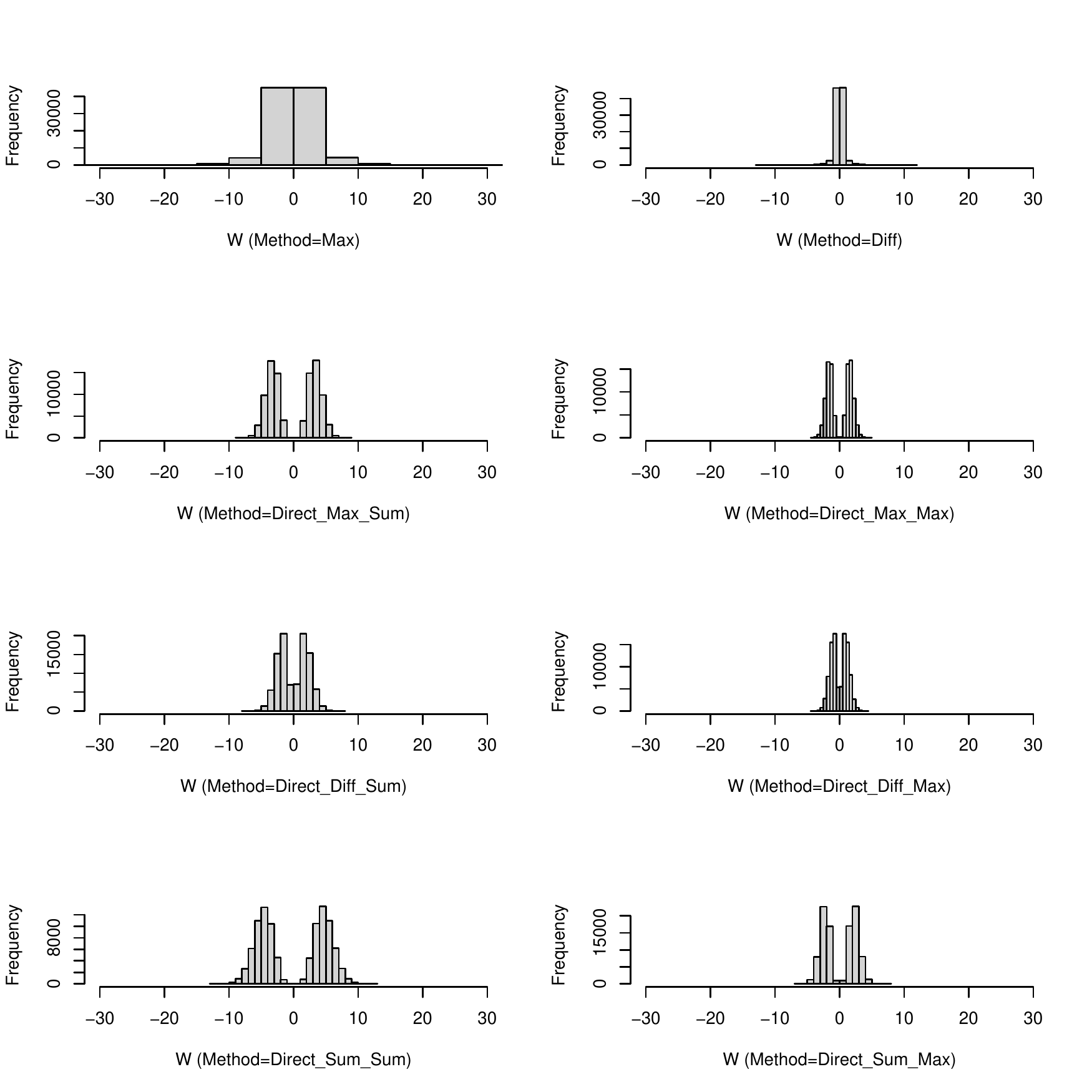}
    \caption{The distributions of the filter statistics $\W$ when the feature is not a signal in any of the three datasets.}
    \label{fig:hist0}
\end{figure}

\begin{figure}[!p]
    \centering
    \includegraphics[scale=0.9]{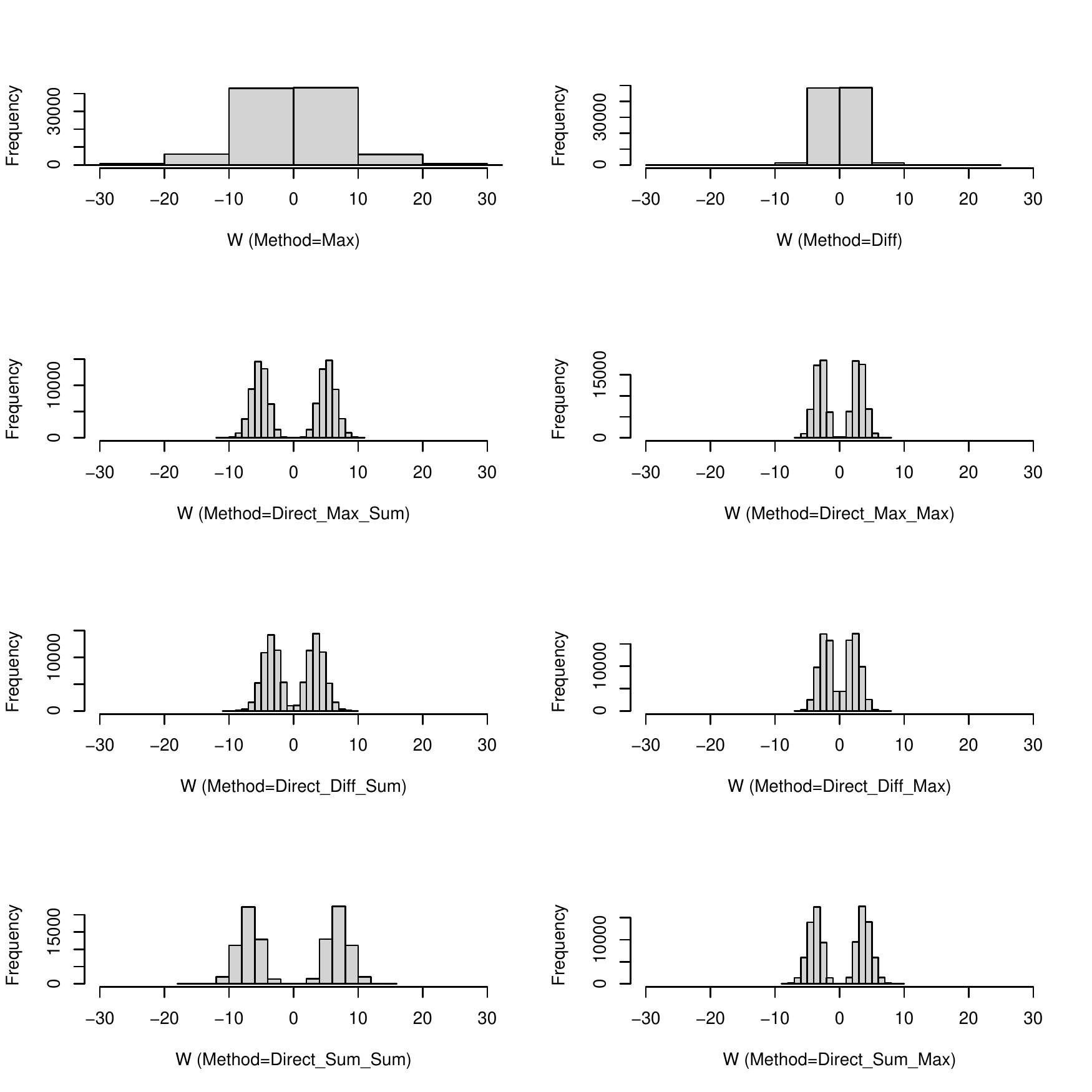}
    \caption{The distributions of the filter statistics $\W$ when the feature is a weak signal in only one of the three datasets.}
    \label{fig:hist1}
\end{figure}

\begin{figure}[!p]
    \centering
    \includegraphics[scale=0.9]{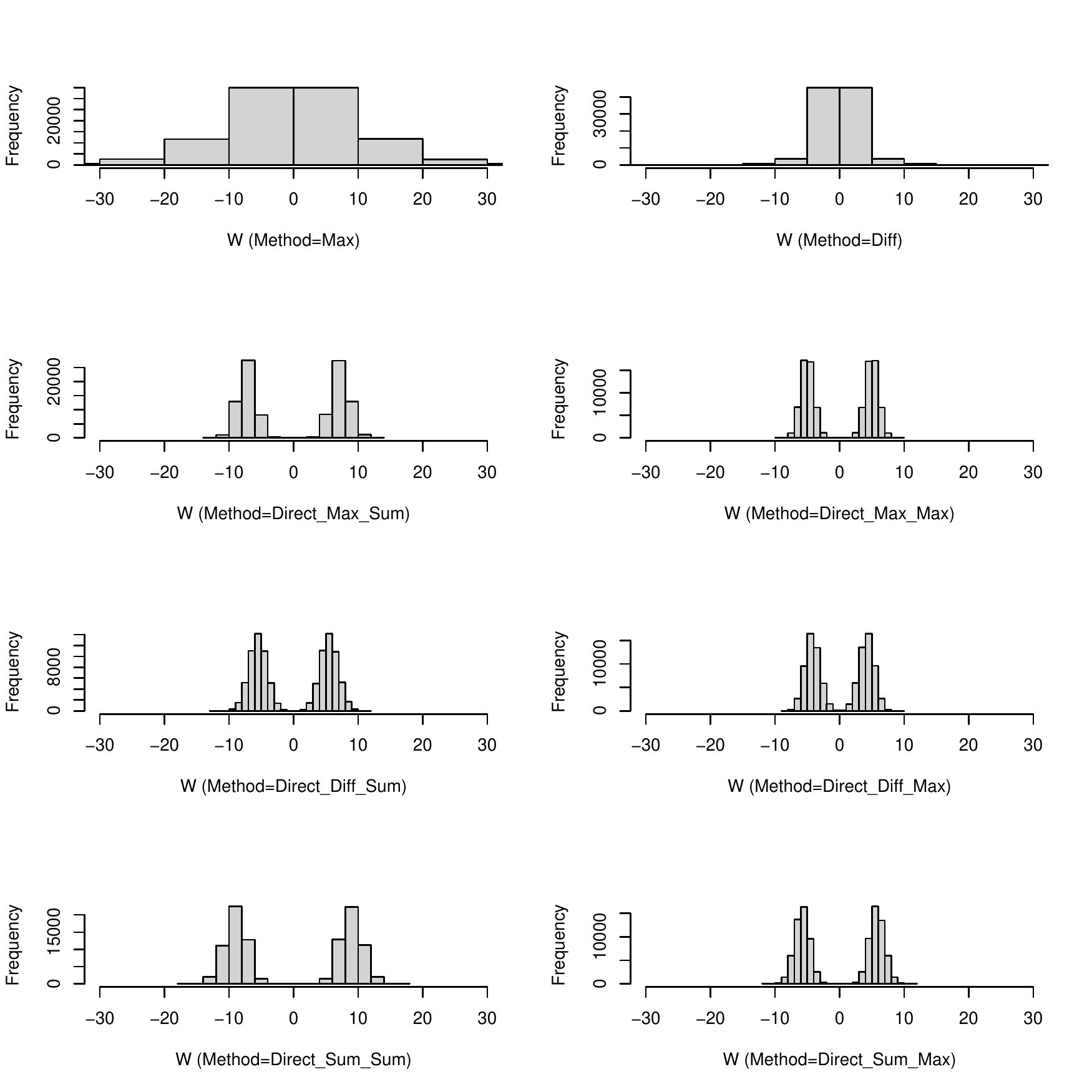}
    \caption{The distributions of the filter statistics $\W$ when the feature is a strong signal in only one of the three datasets.}
    \label{fig:hist1l}
\end{figure}

\begin{figure}[!p]
    \centering
     \includegraphics[scale=0.9]{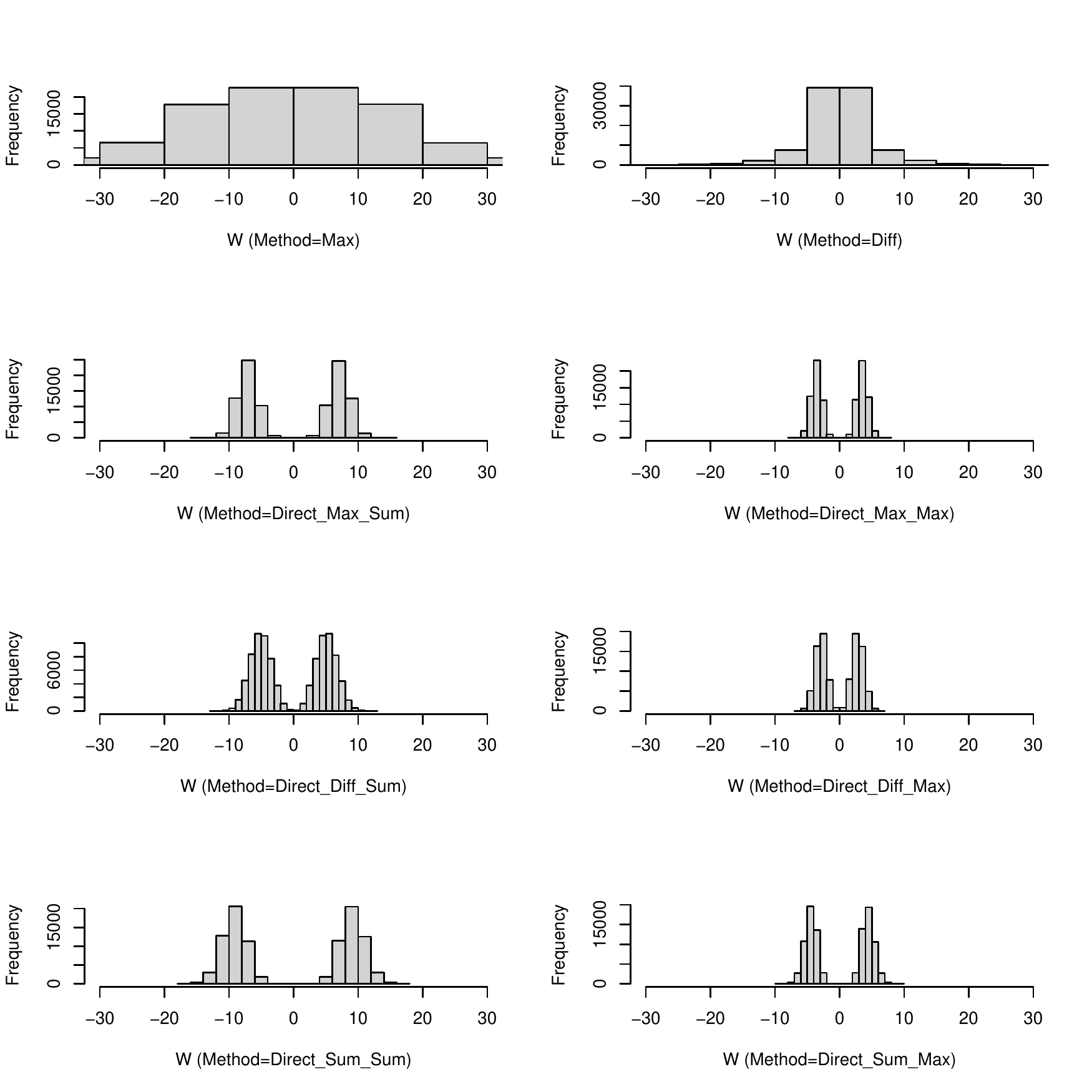}
    \caption{The distributions of the filter statistics $\W$ when the feature is a weak signal in only two of the three datasets.}
    \label{fig:hist2}
\end{figure}

\begin{figure}[!p]
    \centering
    \includegraphics[scale=0.9]{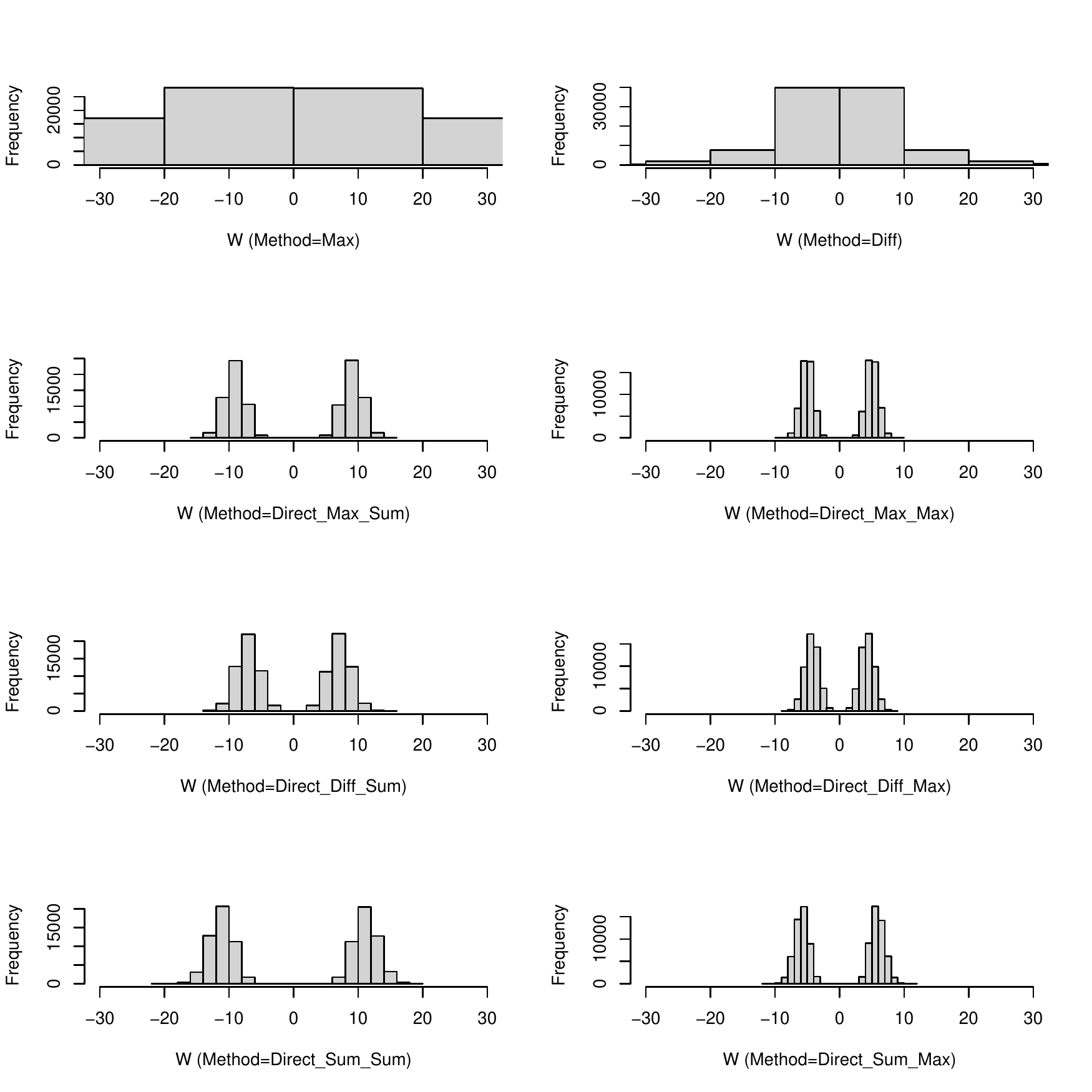}
    \caption{The distributions of the filter statistics $\W$ when the feature is a strong signal in only two of the three datasets.}
    \label{fig:hist2l}
\end{figure}

\begin{figure}[!p]
    \centering
    \includegraphics[scale=0.9]{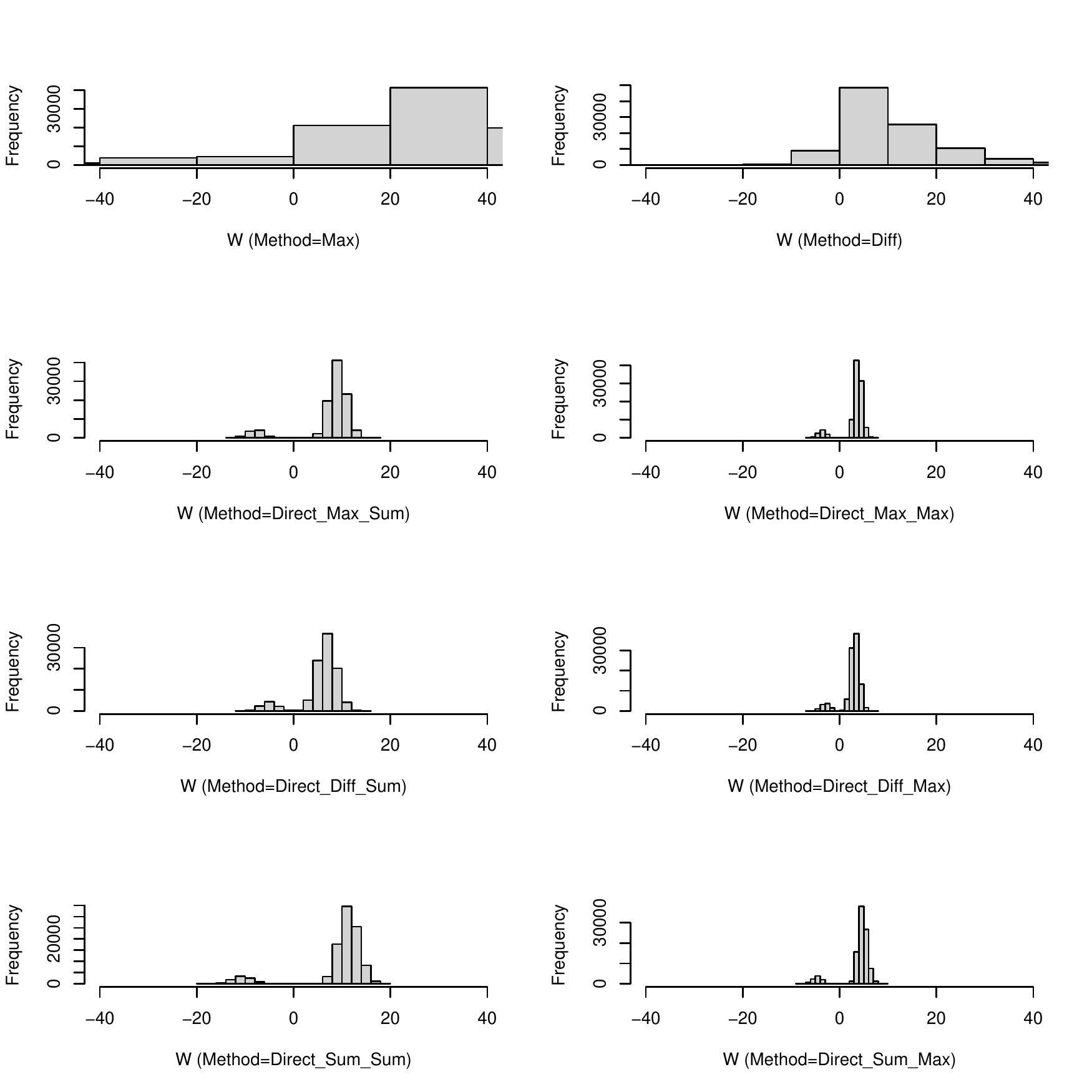}
    \caption{The distributions of the filter statistics $\W$ when the feature is a weak signal in all three datasets.}
    \label{fig:hist3}
\end{figure}

\begin{figure}[!p]
    \centering
     \includegraphics[scale=0.9]{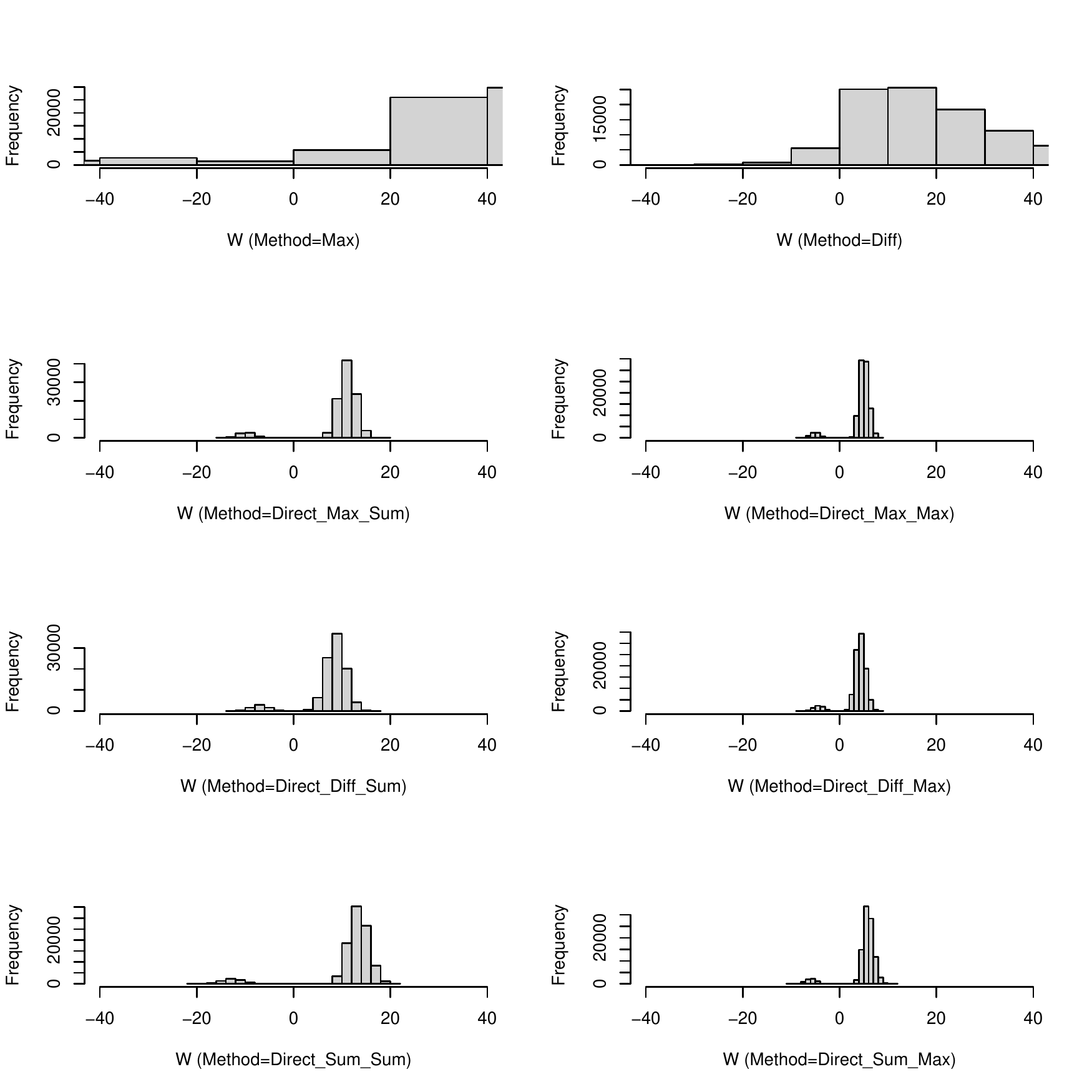}
    \caption{The distributions of the filter statistics $\W$ when the feature is a strong signal in all three datasets.}
    \label{fig:hist3l}
\end{figure}

The simulation results comparing the power of different filter statistics from the setting described in section C.1.3. are given in Table \ref{tab:res_comp}. From Table \ref{tab:res_comp} we can see that the signed max function and the difference function have the best performance among the functions we explored. 

\begin{table}[!p]
    \centering
    \begin{tabular}{|c|c|c|c|c|}
    \hline
        Scenario& $s_1=s_2$&  Method& FDR & Power\\
    \hline
         1&0&Max&0.08 &0.83\\
         1&0&Diff&0.08 &0.83\\
         1&0&Direct Max Sum&0.02 &0.82\\
         1&0&Direct Max Max&0.01 &0.82\\
         1&0&Direct Diff Sum&0.02 &0.82\\
        1&0&Direct Diff Max&0.01 &0.82\\
        1&0&Direct Sum Sum&0.01 &0.82\\
        1&0&Direct Sum Max&0.02 &0.82\\
    \hline
        1&40&Max&0.18 &0.80\\
         1&40&Diff&0.18 &0.80\\
         1&40&Direct Max Sum&0.06 &0.47\\
         1&40&Direct Max Max&0.02 &0.01\\
         1&40&Direct Diff Sum&0.05 &0.47\\
        1&40&Direct Diff Max&0.02 &0.01\\
        1&40&Direct Sum Sum&0.06 &0.48\\
        1&40&Direct Sum Max&0.02 &0.01\\
    \hline
        2&40&Max&0.15 &0.77\\
         2&40&Diff&0.15 &0.77\\
         2&40&Direct Max Sum&0.05 &0.43\\
         2&40&Direct Max Max&0.01 &0.01\\
         2&40&Direct Diff Sum&0.05 &0.42\\
        2&40&Direct Diff Max&0.01 &0.01\\
        2&40&Direct Sum Sum&0.05 &0.43\\
        2&40&Direct Sum Max&0.01 &0.01\\
    \hline
    \end{tabular}
    \vspace{0.2in}
    \caption{Empirical FDR and power comparisons among different choices of filter statistics $\W$ for $K=2$ and $q=0.2$ with the settings $s_1=s_2=0$ and $s_1=s_2=40$ with or without same signal strengths between two data sets (Scenarios 1 and 2).}
    \label{tab:res_comp}
\end{table}

The power and the FDR for different methods are shown in Figure \ref{fig:power} when varying the ratio of signal strength between mutual signals and non-mutual signals. We can see that for both the \textit{pooling} method and our proposed \textit{simultaneous} method, the power decreases a lot when the non-mutual signals are strong. But unlike the \textit{pooling} method, which has an increased FDR, our proposed \textit{simultaneous} method has a relatively stable FDR at the nominal level.
\begin{figure}[!p]
    \centering
    \includegraphics{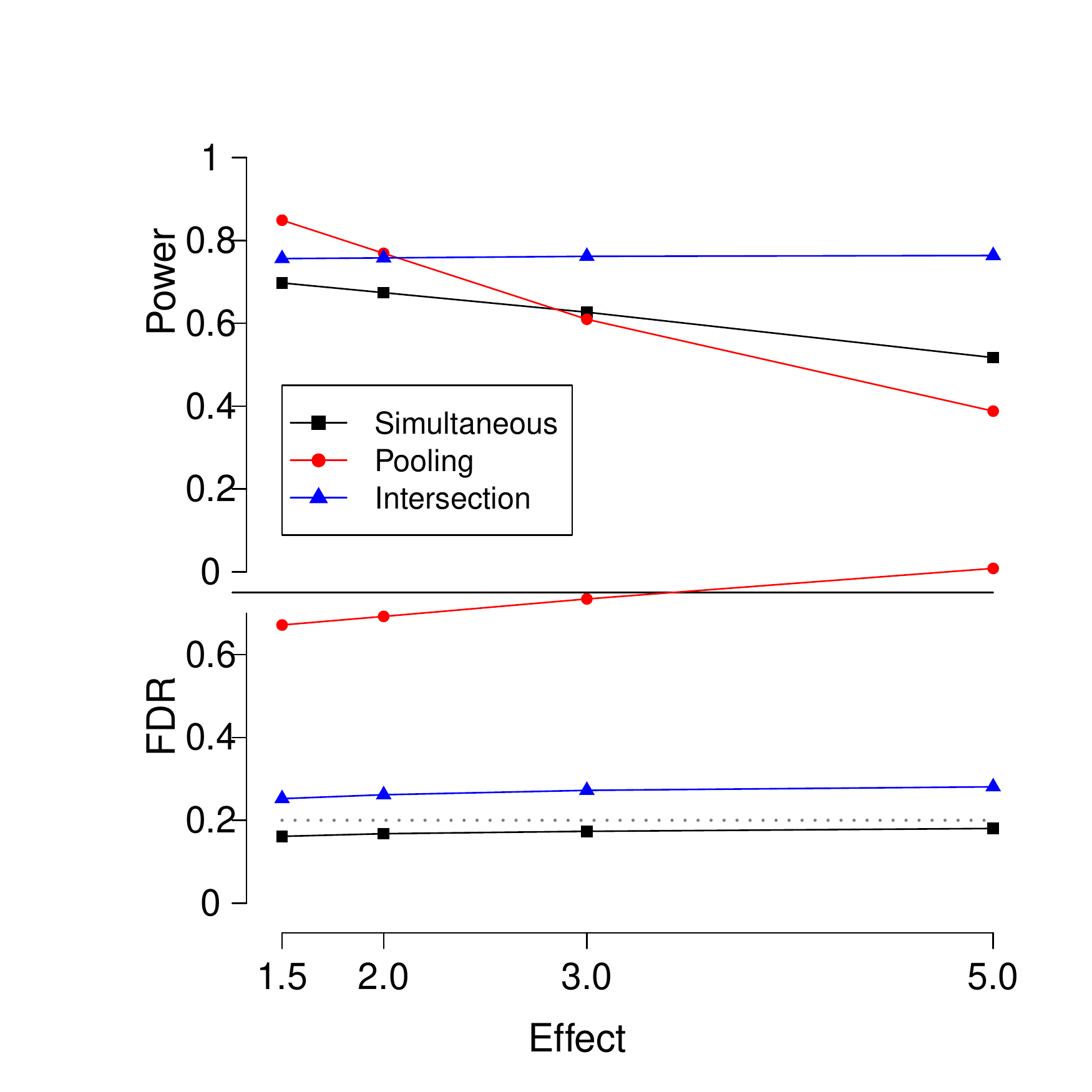}
    \caption{The power and the FDR comparisons among the three different methods when varying the ratio between the strengths of non-mutual signals and mutual signals.}
    \label{fig:power}
\end{figure}

Here we provide some insights into why our proposed filter statistics $\W$ will have power. Without loss of generality, we can assume all $Z_{jk}$'s and $\widetilde{Z}_{jk}$'s are non-negative (such as the absolute value of the standardized coefficient). Then we have for fixed $K$, as $n\rightarrow \infty$, $\widetilde{Z}_{jk}$ will converge to 0 for all $j,k$, while $Z_{jk}$ converge to 0 for $j\in H_0^k$ and a positive number otherwise. Thus we have that if $j\in \mathcal{H}$, then both $Z_j$ and $\widetilde{Z}_j$ will converge to 0 while if $j\in \mathcal{S}$, then $Z_j \approx \prod_{k=1}^K Z_{jk}$ converges to a positive number while $\widetilde{Z}_j$ converges to 0. Therefore, we have that the power will go to 1 as $n\rightarrow \infty$. The power is a monotonically decreasing function of $K$ and a monotonically increasing function of $n$. If $K$ grows with the sample size $n$, then the power might not reach 1. The rate of power growth and requirements (i.e. the growth condition of $K(n)$) to let the power reach 1 is complicated, beyond the scope of this paper, and worth future investigations.

\newpage
\section*{Web Appendix D: Additional details for real data analyses}
\subsection*{D.1: Implementation details for the analysis of the communities and crime data} 
In this section, we provide the details of how the four methods are implemented for our analysis of the community crime data.
\begin{itemize}
\item \textit{Simultaneous Knockoffs}: In Step 1, we use the Fixed-X knockoff construction method to create $\widetilde{\X}^k$, $k=1,2$ for the two samples. In Step 2, we run the Lasso regression of $Y^k$ on $[\X^k\ \widetilde{\X}^k]$ for each $k$ and use the absolute value of the regression coefficients $\widehat{\bmbeta}(\lambda)$ with $\lambda$ selected from the 5-fold cross-validation to calculate $[\Z_k,\widetilde{\Z}_k]$. In Step 3, we use the OSSF $\W=\odot_{k=1}^2(\Z^k-\widetilde{\Z}^k)$. In Step 4, we use the Knockoff filter as defined in equation (7) with $q=0.1$ to obtain $\widehat{S}$.
\item \textit{Pooling Knockoffs}: In Step 1, we use the Fixed-X knockoff construction method to create $\widetilde{\X}$, for the pooled samples. In Step 2, we run the Lasso regression of $Y$ on $[\X\ \widetilde{\X}]$ for the pooled sample and use the absolute value of the regression coefficients $\widehat{\bmbeta}(\lambda)$ with $\lambda$ selected from the 5-fold cross-validation to calculate $[\Z,\widetilde{\Z}]$. In Step 3, we use the antisymmetric function $\W=\Z-\widetilde{\Z}$. In Step 4, we use the Knockoff filter as defined in equation (7) with $q=0.1$ to obtain $\widehat{S}$.
\item \textit{Intersection Knockoffs}: In Step 1, we use the Fixed-X knockoff construction method to create $\widetilde{\X}^k$, $k=1,2$ for the two samples. In Step 2, we run the Lasso regression of $Y^k$ on $[\X^k\ \widetilde{\X}^k]$ for each $k$ and use the absolute value of the regression coefficients $\widehat{\bmbeta}(\lambda)$ with $\lambda$ selected from the 5-fold cross-validation to calculate $[\Z^k,\widetilde{\Z}^k]$. In Step 3, we use the antisymmetric function $\W^k=\Z^k-\widetilde{\Z}^k$ for $k=1,2$. In Step 4, we use the Knockoff filter as defined in equation (7) with $q=0.1$ to obtain $\widehat{S}_k$, $k=1,2$, and compute $\widehat{S}=\cap_{k=1}^2\widehat{S}_k$.
\item \textit{repfdr}: We run linear regressions of outcome $Y^k$ on $\X^k$ for $k=1,2$ and use the value of the $z$-statistics, (i.e., $\widehat{\beta}_j/\widehat{SE}(\widehat{\beta}_j)$) as test statistics. We consider the test allowing signals to have different signs among the two samples by setting \textit{n.association.status} as 2 and we set the number of bins in the discretization of the z-score as 100. The default setting of the natural spline is used and no trimming is performed on the z-score. We use the \textit{replication} option (which is equivalent to testing of union nulls when $K=2$) in the R function \textit{repfdr} from the R package \textit{repfdr}. 
\end{itemize}

\subsection*{D.2: Implementation details for the analysis of the TCGA data} 
In this section, we provided the details of how data is analyzed. The main analysis is based on a complete case analysis removing observations with missing values in any of the variables that passed the pre-screening step. The sensitivity analysis is based on an analysis with missing data imputed by mean. The primary screening is for top $d=n/log(n)=79$ genes and the sensitivity analysis tries using a fixed threshold $p<0.0002$ which leads to 111 genes passing the pre-screening step. Below are the details on how the four methods are implemented for this data.
\begin{itemize}
\item \textit{Simultaneous Knockoffs}: In Step 1, we use the Model-X knockoff second-order construction method to create $\widetilde{\X}^k$, $k=1,2$ for the two samples. In Step 2, we run the $\ell_1$-penalized Cox regression of $Y^k$ on $[\X^k\ \widetilde{\X}^k]$ for each $k$ and use the absolute value of the regression coefficients $\widehat{\bmbeta}(\lambda)$ with $\lambda$ selected from the 5-fold cross-validation to calculate $[\Z^k,\widetilde{\Z}^k]$. In Step 3, we use the OSSF $\W=\odot_{k=1}^2(\Z^k-\widetilde{\Z}^k)$. In Step 4, we use the Knockoff filter as defined in equation (7) with $q=0.1$ to obtain $\widehat{S}$.
\item \textit{Pooling Knockoffs}: In Step 1, we use the Model-X knockoff second-order construction method to create $\widetilde{\X}$, for the pooled samples. In Step 2, we run $\ell_1$-penalized Cox regression of $Y$ on $[\X\ \widetilde{\X}]$ for the pooled sample and use the absolute value of the regression coefficients $\widehat{\bmbeta}(\lambda)$ with $\lambda$ selected from the 5-fold cross-validation to calculate $[\Z,\widetilde{\Z}]$. In Step 3, we use the antisymmetric function $\W=\Z-\widetilde{\Z}$. In Step 4, we use the Knockoff filter as defined in equation (7) with $q=0.1$ to obtain $\widehat{S}$.
\item \textit{Intersection Knockoffs}: In Step 1, we use the Model-X knockoff second-order construction method to create $\widetilde{\X}_k$, $k=1,2$ for the two samples. In Step 2, we run $\ell_1$-penalized Cox regression of $Y^k$ on $[\X^k,\widetilde{\X}^k]$ for each $k$ and use the absolute value of the regression coefficients $\widehat{\bmbeta}(\lambda)$ with $\lambda$ selected from the 5-fold cross-validation to calculate $[\Z^k,\widetilde{\Z}^k]$. In Step 3, we use the antisymmetric function $\W^k=\Z^k-\widetilde{\Z}^k$ for $k=1,2$. In Step 4, we use the Knockoff filter as defined in equation (7) with $q=0.1$ to obtain $\widehat{S}_k$, $k=1,2$, and compute $\widehat{S}=\cap_{k=1}^2\widehat{S}_k$.
\item \textit{repfdr}: We run the Cox regressions of the survival outcome $Y^k$ on $\X^k$ for $k=1,2$ and use the value of the $z$-statistics, (i.e., $\widehat{\beta}_j/\widehat{SE}(\widehat{\beta}_j)$) as test statistics. We consider the test allowing signals to have different signs among the two samples by setting \textit{n.association.status} as 2 and we set the number of bins in the discretization of the z-score as 100. The default setting of the natural spline is used and no trimming is performed on the z-score. We use the \textit{replication} option (which is equivalent to testing of union nulls when $K=2$) in the R function \textit{repfdr} from the R package \textit{repfdr}. 
\end{itemize}

\end{document}